\documentclass[11pt,dvipsnames]{article}

\usepackage{setspace}
\usepackage{jheppub} 
\usepackage[utf8]{inputenc}
\usepackage{epsfig}
\usepackage{bbm}
\usepackage{bbding}
\usepackage{amsfonts}
\usepackage{amssymb}
\usepackage{mathtools}
\usepackage{dsfont}
\usepackage{bm}
\usepackage{graphicx}
\usepackage{longtable}
\usepackage{pdflscape}
\usepackage{subcaption}
\usepackage{psfrag}
\usepackage[capitalise]{cleveref}
\usepackage{tikz}
\usepackage{pgfplots}
\usepackage{pgfplotstable}
\usepackage{subfiles}
\usepackage{longtable}
\usepackage{tabularx}
\usepackage{ltablex}
\usepackage{booktabs}
\usepackage[export]{adjustbox}
\usepackage{listings}
\usepackage{multirow} 
\usepackage{cancel,slashed}
\usepackage{bbm}
\usepackage{graphicx}
\usepackage{latexsym}
\usepackage{transparent}
\usepackage{mathtools}
\usepackage{array}
\usepackage{makecell}
\usepackage{graphics,psfrag}
\usepackage{placeins}
\usepackage{nowidow}
\usepackage{listings}
\usepackage[normalem]{ulem}
\usepackage{environ}
\usepackage{tikz}
\usepackage{enumitem}   
\usepackage{stmaryrd}
\usetikzlibrary{positioning,arrows.meta}
\usetikzlibrary{positioning,trees,decorations.pathmorphing,decorations.markings,decorations.pathreplacing,calc,shapes,shapes.geometric,shapes.symbols,patterns,arrows}

\usepackage{fontawesome5}
\makeatletter
\newcommand{\github}[1]{%
   \href{#1}{{\normalsize \color{black}\faGithub}}%
}
\makeatother

\usetikzlibrary{patterns}
\usetikzlibrary{patterns.meta}

\usetikzlibrary{fadings}
\usetikzlibrary{decorations.shapes}
\usepackage{amsmath}
\usepackage{amssymb}
\usepackage{amsthm}
\usepackage{algorithm}
\usepackage{algpseudocode}
\usepackage{placeins}

\usepackage{xstring}

\usepackage{float}

\setcounter{MaxMatrixCols}{50}

\newcommand{\newshortstack}[1]
{\begingroup\renewcommand{\arraystretch}{1.1}
\ifmmode
\begin{array}{c}#1\end{array}%
\else
\begin{tabular}{c}#1\end{tabular}%
\fi
\endgroup}

\pgfplotsset{
        compat=1.9,
        compat/bar nodes=1.8,
    }

\newcommand{\be}{\begin{equation}}
	\newcommand{\ee}{\end{equation}}
\newcommand{\bea}{\begin{eqnarray}}
	\newcommand{\eea}{\end{eqnarray}}

\renewcommand{\epsilon}{\varepsilon}

\newcommand{\ben}{\begin{enumerate}}
	\newcommand{\een}{\end{enumerate}}
\newcommand{\bei}{\begin{itemize}}
	\newcommand{\eei}{\end{itemize}}
	
	\makeatletter
\tikzset{
    dot diameter/.store in=\dot@diameter,
    dot diameter=3pt,
    dot spacing/.store in=\dot@spacing,
    dot spacing=10pt,
    dots/.style={
        line width=\dot@diameter,
        line cap=round,
        dash pattern=on 0pt off \dot@spacing
    }
}

\usepackage{setspace}
\usepackage{graphicx} 
\usepackage{lmodern}
\usepackage{xspace}
\usepackage{array} 
\usepackage{etoolbox}
\usetikzlibrary{positioning,trees,decorations.pathmorphing,decorations.markings,decorations.pathreplacing,calc,shapes,shapes.geometric,shapes.symbols,patterns,arrows}
	
\tikzset{decorate sep/.style 2 args=
{decorate,decoration={shape backgrounds,shape=circle,shape size=#1,shape sep=#2}}}
\usepackage{tikz}
\usepackage{latexsym}
\usepackage{graphicx}
\usepackage{svg}
\usepackage{tikz-3dplot}
\usetikzlibrary{arrows.meta}
\usetikzlibrary{patterns}

\usepackage{mathrsfs}
\usepackage{nicematrix}

\usepackage[most]{tcolorbox}
\usepackage{xcolor}
\usepackage{etoolbox}

\newcounter{example}[section]
\renewcommand{\theexample}{\thesection.\arabic{example}}

\colorlet{examplebarcolor}{black!60}

\newtcolorbox{examplebox}[1][]{%
  enhanced,
  breakable,
  colback=white,
  colframe=white,
  borderline west={2pt}{0pt}{examplebarcolor},
  sharp corners,
  boxrule=0pt,
  before skip=10pt,
  after skip=10pt,
  left=8pt,
  right=2pt,
  top=4pt,
  bottom=4pt,
  fonttitle=\bfseries,
  coltitle=black,
  attach boxed title to top left={yshift=-2mm},
  boxed title style={empty, size=small, top=2pt, bottom=2pt},
  title=#1,
  parbox=false
}

\newenvironment{example}[1][]{%
\refstepcounter{example}%
\def\exampletitle{Example~\theexample}%
\ifstrempty{#1}%
    {\begin{examplebox}\textbf{\exampletitle}\\} 
    {\begin{examplebox}\textbf{\exampletitle:~#1}\\} %
}{%
  \end{examplebox}%
}

\definecolor{bluetto}{HTML}{0088ff}
\definecolor{snowfl}{HTML}{00ace6}
\definecolor{dgreen}{HTML}{298A08}
\interfootnotelinepenalty=10000

\usepackage{subcaption}

\theoremstyle{definition} 

\newtheorem{cor}{Corollary}

\newtheorem{lem}{Lemma}
\newtheorem{prop}{Proposition}

\newtheorem*{theorem*}{Theorem}
\newtheorem{defn}{Definition}
\DeclareMathOperator{\ConvOp}{conv}
\DeclareMathOperator{\ConeOp}{cone}
\DeclareMathOperator{\SuppOp}{supp}
\DeclareMathOperator{\RelintOp}{relint}
\newcommand{\conv}[1]{\ConvOp\!\left(#1\right)}
\newcommand{\cone}[1]{\ConeOp\!\left(#1\right)}
\newcommand{\supp}[1]{\SuppOp\!\left(#1\right)}
\newcommand{\relint}[1]{\RelintOp\!\left(#1\right)}
\newcommand{\setbuilder}[2]{\left\{ #1 \; \middle| \; #2 \right\}}

\crefname{defn}{Def.}{Defs.}
\crefname{cor}{Cor.}{Cors}
\crefname{prop}{Prop.}{Props.}
\crefname{conj}{Conj.}{Conjs.}
\crefname{lem}{Lem.}{Lems.}

\crefformat{section}{#2§#1#3}
\crefformat{appendix}{#2§#1#3}

\makeatletter
\renewcommand*\env@matrix[1][\arraystretch]{%
  \edef\arraystretch{#1}%
  \hskip -\arraycolsep
  \let\@ifnextchar\new@ifnextchar
  \array{*\c@MaxMatrixCols c}}
\makeatother

\frenchspacing 

\usepackage{hyperref}
\setcounter{tocdepth}{2}

\begin{document}

	\pagestyle{plain}

	\makeatletter
	\@addtoreset{equation}{section}
	\makeatother
	\renewcommand{\theequation}{\thesection.\arabic{equation}}
	\pagestyle{empty}

\vspace{2cm}

\begin{center}
\phantom{a}\\
\vspace{0.8cm}
\scalebox{0.90}[0.90]{%
  \parbox{\textwidth}{\centering\fontsize{24}{30}\selectfont\bfseries
  Calabi--Yau Threefolds \\ from Vex Triangulations
}%
}\\
\end{center}

\vspace{0.4cm}
\begin{center}
\scalebox{0.95}[0.95]{{\fontsize{12}{30}\selectfont  
Nate MacFadden and Elijah Sheridan
}} 
\end{center}

\begin{center}
\vspace{0.25 cm}

\textsl{Department of Physics, Cornell University, Ithaca, NY 14853, USA}\\

  \vspace{1.1cm}
	\normalsize{\bf Abstract} \\[8mm]
\end{center}

\begin{center}
\begin{minipage}[h]{15.0cm}
    We study the birational geometry (i.e., K\"ahler moduli space) of Calabi--Yau (CY) threefold hypersurfaces in toric varieties arising from four-dimensional reflexive polytopes.
    In particular, it has been observed that the birational classes of these geometries are not exhausted by toric hypersurfaces arising from fine, regular, star triangulations (FRSTs). 
    We begin by introducing a classification problem: enumeration of birational classes of toric varieties, which is equivalent to enumeration of certain triangulations/fans. 
    We consider this problem from the complementary perspectives of triangulation theory and toric geometry, reviewing both theories in detail; this culminates in an explanation of how to generate all fine regular triangulations of a vector configuration (i.e., fine regular simplicial fans), which we implement in a Python package \texttt{regfans} \cite{regfans}.
    We then apply this theory to the Kreuzer--Skarke (KS) database, where we encounter both FRSTs and vex triangulations. 
    We study the non-weak-Fano toric varieties arising from vex triangulations, along with their CY hypersurfaces. 
    In particular, we show that all fine regular triangulations of a fixed 4D reflexive polytope give rise to smooth birational CY hypersurfaces, extending Batyrev's result from FRSTs to vex triangulations. 
    We exhaustively enumerate all $24,023,940$ fine regular triangulations in the KS database with $h^{1,1}\leq 7$, of which over $70\%$ are vex triangulations, and provide an upper bound of $10^{979}$ for fine regular triangulations in the entire KS database. 
    We conclude that vex triangulations of four-dimensional reflexive polytopes give rise to a large number of smooth Calabi--Yau threefolds and importantly provide toric descriptions for novel regions in the K\"ahler moduli space.
    \github{https://github.com/natemacfadden/regfans}
\end{minipage}
\end{center}

	\newpage

	\setcounter{page}{1}
	\pagestyle{plain}
	\setcounter{footnote}{0}
\tableofcontents

\newpage

\section{Introduction}
\label{sec:intro}

Calabi--Yau (CY) manifolds are ubiquitous in mathematics and theoretical physics. Generalizations of the pervasive elliptic curve, they demarcate the boundary between Fano (positively curved) and general-type (negatively curved) varieties and exhibit mirror symmetry, an intricate duality connecting complex, symplectic, and enumerative geometry. Moreover, they are the most convenient and well-studied category of spaces employed for compactifications of superstring theory, which are important examples of theories of quantum gravity.

For both mathematics and physics, then, it is important to generate many examples of CY manifolds. Moreover, it is especially helpful when the topological and geometric properties of these manifolds can be computed efficiently, in particular because such properties are often required to construct the low-energy effective theories of the associated string compactifications. 

Toric geometry has proven to be an invaluable tool for constructing large datasets of computationally tractable CY geometries. The enumeration of toric varieties and analysis of their properties reduce to combinatorial calculations --- e.g., constructing the face lattice or counting the integral points of a polytope --- which can be performed efficiently and at scale. However, while toric geometry --- and algebraic geometry more broadly --- provides convenient constructions of algebraic varieties satisfying the Calabi--Yau condition (trivial canonical bundle), these varieties are generically not manifolds in that they are not smooth. String compactifications on smooth CY varieties are substantially better understood than their singular counterparts, so ensuring smoothness is an important challenge.

Several large ensembles of smooth CY manifolds embedded in toric varieties have been constructed \cite{Candelas:1987kf, Scholler:2018apc}. In this paper, we will focus on the toric hypersurface CY threefolds which arise from the Kreuzer--Skarke database. In \cite{Batyrev:1993oya}, Batyrev identified that the toric varieties induced by fine, regular, star triangulations (FRSTs) of four-dimensional reflexive polytopes admit smooth CY threefold hypersurfaces. Subsequently, Kreuzer and Skarke enumerated all 473,800,776 four-dimensional reflexive polytopes \cite{Kreuzer:2000xy}, which give rise to as many as $10^{928}$ FRSTs \cite{Demirtas:2020dbm}\footnote{Though many give rise to diffeomorphic CY hypersurfaces \cite{Demirtas:2020dbm, Gendler:2023ujl, Chandra:2023afu, bigicy}.}, providing an impressive CY playground for the mathematics and physics communities.

Some topologically distinct CY manifolds can be related by birational maps, or isomorphisms defined on dense open subsets. A complete set of CY manifolds closed under such maps form a birational equivalence class, and are organized into a single continuous K\"ahler moduli space. In particular, this moduli space is a cone --- the extended K\"ahler cone --- which itself decomposes as a union of cones --- K\"ahler cones --- each corresponding to a particular CY in the equivalence class. Each individual cone parameterizes the geometry of the associated CY manifold (via the K\"ahler form, which measures volumes of holomorphic cycles); crossing a facet (codimension-one face) separating two chambers enacts a flop between the two CY manifolds, an elementary type of birational map. 

Usefully, the FRSTs of a fixed four-dimensional reflexive polytope give rise to birational CY threefold hypersurfaces; in this way, these FRSTs produce both many \textit{discrete topological representatives} of a birational class of CY threefolds and map out their shared \textit{continuous K\"ahler moduli space}. Indeed, the toric varieties associated to the distinct FRSTs of a reflexive polytope are themselves birational, such that the flops between their CY hypersurfaces descend from birational maps --- flips --- of the ambient varieties. Additionally, the K\"ahler cones of the ambient varieties approximate those of their CY hypersurfaces. In this way, FRSTs map out part of the toric hypersurface CY K\"ahler moduli space.

However, it is known that the FRSTs do not, in general, exhaust the birational equivalence classes of their CY hypersurfaces: the union of CY K\"ahler cones arising from FRSTs is only a subset of the full K\"ahler moduli space. Such non-FRST CYs have been discussed in the literature: in \cite{Gendler:2022ztv, Gendler:2023ujl}, for example, they were called ``non-toric phases.'' Even when non-FRST elements of the birational equivalence class can be constructed, without a toric hypersurface realization they are more conceptually and computationally challenging to enumerate and study. 

Moreover, FRSTs do not even exhaust the birational equivalence classes of the ambient toric varieties. The toric varieties associated to FRSTs were originally distinguished because they satisfied sufficient conditions for yielding \textit{smooth} CY hypersurfaces \cite{Batyrev:1993oya}, but are not general. Non-FRST toric varieties have previously featured in the string compactifications literature. In particular, they have predominantly been studied in the context of non-reflexive polytopes \cite{Berglund:2016yqo, Berglund:2016nvh, Berglund:2022dgb, Berglund:2022zto, Berglund:2024zuz}; their relevance for reflexive polytopes has been identified \cite{Huang:2019pne, Jefferson:2022ssj, vex_notes} and has come up in specific examples \cite{Gross1997DeformationObstructed, Ruan1996TopologicalSigma} (as we will discuss before \cref{ex:intro}), but has not been studied systematically.

Our goal in this paper is to construct new smooth Calabi--Yau threefolds --- especially those with the computability endowed by toric embeddings --- and map out more of the K\"ahler moduli space of toric hypersurface Calabi--Yau threefolds arising from the Kreuzer--Skarke database. Because the birational geometry of the ambient toric variety and its CY hypersurface are so tightly related, and because the traditional restriction to FRSTs leaves gaps in the birational geometry of both, we proceed by first considering the following classification questions. 
\begin{itemize}
    \item What is the birational equivalence class\footnote{In this paper, we adopt a restricted notion of birational maps. In particular, we require them to be isomorphisms in codimension one (as is done at points in \cite{cls}, for example, as well as in \cite{Gendler:2022ztv}). This is discussed further in \cref{sec:toric_bir}.} of a toric variety?
    \item What fine regular simplicial fans can be constructed from a fixed set of rays? 
    \item What fine regular triangulations\footnote{We note that by ``triangulation'' here we do not mean triangulations of polytopes, which is the usual meaning of the word in the context of Batyrev's construction: we will clarify this in \cref{sec:triang}.} can be constructed from a vector configuration?
\end{itemize}
These three questions are actually equivalent: the combinatorial nature of toric geometry reduces the first question to the second, while the third question is a restating in the language of triangulation theory. While separate, toric geometry and triangulation theory study many of the same objects and problems, and we find their distinct but complementary perspectives to be helpful to consider in tandem, both for conceptual understanding and the development of efficient computational methods. The above questions are standard and well-understood in their respective disciplines. We will review triangulation theory and toric geometry in detail, and also clarify how to translate between them. In particular, in \cref{tab:dict}, we provide a dictionary relating some of their respective vocabulary; further discussion of this table and subtleties in translation appears in \cref{sec:translate}.

Upon completing a general discussion of these standard classification questions, we apply this knowledge to the toric geometry of 4D reflexive polytopes. In so doing, we are naturally led to consider fans not arising from FRSTs. Motivated and guided by the earlier research on these objects \cite{Berglund:2016yqo, Berglund:2016nvh, Berglund:2022dgb, Berglund:2022zto, Berglund:2024zuz, Huang:2019pne, Jefferson:2022ssj, vex_notes}, we refer to such fans as vex triangulations\footnote{When originally introduced, the word ``vex'' was styled in uppercase; we adopt a lowercase styling to distinguish the vex triangulations discussed in this paper from the VEX polytopes/multitopes of \cite{Berglund:2016yqo, Berglund:2016nvh, Berglund:2022dgb, Berglund:2022zto, Berglund:2024zuz}. For example, VEX polytopes/multitopes were introduced to generalize toric geometry, while vex triangulations in this paper merely correspond to a particular class of toric varieties. We will comment briefly on the relationship between vex triangulations and VEX polytopes in \cref{sec:frst_vs_vex}. Our use of the word ``vex'' also closely aligns with \cite{Huang:2019pne, Jefferson:2022ssj, vex_notes}, where it is styled in lowercase.} and study the theory and computation of the induced toric varieties and the structure of their CY hypersurfaces, especially their possible singularities.

Our main results are summarized as follows.
\begin{enumerate}
    \item The toric varieties arising from FRSTs of four-dimensional reflexive polytopes generically constitute only a fraction of their birational equivalence class.
    \item Vex triangulations can be efficiently generated and studied, and the topology and geometry of the associated toric varieties --- and their CY hypersurfaces --- are readily computable.
    \item In particular, all fine regular simplicial fans arising from a fixed four-dimensional reflexive polytope (FRSTs and vex triangulations) induce non-weak-Fano Gorenstein toric varieties with smooth, birational CY hypersurfaces (\cref{main}). In particular, we provide evidence that as the number of vectors/rays increases, vex triangulations constitute an increasing fraction of fine regular simplicial fans. This has the following two immediate corollaries.
    \begin{enumerate}
        \item Vex triangulations give rise to many new smooth toric hypersurface Calabi--Yau threefolds topologies.
        \item Vex triangulations map out significant new regions in the K\"ahler moduli space of toric hypersurface CY manifolds: in particular, the \textit{moving cone} of the ambient toric variety must be contained in the extended K\"ahler cone (\cref{eq:movable} and \cref{moving}).
    \end{enumerate}
\end{enumerate}
It is worth stressing that while users of the Kreuzer--Skarke database have traditionally restricted to fans arising from FRSTs, vex triangulations are not significantly different from the perspective of triangulation theory and toric geometry, and their inclusion should not be considered a serious complication or generalization.\footnote{We do note that some of the more technical methods developed for FRST toric hypersurface CYs --- such as computational mirror symmetry algorithms \cite{Demirtas2024} --- may require some adaptation. We comment further on this in the conclusion.}

This paper is organized as follows. For the remainder of the introduction, we present a simple motivating example which makes our discussion more concrete: a CY threefold arising from an FRST of a 4D reflexive polytope whose K\"ahler moduli space contains a distinct topology arising from a non-FRST fan. Following this, we develop the basic theories of triangulations and of toric geometry successively and in detail, in \cref{sec:triang} and \cref{sec:toric}. The familiar reader can skip the majority of these sections. We then review the secondary fan in \cref{sec:sec_fan}, the primary object employed in both theories to organize the triangulation/fan classification problem, and discuss methods for efficiently constructing the secondary fan, yielding the solution to the classification problem. These methods are implemented in an open-source Python package \texttt{regfans} \cite{regfans}. \github{https://github.com/natemacfadden/regfans}

Up to this point, the content is almost entirely a review of known results; we break new ground in \cref{sec:ref_poly} by applying these results on triangulation/fan classification to the case of 4D reflexive polytopes. In particular, we review Batyrev's construction of CY hypersurfaces from FRSTs, develop the theory of toric varieties arising from vex triangulations, and study their anticanonical hypersurfaces (especially their smoothness). This culminates in the smoothness result summarized in \cref{main}. Using \texttt{CYTools} \cite{Demirtas:2022hqf} and \texttt{regfans} \cite{regfans}, we then exhaustively enumerate fans/triangulations for small Hodge numbers ($h^{1,1} \leq 7$), demonstrating the ubiquity of vex triangulations, and present upper bounds at large Hodge numbers, generalizing the methods of \cite{Demirtas:2020dbm}. Finally, we conclude in \cref{sec:conc}. In \cref{sec:translate} we spend some time clarifying subtleties in translation between triangulation theory and toric geometry, and in \cref{sec:gale} we briefly comment on how Gale duality lurks behind the scenes in triangulation theory and toric geometry.

\begin{table}[p]
    \centering
    {
    \renewcommand{\arraystretch}{1.1}
    \begin{tabular}{|>{\centering\arraybackslash}m{3.6cm}|
                >{\centering\arraybackslash}m{3.6cm}|
                >{\centering\arraybackslash}m{5.6cm}|}
        \hline
        \multicolumn{2}{|c|}{Toric Geometry} & Triangulation Theory \\
        Algebraic Geometric & Combinatorial & \\
        \hline\hline
        Birational class of toric varieties & Collection of rays $\Sigma(1)$ & Vector configuration (VC) $\mathbf{A}$ \\
        \hline
        Separated normal toric variety $V_\Sigma$ & Fan $\Sigma$ & VC subdivision $\mathscr{S}$ \\
        \hline
        $\mathbb{Q}$-factorial toric variety & Simplicial fan & VC triangulation $\mathscr{T}$ \\
        \hline
        Compact toric variety & Complete fan & Subdivision of a totally cyclic VC \\
        \hline
        Projective compact toric variety & Normal fan of a polytope & Regular subdivision of a totally cyclic VC \\
        \hline
        \multicolumn{2}{|>{\centering\arraybackslash}m{7.5cm}|}{Torus-invariant $\mathbb{R}$-divisor $D = \sum_\rho a_\rho D_\rho$ ($a \in \mathbb{R}^{|\Sigma(1)|}$)} & Height vector $\omega \in \mathbb{R}^{|\mathbf{A}|}$ \\
        \hline
        \multicolumn{2}{|>{\centering\arraybackslash}m{7.5cm}|}{Torus-invariant divisor $D = \sum_\rho a_\rho D_\rho$ ($a \in \mathbb{Z}^{|\Sigma(1)|}$)} & Height vector $\omega \in \mathbb{Z}^{|\mathbf{A}|}$ \\
        \hline
        \multicolumn{2}{|>{\centering\arraybackslash}m{7.5cm}|}{Effective torus-invariant divisor ($a \in \mathbb{Z}_{\geq0}^{|\Sigma(1)|}$)} & Height vector $\omega \in \mathbb{Z}_{\geq0}^{|\mathbf{A}|}$ \\
        \hline
        \multicolumn{2}{|>{\centering\arraybackslash}m{7.5cm}|}{Torus-invariant divisor with trivial class ($\beta(D) = 0$)} & Linear evaluation $\mathbf{A}^T \psi$ \\
        \hline
        \multicolumn{2}{|>{\centering\arraybackslash}m{7.5cm}|}{Torus-invariant divisor with class in effective cone ($\beta(D) \in \mathrm{Eff}(V_\Sigma)$)} & Valid height vector $\omega$: $\exists \psi$ s.t. $\omega+\mathbf{A}^T\psi\geq0$ \\
        \hline
        \multicolumn{2}{|>{\centering\arraybackslash}m{7.5cm}|}{Flip/flipping contraction} & Flip with signature $(a,b)$ s.t. $a, b > 1$ \\
        \hline
        \multicolumn{2}{|>{\centering\arraybackslash}m{7.5cm}|}{Divisorial contraction} & Deletion flip (signature $(a,1)$) \\
        \hline
        \multicolumn{2}{|>{\centering\arraybackslash}m{7.5cm}|}{Fibering contraction} & Non-flippable circuit \quad  (signature $(a,0)$) \\
        \hline
        \multicolumn{2}{|>{\centering\arraybackslash}m{7.5cm}|}{Pullback of secondary fan by class group grading $\beta$} & Secondary fan \\
        \hline
        \multicolumn{2}{|>{\centering\arraybackslash}m{7.5cm}|}{Secondary fan} & Chamber fan \\
        \hline
        \multicolumn{2}{|>{\centering\arraybackslash}m{7.5cm}|}{Effective cone $\mathrm{Eff}(V_\Sigma)$} & Support of chamber fan \\ 
        \hline
        \multicolumn{2}{|>{\centering\arraybackslash}m{7.5cm}|}{Moving cone $\mathrm{Mov}(V_\Sigma)$} & Union of cones in chamber fan assoc. to fine triangulations  \\
        \hline
        \multicolumn{2}{|>{\centering\arraybackslash}m{7.5cm}|}{Nef/K\"ahler cone $\Gamma(V_\Sigma)$} & Cone in chamber fan assoc. to fine triangulation \\
        \hline
    \end{tabular}
    }
    \caption{Dictionary between triangulation theory (as introduced in \cref{sec:triang}) and toric geometry (as introduced in \cref{sec:toric}). 
    }
    \label{tab:dict}
\end{table}
Now let us turn our attention to the motivating example. Aspects of this example were studied in \cite{Gross1997DeformationObstructed} and subsequently in \cite{Ruan1996TopologicalSigma}, though for largely unrelated reasons. However, smoothness of the CY hypersurface arising from the particular relevant non-FRST fan was shown in Lemma 1.1 of \cite{Gross1997DeformationObstructed}.
\begin{example}
    Let us consider a simple example of a 4D reflexive polytope whose FRSTs do not give rise to a complete birational equivalence class of CY threefolds (i.e., the entire K\"ahler moduli space is not inherited from ambient toric varieties associated to FRSTs). This will motivate us to consider a vex triangulation. For brevity, we will use some standard terminology in this section without definition, but this nomenclature will be clarified in the remainder of the text. 
    
    Consider the four-dimensional reflexive polytope $\Delta^\circ$ (i.e., an element of the Kreuzer--Skarke database) with non-zero points given by the columns of the following matrix.
    \begin{equation}
        \begin{pNiceMatrix}[first-row]
             1 &  2 & 3 & 4 & 5 & 6 \\
            -1 & -1 & 0 & 0 & 0 & 1 \\
            -1 &  1 & 0 & 0 & 1 & 0 \\
            -1 &  0 & 0 & 1 & 0 & 0 \\
             0 & -1 & 1 & 0 & 0 & 0
        \end{pNiceMatrix}.
        \label{eq:example_polytope}
    \end{equation}
    This polytope $\Delta^\circ$ has a unique FRST. By inducing a fan from this triangulation in the traditional way --- taking the cone over each simplex --- the (indices of the) points generating each of its maximal cones are the columns of the following matrix.
    \begin{equation}
        \begin{pmatrix}
            1 & 1 & 1 & 1 & 1 & 1 & 2 & 2 & 3 \\
            2 & 2 & 2 & 2 & 3 & 3 & 3 & 4 & 4 \\
            3 & 3 & 4 & 5 & 4 & 5 & 4 & 5 & 5 \\
            4 & 5 & 6 & 6 & 6 & 6 & 5 & 6 & 6
        \end{pmatrix}.
    \end{equation}
    Let $V$ denote the induced toric variety and $X$ its anticanonical Calabi--Yau hypersurface. We would like to construct the extended K\"ahler cone of $X$, which includes a K\"ahler cone for each element in the birational equivalence class of $X$. Let us choose the following representation $Q$ for the GLSM charge matrix / class group grading. In particular, this sets a basis for divisor classes.
    \begin{equation}
        Q = \begin{pmatrix}
            1 & 1 & 1 & 1 & 0 & 2 \\
            1 & 0 & 0 & 1 & 1 & 1
        \end{pmatrix}.
    \end{equation}
    In this basis, the K\"ahler cone of the ambient variety $V$ --- a subcone of the K\"ahler cone of $X$ --- is $K = \cone{(1,1), (2,1)}$, where we let $\mathrm{cone}$ denote the conical/positive hull. In this particular example, $K$ is exactly the K\"ahler cone of $X$ (this can be verified, for example, by computing Gopakumar-Vafa (GV) invariants \cite{Gendler:2022ztv,Demirtas2024}, as enabled by \texttt{CYTools} \cite{Demirtas:2022hqf} and the \texttt{cygv} package \cite{cygv}).
    
    We now turn our attention to the extended K\"ahler cone $\mathcal{K}$ of $X$: this can be constructed, for example, using the GV-based algorithm described in \cite{Gendler:2022ztv}. Because this polytope admitted a unique FRST, any other CY manifold belonging to the birational equivalence class of $X$ would not traditionally be associated to a hypersurface in a toric variety. That is, in the language of \cite{Gendler:2022ztv, Gendler:2023ujl}, they would be considered ``non-toric phases.'' For this particular polytope, we do in fact have that $X$ admits a flop to a second CY $\Tilde{X}$ with K\"ahler cone $\Tilde{K} = \cone{(2,1), (1,0)}$. Neither geometry admits further flops, so the extended K\"ahler cone is $\mathcal{K} = K \cup \Tilde{K} = \cone{(1,1), (1,0)}$.
    
    This polytope admits another regular simplicial fan which doesn't correspond to an FRST (e.g., the simplices induced by the fan do not cover $\conv{\Delta^\circ}$). That is, this polytope admits a vex triangulation, which gives rise to a toric variety $V'$. The indices of the points generating each of its maximal cones are the columns of the following matrix.
    \begin{equation}
        \begin{pmatrix}
            1 & 1 & 1 & 1 & 1 & 1 & 2 & 3 \\
            2 & 2 & 2 & 3 & 3 & 3 & 4 & 4 \\
            4 & 4 & 5 & 4 & 4 & 5 & 5 & 5 \\
            5 & 6 & 6 & 5 & 6 & 6 & 6 & 6
        \end{pmatrix}
        .
    \end{equation}
    We can compute the K\"ahler cone $K'$ of the ambient variety again: in particular, we find $K' = \cone{(2,1), (1,0)}$, which is exactly what we found for the ``non-toric phase'' moments ago. What's more, comparing the topological data of $X'$ --- the triple intersection numbers and second Chern class, which uniquely determine $\Tilde{X}$ up to diffeomorphism by Wall's theorem \cite{wall_classification_1966} --- to that of the ``non-toric phase'' $\Tilde{X}$ reveals that the two CY manifolds are the same: $\Tilde{X} = X'$. We comment in passing that this is the variety denoted $X_2$ in \cite{Gross1997DeformationObstructed, Ruan1996TopologicalSigma}.
    
    One is led to conclude that $\Tilde{X}$ isn't so ``non-toric'' after all: it merely corresponded to a non-FRST fan --- a vex triangulation. In summary, by considering all fine regular simplicial fans arising from the polytope $\Delta^\circ$, we were able to identify more topological classes and a greater portion of the extended K\"ahler cone than if we had restricted to FRSTs. Indeed, in this example we torically realized the full birational equivalence class and all of the extended K\"ahler cone. This motivates the study of fans beyond FRSTs we undertake in this work.
    \label{ex:intro}
\end{example}

\section{Review of Triangulations}
\label{sec:triang}

We now review the theory of triangulations, in particular those of convex polytopes/cones, following \cite{De_Loera2010-ss}. This section should be viewed as review of relevant material in \cite{De_Loera2010-ss}, except for a few comments particular to the triangulations relevant for our applications in toric and Calabi--Yau geometry. Such an extended discussion is provided since it is anticipated that ``vector configurations'' are new for much of the audience. For those familiar with FRSTs, this section should feel like a generalization of familiar concepts.

\subsection{Configurations and their Subdivisions}

Convex polytopes/cones can be formally described using the language of configurations. Define a \textit{configuration} $\mathbf{A}\subset\mathbb{R}^m$ as a finite, labeled collection of points $x\in\mathbb{R}^m$. For ease of discussion, order the points arbitrarily and let the $i$th point in said configuration have label $j_i$. One collects these labels into the label set $J = \{j_1, \dots, j_{|\mathbf{A}|}\}$. Typically, take $j_i = i$ so that the label set is $J = \{1, \dots, |\mathbf{A}|\}$. 
We make one exception: the origin (if included in $\mathbf{A}$) will always have label $0$. 
Treat a configuration $\mathbf{A}$ both as a matrix
\begin{equation}
    \mathbf{A}= \begin{bmatrix}
        \mathbf{A}_{j_1} & \cdots & \mathbf{A}_{j_{|\mathbf{A}|}}
    \end{bmatrix}
\end{equation}
and as a set
\begin{equation}
    \mathbf{A} = \{ \mathbf{A}_{j_1}, \dots, \mathbf{A}_{j_{|\mathbf{A}|}} \}.
\end{equation}
By this, we mean allow operations both like matrix-vector multiplication $\mathbf{A}\lambda = v\in\mathbb{R}^m$ and set intersection $\mathbf{A} \cap \mathbf{A}'$, for example.

Configurations describe geometric objects. There are two classes of configurations, depending on the type of object they describe. A \textit{point configuration} describes a convex polytope given by its convex hull,
\begin{equation}
    \conv{\mathbf{A}_\mathrm{PC}} = \setbuilder{\mathbf{\mathbf{A}_\mathrm{PC}\lambda}}{\lambda\geq0, \sum_{j\in J} \lambda_j = 1},
\end{equation}
while a \textit{vector configuration} describes a convex cone given by its positive hull,
\begin{equation}
    \cone{\mathbf{A}_\mathrm{VC}} = \setbuilder{\mathbf{\mathbf{A}_\mathrm{VC}\lambda}}{\lambda\geq0}.
\end{equation}
Here and whenever ambiguous, we subscript $\mathbf{A}_\mathrm{PC}$ or $\mathbf{A}_\mathrm{VC}$ to stress that $\mathbf{A}$ should be viewed as a point or vector configuration, respectively.

While convex polytopes/cones are interesting in their own right \cite{Ziegler:1995}, we introduce them as objects-to-be-triangulated. Let $\supp{\mathbf{A}}$ indicate the support of the configuration $\mathbf{A}$, either $\conv{\mathbf{A}_\mathrm{PC}}$ or $\cone{\mathbf{A}_\mathrm{VC}}$. 
\begin{defn}
\label{def:subdivision}
    A \textit{subdivision} of $\mathbf{A}$ is a collection $\mathscr{S}(\mathbf{A}) = \{F_1, F_2,\dots\}$ of subsets $F_i\subseteq\mathbf{A}$ called \textit{cells} such that
    \begin{enumerate}
        \item $\relint{\supp{F_i}}\cap \relint{\supp{F_j}}= \varnothing$ for all $F_i, F_j\in\mathscr{S}$ distinct,
        \item if $G$ is a face of $F\in\mathscr{S}$, then $G\in\mathscr{S}$, and
        \item $\bigcup_{F\in\mathscr{S}} \supp{F} = \supp{\mathbf{A}}$
    \end{enumerate}
    where $\relint{P}$ indicates the relative interior of the region $P$. The first two conditions ensure that $\mathscr{S}$ defines a polyhedral complex, while the last condition ensures that the support of this complex equals $\supp{\mathbf{A}}$.
\end{defn}
See any of \cref{fig:subdivision_triangulation,fig:diagonal_flip,fig:star_link,fig:starpc=vc} for examples. 
Treat subdivisions $\mathscr{S}$ as sets of cells. E.g., $\mathscr{S}'\subseteq\mathscr{S}$ indicates every $\sigma\in\mathscr{S}'$ is also in $\mathscr{S}$. Interchangeably treat elements $F\in\mathscr{S}$ both as geometric regions $\supp{F}$ and as the label subsets $F\subseteq J$ defining said regions. A \textit{triangulation} is a subdivision into simplicial cells (i.e., simplices or simplicial cones) in which case we denote it as $\mathscr{T}$. We will typically just specify the maximal cells in the subdivision; implicitly, one must add all of their faces, too.

\def\dx{1.7} 
\def\dy{1.5} 

\begin{figure}
    \centering
    \begin{subfigure}[b]{0.48\textwidth}
        \centering
        \begin{tikzpicture}[scale=1.8, every node/.style={circle, fill=black, inner sep=1.5pt}]

            \begin{scope}[shift={(0,0)}, scale=0.4]
                \coordinate (v1) at (1,1);
                \coordinate (v2) at (1,3);
                \coordinate (v3) at (2,3);
                \coordinate (v4) at (3,1);
                \coordinate (v5) at (3,2);
                \coordinate (v6) at (4,3);
                
                \node[
                label=below:{$1$}
                ] at (v1) {};
                \node[
                label=above:{$2$}
                ] at (v2) {};
                \node[
                label=above:{$3$}
                ] at (v3) {};
                \node[
                label=below:{$4$}
                ] at (v4) {};
                \node[
                label=above:{$5$}
                ] at (v5) {};
                \node[
                label=above:{$6$}
                ] at (v6) {};
                
                \draw (v1) -- (v2) -- (v3) -- cycle;
                \draw (v1) -- (v3) -- (v5) -- (v4) -- cycle;
                \draw (v3) -- (v5) -- (v6) -- cycle;
                \draw (v4) -- (v5) -- (v6) -- cycle;
                
                \node[draw=none, fill=none, circle=none] at (2.1,1.9) {F};
            \end{scope}
                
            \begin{scope}[shift={(\dx,0)}, scale=0.4]
                \coordinate (v1) at (1,1);
                \coordinate (v2) at (1,3);
                \coordinate (v3) at (2,3);
                \coordinate (v4) at (3,1);
                \coordinate (v5) at (3,2);
                \coordinate (v6) at (4,3);
                
                \node[] at (v1) {};
                \node[] at (v2) {};
                \node[] at (v3) {};
                \node[] at (v4) {};
                \node[] at (v5) {};
                \node[] at (v6) {};
                
                \draw (v1) -- (v2) -- (v3) -- cycle;
                \draw (v1) -- (v3) -- (v4) -- cycle;
                \draw (v3) -- (v4) -- (v5) -- cycle;
                \draw (v3) -- (v5) -- (v6) -- cycle;
                \draw (v4) -- (v5) -- (v6) -- cycle;
            \end{scope}
                
                \begin{scope}[shift={(0,-\dy)}, scale=0.4]
                \coordinate (v1) at (1,1);
                \coordinate (v2) at (1,3);
                \coordinate (v3) at (2,3);
                \coordinate (v4) at (3,1);
                \coordinate (v5) at (3,2);
                \coordinate (v6) at (4,3);
                
                \node[] at (v1) {};
                \node[] at (v2) {};
                \node[] at (v3) {};
                \node[] at (v4) {};
                \node[draw, fill=none] at (v5) {};
                \node[] at (v6) {};
                
                \draw (v1) -- (v2) -- (v3) -- cycle;
                \draw (v1) -- (v3) -- (v6) -- (v4) -- cycle;
                
                \node[draw=none, fill=none, circle=none] at (2.1,1.9) {F};
            \end{scope}
                
            \begin{scope}[shift={(\dx,-\dy)}, scale=0.4]
                \coordinate (v1) at (1,1);
                \coordinate (v2) at (1,3);
                \coordinate (v3) at (2,3);
                \coordinate (v4) at (3,1);
                \coordinate (v5) at (3,2);
                \coordinate (v6) at (4,3);
                
                \node[] at (v1) {};
                \node[] at (v2) {};
                \node[] at (v3) {};
                \node[] at (v4) {};
                \node[draw, fill=none] at (v5) {};
                \node[] at (v6) {};
                
                \draw (v1) -- (v2) -- (v3) -- cycle;
                \draw (v1) -- (v3) -- (v4) -- cycle;
                \draw (v3) -- (v4) -- (v6) -- cycle;
            \end{scope}
        \end{tikzpicture}
        \subcaption[Subdivisions of $\mathbf{A}$]{%
  Subdivisions of
  {\par\centering
  \vspace{0.5em}
  $\mathbf{A} =
  \begin{bNiceMatrix}
    1 & 1 & 2 & 3 & 3 & 4\\
    1 & 3 & 3 & 1 & 2 & 3
  \end{bNiceMatrix}.$
  \vspace{0.5em}
  \par}
  Left: non-triangulation subdivisions, characterized by a non-simplicial cell $F$;
  right: triangulations ``refining'' these subdivisions.
  Top: ``fine'' subdivisions; bottom: non-``fine'' subdivisions, characterized by missing point $5$.%
}
        \label{fig:subdivision_triangulation_a}
    \end{subfigure}
    \hfill
    \begin{subfigure}[b]{0.48\textwidth}
        \centering
        \begin{tikzpicture}[scale=1.8, every node/.style={circle, fill=black, inner sep=1.5pt}]

            \newcommand{\Ray}[8][]{%
                \pgfmathsetmacro{\Px}{#2/(#4)}
                \pgfmathsetmacro{\Py}{#3/(#4)}
                \coordinate (P#5) at (\Px,\Py,1);
                \draw[ray,->,#1] (#7) -- (P#5)
                \ifx&#8&\else
                    node[fill=none,#6,lab] {#8}
                \fi;
            }
            
            \tikzset{
                x={(1cm,0.0cm)},
                y={(0.0cm,1cm)},
                z={(0.0cm,0.5cm)}
            }

            \begin{scope}[shift={(0,0)}, scale=0.4]
                \coordinate (O) at (0,0,0);
                
                \tikzset{ray/.style={->, thick}, lab/.style={font=\small}}
                \Ray[thick]{1}{1}{1}{r1}{below}{O}{1}
                \Ray[thick]{1}{2}{1}{r2}{above left}{O}{2}
                \Ray[thick]{2}{1}{1}{r3}{below}{O}{3}
                \Ray[very thin, dashed, -]{2}{2}{1}{r4}{above left}{O}{4}
                \Ray[very thin, dashed, -]{2}{3}{1}{r5}{above left}{O}{5}
                \Ray[thick]{3}{1}{1}{r6}{right}{O}{6}
        
                \draw[thick] (Pr1) -- (Pr2) -- (Pr4) -- (Pr3) -- cycle;
                \draw[thick] (Pr2) -- (Pr4) -- (Pr5) -- cycle;
                \draw[thick] (Pr3) -- (Pr4) -- (Pr6) -- cycle;
                \draw[thick] (Pr4) -- (Pr5) -- (Pr6) -- cycle;
                
                \node[draw=none, fill=none, circle=none] at (1.6,2) {\textbf{\textcolor{white}{F}}};
                \node[draw=none, fill=none, circle=none] at (1.6,2) {F};
            \end{scope}
                
            \begin{scope}[shift={(\dx,0)}, scale=0.4]
                \coordinate (O) at (0,0,0);
                
                \tikzset{ray/.style={->, thick}, lab/.style={font=\small}}
                \Ray[thick]{1}{1}{1}{r1}{below}{O}{}
                \Ray[thick]{1}{2}{1}{r2}{above left}{O}{}
                \Ray[thick]{2}{1}{1}{r3}{below}{O}{}
                \Ray[very thin, dashed]{2}{2}{1}{r4}{above right}{O}{}
                \Ray[very thin, dashed]{2}{3}{1}{r5}{above left}{O}{}
                \Ray[thick]{3}{1}{1}{r6}{right}{O}{}
        
                \draw[thick] (Pr1) -- (Pr2) -- (Pr4) -- cycle;
                \draw[thick] (Pr1) -- (Pr3) -- (Pr4) -- cycle;
                \draw[thick] (Pr2) -- (Pr4) -- (Pr5) -- cycle;
                \draw[thick] (Pr3) -- (Pr4) -- (Pr6) -- cycle;
                \draw[thick] (Pr4) -- (Pr5) -- (Pr6) -- cycle;
            \end{scope}
                
            \begin{scope}[shift={(0,-\dy)}, scale=0.4]
                \coordinate (O) at (0,0,0);
                
                \tikzset{ray/.style={->, thick}, lab/.style={font=\small}}
                \Ray[thick]{1}{1}{1}{r1}{below}{O}{}
                \Ray[thick]{1}{2}{1}{r2}{above left}{O}{}
                \Ray[very thin, dashed]{2}{2}{1}{r4}{above left}{O}{}
                \Ray[very thin, dashed]{2}{3}{1}{r5}{above left}{O}{}
                \Ray[thick]{3}{1}{1}{r6}{right}{O}{}
        
                \draw[thick] (Pr1) -- (Pr2) -- (Pr4) -- (Pr6) -- cycle;
                \draw[thick] (Pr2) -- (Pr4) -- (Pr5) -- cycle;
                \draw[thick] (Pr4) -- (Pr5) -- (Pr6) -- cycle;
                
                \node[draw=none, fill=none, circle=none,
                text=black, double=white, double distance=1pt] at (1.6,2) {F};
            \end{scope}
                
            \begin{scope}[shift={(\dx,-\dy)}, scale=0.4]
                \coordinate (O) at (0,0,0);
                
                \tikzset{ray/.style={->, thick}, lab/.style={font=\small}}
                \Ray[thick]{1}{1}{1}{r1}{below}{O}{}
                \Ray[thick]{1}{2}{1}{r2}{above left}{O}{}
                \Ray[very thin, dashed]{2}{2}{1}{r4}{above right}{O}{}
                \Ray[very thin, dashed]{2}{3}{1}{r5}{above left}{O}{}
                \Ray[thick]{3}{1}{1}{r6}{right}{O}{}
        
                \draw[thick] (Pr1) -- (Pr2) -- (Pr4) -- cycle;
                \draw[thick] (Pr1) -- (Pr4) -- (Pr6) -- cycle;
                \draw[thick] (Pr2) -- (Pr4) -- (Pr5) -- cycle;
                \draw[thick] (Pr4) -- (Pr5) -- (Pr6) -- cycle;
            \end{scope}
        \end{tikzpicture}
        \caption[Acyclic vector configuration]{Analogous to \cref{fig:subdivision_triangulation_a} but for an ``acyclic'' vector configuration 
        {\par\centering
          \vspace{0.5em}
          $\mathbf{A} =
          \begin{bNiceMatrix}
                1 & 1 & 2 & 2 & 2 & 3\\
                1 & 2 & 1 & 2 & 3 & 1\\
                1 & 1 & 1 & 1 & 1 & 1\\
            \end{bNiceMatrix}.$
          \vspace{0.5em}
          \par} That is, plotted are polyhedral complexes subdividing $\ConeOp(\mathbf{A})$. Since $\mathbf{A}$ is acyclic (i.e., $\ConeOp(\mathbf{A})$ is pointed), these subdivisions can be seen as subdivisions of a cross section of $\ConeOp(\mathbf{A})$.}
        \label{fig:subdivision_triangulation_b}
    \end{subfigure}

    \caption{Subdivisions and triangulations of a point configuration and a vector configuration.}
    \label{fig:subdivision_triangulation}
\end{figure}

\begin{example}
    See \cref{fig:subdivision_triangulation_a} for examples of subdivisions of the point configuration
    \begin{equation}
        \mathbf{A} = \begin{bNiceMatrix}[first-row]
        1 & 2 & 3 & 4 & 5 & 6\\
        1 & 1 & 2 & 3 & 3 & 4\\
        1 & 3 & 3 & 1 & 2 & 3
    \end{bNiceMatrix}.
    \end{equation}
    The subdivisions on the left of \cref{fig:subdivision_triangulation_a} are not triangulations, characterized by containing a non-simplicial cell $F$. One can write, e.g., the top-left subdivision as
    \begin{equation}
        \mathscr{S} = \begin{bNiceMatrix}[first-row]
              & F &   &  \\
            1 & 1 & 3 & 4\\
            2 & 3 & 5 & 5\\
            3 & 4 & 6 & 6\\
            - & 5 & - & -
        \end{bNiceMatrix}
    \end{equation}
    where we have highlighted the column corresponding to the non-simplicial cell $F$. This cell, e.g., corresponds to a geometric region $F = \conv{\{\mathbf{A}_1, \mathbf{A}_3, \mathbf{A}_4, \mathbf{A}_5\}}$.
    
    The bottom row consists of ``non-fine'' subdivisions (to be discussed immediately), characterized by their absence of $\mathbf{A}_5$.
\end{example}

We stress a point here. As defined, a subdivision is a collection of subsets of $\mathbf{A}$. These subsets define geometric regions which decompose $\SuppOp(\mathbf{A})$. There can be multiple distinct subdivisions, however, which define the same decomposition into geometric regions, such as
\begin{align}
    \mathscr{T} = \begin{bmatrix}
        1 & 1 & 3 \\
        2 & 3 & 4 \\
        3 & 4 & 6
    \end{bmatrix}
    & \quad & \text{and} & \quad &
    \mathscr{S} = \begin{bmatrix}
        1 & 1 & 3 \\
        2 & 3 & 4 \\
        3 & 4 & 5 \\
        - & - & 6
    \end{bmatrix}
\end{align}
of $\mathbf{A}$ from \cref{fig:subdivision_triangulation_a}. The distinction is the inclusion of point $5$ in the relative interior of the simplex $\{3,4,6\}$. As defined, $\mathscr{T}\neq\mathscr{S}$ despite these two subdivisions defining the same decomposition of $\ConvOp(\mathbf{A})$. This distinction will prove crucial when studying the ``regularity'' of triangulations --- using language from \cref{sec:regular}, $\mathscr{T}$ will have a different secondary cone than $\mathscr{S}$. 
In connecting to toric geometry, however, we will often need to discuss the geometric decompositions of $\ConeOp(\mathbf{A})$, which we will subsequently call ``fans''. 

\begin{defn}
\label{def:fan}
    A (polyhedral) \textit{fan} is a collection $K = \{C_1,C_2,\dots\}$ of convex cones such that
    \begin{enumerate}
        \item $\RelintOp(C_i) \cap \RelintOp(C_j) = \varnothing$ for all $C_i, C_j\in K$ distinct and
        \item if $G$ is a face of $C\in K$, then $G\in K$.
    \end{enumerate}
\end{defn}

That is, \cref{def:subdivision} describes an abstract complex in contrast to \cref{def:fan} which describes a geometric complex. This distinction is often not made and will only matter for us in \cref{sec:frst_cy} and \cref{sec:translate}, so we will generally use the term ``subdivision'' of $\mathbf{A}_\mathrm{VC}$ and ``fan'' interchangeably unless otherwise specified.

We now define a few properties of triangulations (which apply to subdivisions, too). Consider a triangulation $\mathscr{T}(\mathbf{A})$ of a configuration $\mathbf{A}$ with label set $J$. As alluded to in \cref{fig:subdivision_triangulation}, one calls $\mathscr{T}(\mathbf{A})$ \textit{fine} if $\mathscr{T}(\mathbf{A})$ ``uses'' all elements $\mathbf{A}_j$. More formally,
\begin{equation}
    \mathscr{T}(\mathbf{A}) \text{ is fine} \iff J = \bigcup_{\sigma\in\mathscr{T}}\sigma
\end{equation}
Note, $\mathscr{T}(\mathbf{A})$ must still be a valid subdivision even if it is not fine, so it must subdivide $\supp{\mathbf{A}}$. Thus every extremal element $v$ of $\mathbf{A}$ (i.e., $v$ such that $\SuppOp(\mathbf{A}\setminus v)\neq \SuppOp(\mathbf{A})$) must appear in some simplex. Also note that fineness depends on the associated configuration. E.g., a triangulation $\mathscr{T}(\mathbf{A})$ can always be viewed as fine in the subconfiguration $\mathbf{A}'\subseteq\mathbf{A}$ defined as $\mathbf{A}' = \setbuilder{\mathbf{A}_j}{j\in\bigcup_{\sigma\in\mathscr{T}}\sigma}$.

Now, consider a point configuration (not a vector configuration) $\mathbf{A}_\mathrm{PC}$ containing the origin $\mathbf{0}\in\mathbf{A}_\mathrm{PC}$ (as always in our convention, with label $0$). Call a triangulation $\mathscr{T}(\mathbf{A}_\mathrm{PC})$ \textit{star} if the origin is a vertex of every simplex,
\begin{equation}
\label{eq:star_triang}
    \mathscr{T}(\mathbf{A}_\mathrm{PC}) \text{ is star} \iff 0 \in  \bigcap_{\text{maximal }\sigma\in\mathscr{T}}\sigma.
\end{equation}
A star triangulation $\mathscr{T}$ may be interpreted as/extended to a simplicial fan by replacing each simplex $\sigma$ with a simplicial $\cone{\sigma} = \setbuilder{sx}{x\in\sigma,s\geq0}$. This is the utility of the star property. One typically does not speak of ``star'' triangulations of vector configurations since, by definition, they already are simplicial fans and hence every simplex already contains $\mathbf{0}$.\footnote{There is some hairiness in what is meant here. What is explicitly meant is that $\sigma\in\mathscr{T}$ implies that $\mathbf{0}\in\SuppOp(\sigma)$. This does not mean that $0\in\sigma$ (i.e., that one treats $\mathbf{0}$ as a generator of $\sigma$). In fact, we say that $\sigma$ is simplicial if and only if $|\sigma|=\dim(\sigma)$ and hence, if $0\in\sigma$ (this means as a generator), then $\sigma$ is never simplicial. These are some of the issues with adding the origin $\mathbf{0}$ to a vector configuration.}

\subsection{Regularity and the Secondary Cone}
\label{sec:regular}

The final property of interest is that of ``regularity''. We define regularity, introduce the ``secondary cone'' of heights, and then demonstrate a construction (using ``circuits'') of this cone.

There are a multitude of ways to define regularity; the definition relevant for our purposes is that $\mathscr{T}$ is regular if it can be obtained via the following lifting procedure. Consider a configuration $\mathbf{A}$ and a vector $\omega\in\mathbb{R}^{|\mathbf{A}|}$. Define the following new configuration
\begin{equation}
    \mathbf{A}^\omega = \begin{bmatrix}
        \mathbf{A} \\
        \omega
    \end{bmatrix}.
\end{equation}
Call this the ``lifted configuration''. This is equivalent to embedding $\mathbf{A}$ in the subspace $\mathbb{R}^m$ of $\mathbb{R}^{m+1}$ given by the first $m$ coordinates and then ``lifting'' the point $j$ as $\mathbf{A}_j \to (\mathbf{A}_j, \omega_j)$. Consider the ``lower'' facets of $\supp{\mathbf{A}^\omega}$ --- those for which their defining (inwards-facing) normal $n$ has a positive final component in $\mathbb{R}^{m+1}$, $n_{m+1}>0$. These facets (along with their faces) define a polyhedral complex which, if projected down to the $\mathbb{R}^m$ subspace, defines a decomposition of $\supp{\mathbf{A}}$ --- a subdivision. Denote this subdivision as $\mathscr{S}(\mathbf{A},\omega)$ or, if it is a triangulation, $\mathscr{T}(\mathbf{A},\omega)$. See \cref{fig:lifting}. A subdivision/triangulation is said to be \textit{regular} if and only if such a ``height vector'' $\omega$ exists,
\begin{equation}
    \mathscr{S}(\mathbf{A}) \text{ is regular} \iff \exists\omega \text{ such that }\mathscr{S}=\mathscr{S}(\mathbf{A},\omega).
\end{equation}
For point configurations, this lifting construction works for any height vector $\omega\in\mathbb{R}^{|\mathbf{A}_\mathrm{PC}|}$ --- no matter what one chooses for $\omega$, there are lower faces of $\conv{\mathbf{A}_\mathrm{PC}^\omega}$ which define a subdivision $\mathscr{S}(\mathbf{A}_\mathrm{PC},\omega)$ of $\mathbf{A}_\mathrm{PC}$. More care is needed for vector configurations: some heights $\omega\in\mathbb{R}^{|\mathbf{A}_\mathrm{VC}|}$ do not define regular subdivisions\footnote{An aside, assuming the language of the rest of this section: a flip of a circuit can be understood as $\mathscr{T}(\mathbf{A},\omega)\to\mathscr{T}(\mathbf{A},-\omega)$. This has the interpretation of projecting down either the lower or the upper facets of $\mathbf{A}^\omega$. In \cref{ex:bad_heights}, we demonstrate that some heights $\omega$ define regular triangulations of circuits while $-\omega$ do not. I.e., the lower facets define a subdivision but the upper facets do not. This is equivalent to the circuit not being flippable. This is a geometric way to see why vector configurations can have non-flippable circuits while point configurations cannot.} of $\mathbf{A}_\mathrm{VC}$. See \cref{ex:bad_heights}. In general, one can show (\cite{De_Loera2010-ss}, Thm. 4.1.39) that height vectors $\omega$ generate regular subdivisions/triangulations of a vector configuration if and only if $\exists\psi\in\mathbb{R}^m$ 
such that $\omega + \mathbf{A}_\mathrm{VC}^T\psi \geq 0$, in which case the vector $\omega + \mathbf{A}_\mathrm{VC}^T\psi$ generates the same subdivision as $\omega$. One calls $\mathbf{A}_\mathrm{VC}^T\psi \geq 0$ a ``linear evaluation'' of $\mathbf{A}_\mathrm{VC}$ so one can restate this condition as: $\omega$ has to be non-negative, modulo a linear evaluation of $\mathbf{A}_\mathrm{VC}$. This implies both that non-negative heights $\omega\in\mathbb{R}_{\geq0}^{|\mathbf{A}_\mathrm{VC}|}$ always generate regular subdivisions and that every regular triangulation of a vector configuration $\mathscr{T}(\mathbf{A})$ can be generated by non-negative heights. 

\begin{figure}[t]
    \centering
    \resizebox{0.6\textwidth}{!}{
    \begin{tikzpicture}[scale=3.5]
        \draw[black, semithick] (0, 0) -- (1, 0) -- (0+0.6, 0.5) -- cycle;
        \draw[black, semithick] (1+0.6, 0.5) -- (1, 0) -- (0+0.6, 0.5) -- cycle;
    
        \fill[blue,opacity=0.1] (0, 0+1.1) -- (1, 0+0.2) -- (0+0.6, 0.5+0.3) -- cycle;
        \fill[blue,opacity=0.1] (1+0.6, 0.5+0.9) -- (1, 0+0.2) -- (0+0.6, 0.5+0.3) -- cycle;
        \draw[blue] (0, 0+1.1) -- (1, 0+0.2) -- (0+0.6, 0.5+0.3) -- cycle;
        \draw[blue] (1+0.6, 0.5+0.9) -- (1, 0+0.2) -- (0+0.6, 0.5+0.3) -- cycle;
    
        \foreach \x in {0,1} {
            \foreach \y in {0,0.5} {
                \pgfmathsetmacro\height{
                    ifthenelse(\x==0 && \y==0, 1.1,
                    ifthenelse(\x==1 && \y==0, 0.2,
                    ifthenelse(\x==1 && \y==0.5, 0.9, 
                    ifthenelse(\x==0 && \y==0.5, 0.3, ))))
                }
                \draw[thin, gray] (\x+1.176*\y,\y) -- (\x+1.176*\y,\y+\height);
            }
        }
    
        \node[fill=black, circle, inner sep=1.3pt] at (0,0) {};
        \node[fill=black, circle, inner sep=1.3pt] at (1,0) {};
        \node[fill=black, circle, inner sep=1.3pt] at (1+1.176*0.5,0.5) {};
        \node[fill=black, circle, inner sep=1.3pt] at (0+1.176*0.5,0.5) {};

        \node[below left] at (0,0) {$p_1$};
        \node[below right] at (1,0) {$p_3$};
        \node[below right] at (1+1.176*0.5,0.5) {$p_4$};
        \node[left] at (0+1.176*0.5,0.5) {$p_2$};
    
        \node[fill=blue, circle, inner sep=1.3pt] at (0,0+1.1) {};
        \node[fill=blue, circle, inner sep=1.3pt] at (1,0+0.2) {};
        \node[fill=blue, circle, inner sep=1.3pt] at (1+1.176*0.5,0.5+0.9) {};
        \node[fill=blue, circle, inner sep=1.3pt] at (0+1.176*0.5,0.5+0.3) {};

        \node[below left] at (0,0+1.1) {$\tilde{p}_1$};
        \node[right] at (1,0+0.2) {$\tilde{p}_3$};
        \node[below right] at (1+1.176*0.5,0.5+0.9) {$\tilde{p}_4$};
        \node[above] at (0+1.176*0.5,0.5+0.3) {$\tilde{p}_2$};

    \end{tikzpicture}
    }
    \caption{Diagram of the ``lifting'' procedure defining regular triangulations. The points $p_1$, $p_2$, $p_3$, and $p_4$ are embedded into $\mathbb{R}^3$ and then lifted by heights $\omega_1=1.1$, $\omega_2=0.3$, $\omega_3=0.2$, and $\omega_4=0.9$. The convex hull of the lifted point configuration is a $3$-simplex whose lower faces are plotted in blue. Projecting out the lifted coordinate generates the regular triangulation plotted in black. Figure modified from \cite{macfadden2023efficient}.}
    \label{fig:lifting}
\end{figure}
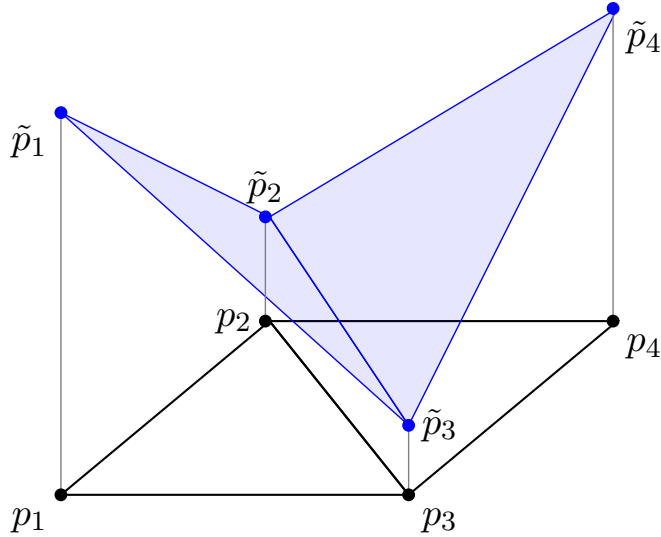

\begin{example}
    To see why some height vectors fail to define regular subdivisions of a vector configuration, consider $\mathbf{A} = \begin{bmatrix} -1 & 1\end{bmatrix}$ and the heights $\omega=(-1,-1)$. Lifting $\mathbf{A}$ by $\omega$, one achieves
    \begin{equation}
        \mathbf{A}^{(-1,-1)} = \begin{bmatrix}
        -1 & 1\\ -1 & -1
    \end{bmatrix}.
    \end{equation}
    For this lifting to define a regular triangulation/subdivision, the configuration $\mathbf{A}^{(-1,-1)}$ must have lower facets which, when projected down to their first component, span $\supp{\mathbf{A}} = \mathbb{R}$. This is not true, though: $\cone{\mathbf{A}^{(-1,-1)}}$ has two facets $F_1=\cone{\{(-1,-1)\}}$ and $F_2 = \cone{\{(1,-1)\}}$, neither of which are lower (they have inwards-facing normals $(1,-1)$ and $(-1,-1)$ respectively)
    . Linear evaluations of $\mathbf{A}$ do not save this height vector: such evaluations are of the form $\mathbf{A^T}s$ for some scalar $s$. Since $\omega+\mathbf{A}^Ts = (-1-s, -1+s)$ always has a negative component, the construction fails.
    
    If one were to instead lift $\mathbf{A}$ with $\omega=(1, 1)$, then
    \begin{equation}
        \mathbf{A}^{(1,1)} = \begin{bmatrix}
        -1 & 1\\ 1 & 1
    \end{bmatrix}.
    \end{equation}
    This is just a mirror of the previous lifted configuration, so both facets of this triangulation are lower, defining the only (regular) triangulation $\mathscr{T}=(\{1\},\{2\})$ of $\mathbf{A}$.
    \label{ex:bad_heights}
\end{example}

It is not difficult to show that if both $\omega$ and $\omega'$ generate the same regular triangulation $\mathscr{T}$, then both $\omega+\omega'$ and $s\omega$ also generate $\mathscr{T}$ for any choice of $s>0$ (\cite{De_Loera2010-ss}, Prop. 5.2.6, Cor. 5.2.8). This is the structure of a cone, so the collection of $\omega$ generating $\mathscr{T}$ defines (the relative interior of) a convex cone called the \textit{secondary cone} $\mathbf{C}(\mathbf{A},\mathscr{T})$,
\begin{equation}
\label{eq:secondary_cone_1}
    \mathbf{C}(\mathbf{A},\mathscr{T}) = \mathrm{cl}\left(\setbuilder{\omega\in\mathbb{R}^{|\mathbf{A}|}}{\mathscr{T}=\mathscr{T}(\mathbf{A},\omega)}\right)
\end{equation}
where $\mathrm{cl}$ indicates the closure. The $\omega\in\partial\,\mathbf{C}(\mathbf{A},\mathscr{T})$ do not define $\mathscr{T}$, but instead define ``coarser'' polyhedral subdivisions $\mathscr{S}$ which can be ``refined'' to $\mathscr{T}$ (\cite{De_Loera2010-ss}, Prop. 5.2.9; see \cref{fig:subdivision_triangulation,fig:diagonal_flip} for examples of refinements). See \cref{ex:simple_secondary_cone} for the construction of the secondary cone of each triangulation of a simple point configuration.

\begin{figure}
    \centering
    \begin{tikzpicture}[scale=1.4, every node/.style={circle, fill=black, inner sep=1.5pt}]
        \def\sx{3} 

        \newcommand{\DrawSquare}[3][]{%
            \coordinate (1) at (#2,#3);
            \coordinate (2) at (#2,#3+1);
            \coordinate (3) at (#2+1,#3);
            \coordinate (4) at (#2+1,#3+1);
            
            \draw[] (1) -- (2) -- (4) -- (3) -- cycle;
            
            \IfSubStr{#1}{diagA}{\draw[] (1) -- (4);}{}
            \IfSubStr{#1}{diagB}{\draw[] (2) -- (3);}{}
            
            \IfSubStr{#1}{labels}{%
            \node[label=left:{$1$}] at (1) {};
            \node[label=left:{$2$}] at (2) {};
            \node[label=right:{$3$}] at (3) {};
            \node[label=right:{$4$}] at (4) {};
            }{
            \node at (1) {};
            \node at (2) {};
            \node at (3) {};
            \node at (4) {};
            }
        }
    
        \DrawSquare[diagA, labels]{0}{0}
        \DrawSquare[]{\sx}{0}
        \DrawSquare[diagB]{2*\sx}{0}

        \node[fill=none] at (0.5,-0.15) {$\mathscr{T}$};
        \node[fill=none] at (1*\sx+0.5,-0.15) {$\mathscr{S}$};
        \node[fill=none] at (2*\sx+0.5,-0.15) {$\mathscr{T}'$};

        \tikzset{ray/.style={->, thick}, lab/.style={font=\small}}
        \draw[ray, <->, bend left=10]
            (1,1.2) to node[fill=none, above=-2pt, midway] {flip}
            (2*\sx,1.2);

        \draw[ray, <-, bend left=20]
            (1.2,0.7) to node[fill=none, above=-10pt, midway] {refine}
            (1*\sx-0.2,0.7);
        \draw[ray, ->, bend right=20]
            (1.2,0.3) to node[fill=none, below=-14pt, midway] {coarsen}
            (1*\sx-0.2,0.3);

        \draw[ray, ->, bend left=20]
            (1*\sx+1.2,0.7) to node[fill=none, above=-10pt, midway] {refine}
            (2*\sx-0.2,0.7);
        \draw[ray, <-, bend right=20]
            (1*\sx+1.2,0.3) to node[fill=none, below=-14pt, midway] {coarsen}
            (2*\sx-0.2,0.3);
    \end{tikzpicture}
    \caption{The two triangulations $\mathscr{T}$ and $\mathscr{T}'$ of the square $\mathbf{A}_\text{PC} = \{(1,1),(1,2),(2,1),(2,2)\}$. Flipping $\mathscr{T}\leftrightarrow\mathscr{T}'$ can be thought as coarsening a triangulation into the subdivision $\mathscr{S}$ and then refining this subdivision in `the other way'.}
    \label{fig:diagonal_flip}
\end{figure}

\begin{example}
    Consider the point configuration $\mathbf{A} = \{(1,1),(1,2),(2,1),(2,2)\}$ which has labels $J=\{1,2,3,4\}$. This configuration has two triangulations, $\mathscr{T} = (\{1,2,3\},\{2,3,4\})$ and $\mathscr{T}'=(\{1,2,4\},\{1,3,4\})$, both of which are regular. See \cref{fig:lifting,fig:diagonal_flip}. Observe that $\mathscr{T}$ is distinguished from $\mathscr{T}'$ by whether the line segment $\{2,3\}$ or $\{1,4\}$ appears. These segments intersect at $\frac{1}{2}\mathbf{A}_1 + \frac{1}{2}\mathbf{A}_4 = \frac{1}{2}\mathbf{A}_2+\frac{1}{2}\mathbf{A}_3$ and this point of intersection can be used to constrain the heights: since a line segment appears in a triangulation if and only if it is a lower face of $\mathbf{A}^\omega$, one observes
    \begin{align}
        \{2,3\} \in \mathscr{T} &\implies \frac{1}{2}\omega_2 + \frac{1}{2}\omega_3 < \frac{1}{2}\omega_1 + \frac{1}{2}\omega_4 \text{ and }\label{eq:simple_secondary_cone}\\
        \{1,4\} \in \mathscr{T}' &\implies \frac{1}{2}\omega_2 + \frac{1}{2}\omega_3 > \frac{1}{2}\omega_1 + \frac{1}{2}\omega_4.
    \end{align}
    The only other case is when $\omega_2+\omega_3 = \omega_1 + \omega_4$, for which all points of $\mathbf{A}^\omega$ are coplanar, defining the regular subdivision $\mathscr{S}=(\{1,2,3,4\})$. These are the only constraints, so the secondary cones of these triangulations/subdivisions are
    \begin{align}
        \mathbf{C}(\mathbf{A},\mathscr{T}) &= \{\omega\in\mathbb{R}^4: \begin{bmatrix} 1 & -1 & -1 & 1\end{bmatrix}\cdot\omega\geq0\},\\
        \mathbf{C}(\mathbf{A},\mathscr{T}') &= \{\omega\in\mathbb{R}^4: \begin{bmatrix} 1 & -1 & -1 & 1\end{bmatrix}\cdot\omega\leq0\}, \text{ and}\\
        \mathbf{C}(\mathbf{A},\mathscr{S}) &= \{\omega\in\mathbb{R}^4: \begin{bmatrix} 1 & -1 & -1 & 1\end{bmatrix}\cdot\omega=0\}.
    \end{align}
    These constraints are best thought of as being ``local'': any number of points $\mathbf{A}'\subset\mathbb{R}^2$ may be added $\mathbf{A}\to\mathbf{A}\cup\mathbf{A}'$ and, as long as $(\{1,2,3\}, \{2,3,4\})\subset\mathscr{T}(\mathbf{A}\cup\mathbf{A}')$, then the same constraint \cref{eq:simple_secondary_cone} holds. What matters is the local structure around the points $\mathbf{A}$.
    \label{ex:simple_secondary_cone}
\end{example}

Recall that cones can be represented either as the conical hull of some generators (a ``V-representation'') or as the intersection of some half-spaces (an ``H-representation''). While \cref{eq:secondary_cone_1} formally defines the secondary cone, it is not as useful as such a V- or H-representation. We will demonstrate how to compute an H-representation of $\mathbf{C}(\mathbf{A},\mathscr{T})$ from the simplices of $\mathscr{T}$, akin to \cref{ex:simple_secondary_cone}. To do this in general, we will want some machinery (the notion of ``flippable circuits'') that we develop in the next couple of sections.

\subsubsection{Dependencies and Circuits}

\label{sec:dep_and_circ}

The notion of dependency relevant to point configurations is encoded by a pair of distinct $\mu,\mu'\in\mathbb{R}_{\geq0}^{|\mathbf{A}_\mathrm{PC}|}$ which satisfy both $\mathbf{A}_\mathrm{PC}\,\mu = \mathbf{A}_\mathrm{PC}\,\mu'$ and $\sum_{j\in J}\mu_j=\sum_{j\in J}\mu'_j=1$. I.e., there is a point $x\in\conv{\mathbf{A}_\mathrm{PC}}$ that can be written as two distinct convex combinations of the points in $\mathbf{A}_\mathrm{PC}$. This is equivalent to an affine dependency of $\mathbf{A}_\mathrm{PC}$,
\begin{align}
    \mathbf{A}_\mathrm{PC}(\mu-\mu') &= 0 \label{eq:pc_dependencies_1}\\
    \sum_{j\in J}(\mu-\mu')_j &= 0. \label{eq:pc_dependencies_2}
\end{align}
Similarly, a dependency of a vector configuration can be encoded by distinct $\mu,\mu'\in\mathbb{R}_{\geq0}^{|\mathbf{A}_\mathrm{VC}|}$ which satisfy $\mathbf{A}_\mathrm{VC}\,\mu = \mathbf{A}_\mathrm{VC}\,\mu'$. There are no constraints on $\sum_{j\in J} \mu_j$ or $\sum_{j\in J} \mu'_j$. I.e., there is a vector $r\in\cone{\mathbf{A}_\mathrm{VC}}$ that can be written as two distinct conical combinations of elements in $\mathbf{A}_\mathrm{VC}$. This is equivalent to a linear dependency of $\mathbf{A}_\mathrm{VC}$,
\begin{equation}
\label{eq:vc_dependencies}
    \mathbf{A}_\mathrm{VC}(\mu-\mu') = 0.
\end{equation}
Note, for both point and vector configurations, one can either specify a dependency as a pair of non-negative vectors $\mu,\mu'$ or as a single mixed-sign vector 
$\lambda=\mu-\mu'$. To map a mixed-sign vector $\lambda$ to a pair of non-negative vectors, consider the sets (analogous to a Radon partition)
\begin{align}
    J_+ &= \setbuilder{j\in J}{\lambda_j>0},\\
    J_- &= \setbuilder{j\in J}{\lambda_j<0}, \text{ and}\\
    J_0 &= \setbuilder{j\in J}{\lambda_j=0}.
\end{align}
Then, the pair of vectors $\lambda_\pm$, defined as
\begin{equation}
    (\lambda_\pm)_j = \begin{cases}
        \lambda_j & j\in J_\pm,\\
        0 & \text{otherwise}.
    \end{cases}
\end{equation}
equivalently define the dependency. The mixed-sign vectors will naturally appear as normals defining the walls of secondary cones, so we prefer the mixed-sign vector characterization.\footnote{Note that for $\lambda=\mu-\mu'$, the vectors $\lambda_+$, $\lambda_-$ may differ from $\mu$, $\mu'$. This is not problematic since the mixed-sign vector (mod scale) is what is used in our constructions.}

As can be seen in \cref{eq:vc_dependencies}, the space of dependencies of a vector configuration is simply $\mathrm{null}(\mathbf{A}_\mathrm{VC})$. To identify the space of dependencies of a point configuration $\mathbf{A}'_\mathrm{PC}$, observe that conditions \cref{eq:pc_dependencies_1,eq:pc_dependencies_2} on $\mathbf{A}'_\mathrm{PC}$ are equivalent to \cref{eq:vc_dependencies} on ${\mathbf{A}'}_\mathrm{VC}^\mathbf{1}$. I.e., if one lifts $\mathbf{A}'_\mathrm{PC}$ by $\omega=\mathbf{1}$ and interprets it as a vector configuration, then this lifted configuration ${\mathbf{A}'}_\mathrm{VC}^\mathbf{1}$ has the same dependencies as the original one $\mathbf{A}'_\mathrm{PC}$. Thus the space of dependencies of $\mathbf{A}'_\mathrm{PC}$ is simply $\mathrm{null}({\mathbf{A}'}_\mathrm{VC}^\mathbf{1})$. One calls ${\mathbf{A}'}_\mathrm{VC}^\mathbf{1}$ the \textit{homogenization}\footnote{Formally, $\mathbf{A}^\omega$ is a homogenization if $\mathbf{A}^\omega$ is contained in an affine hyperplane of $\mathbb{R}^{m+1}$.} of $\mathbf{A}'_\mathrm{PC}$. The point configuration $\mathbf{A}'_\mathrm{PC}$ and its homogenization are effectively identical: for example, any subdivision $\mathscr{S}(\mathbf{A'_\mathrm{PC})}$ defines simplices --- sets of labels --- which equivalently define a subdivision $\mathscr{S}({\mathbf{A}'}_\mathrm{VC}^\mathbf{1})$ and vice-versa. It suffices, then, to just study dependencies/circuits of vector configurations (i.e., linear relations). We will do so.

We call a vector configuration \textit{independent} or a \textit{simplex} if it is linearly independent, otherwise it is \textit{dependent}. If $\mathbf{A}$ (with label set $J$) is dependent but every proper subconfiguration $\mathbf{A}'\subset\mathbf{A}$ is independent, then $\mathbf{A}$ is called \textit{minimally dependent}. Observe the following facts about minimally dependent configurations:
\begin{enumerate}
    \item (\cite{De_Loera2010-ss}, Lemma 4.1.7) they satisfy $\mathrm{dim}(\mathrm{null}(\mathbf{A})) = 1$ and hence have an effectively unique (up to scaling) linear dependency $\mathbf{A}\lambda=0$ with $\lambda\neq\mathbf{0}$,
    \item $\lambda_j\neq0$ for all $j\in J$ (i.e., $J_0=\varnothing$) since, otherwise, $\mathbf{A}\setminus\mathbf{A}_j$ would be a dependent proper subconfiguration of $\mathbf{A}$, and
    \item (\cite{De_Loera2010-ss}, Lemma 2.4.2) any minimally dependent configuration $\mathbf{A}$ has $1$ or $2$ triangulations, defined as
    \begin{align}
        \label{eq:circuit_triang_plus}
        \mathscr{T}_+(\mathbf{A}) &= \setbuilder{\mathbf{A}\setminus j}{j\in J_+}, \\
        \label{eq:circuit_triang_minus}
        \mathscr{T}_-(\mathbf{A}) &= \setbuilder{\mathbf{A}\setminus j}{j\in J_-}.
    \end{align}
    The only case in which $\mathbf{A}$ has a single triangulation is when $J_+ = \varnothing$ or $J_-=\varnothing$, for which the definition of either $\mathscr{T}_+(\mathbf{A})$ or $\mathscr{T}_-(\mathbf{A})$ degenerates.
\end{enumerate}
In connection with oriented matroids, one calls the pair of label sets $(J_+,J_-)$ a \textit{(signed) circuit} with \textit{signature} $(|J_+|,|J_-|)$. Typically, when unambiguous, we will list the signature as $(n,m)$ for $n=\max(|J_-|,|J_+|)$ and $m=\min(|J_-|,|J_+|)$. Since $\mathbf{A}$ equivalently defines $(J_+,J_-)$, we will often abuse notation and call $\mathbf{A}$ the circuit. In \cite{De_Loera2010-ss}, the symbol $Z$ is also used to denote the circuit.

The discrete move $\mathscr{T}_+(\mathbf{A}) \leftrightarrow \mathscr{T}_-(\mathbf{A})$ is a/the \textit{flip} of the circuit $\mathbf{A}$. We will provide a general definition of ``flip'' later, but this first encounter shows some of the key properties. Namely:
\begin{enumerate}
    \item flips are encoded by circuits (see \cref{eq:circuit_triang_plus,eq:circuit_triang_minus}), thus relating precisely $2$ triangulations and
    \item flips can be thought of as coarsening a triangulation (to a `corank-1' subdivision) and then refining it in ``the other way'' (see \cref{fig:diagonal_flip}).
\end{enumerate}
In general, flips should be thought of as minimal transformations between triangulations.

Flips in minimally dependent configurations are nice: the two endpoints, $\mathscr{T}_+(\mathbf{A})$ and $\mathscr{T}_-(\mathbf{A})$, are both regular with secondary cones
\begin{align}
    \label{eq:secondary_cone_circuit_a}
    \mathbf{C}(\mathbf{A},\mathscr{T}_+) &= \setbuilder{\omega\in\mathbb{R}^{|\mathbf{A}|}}{\lambda\cdot \omega \geq 0}\\
    \label{eq:secondary_cone_circuit_b}
    \mathbf{C}(\mathbf{A},\mathscr{T}_-) &= \setbuilder{\omega\in\mathbb{R}^{|\mathbf{A}|}}{\lambda\cdot \omega \leq 0}.
\end{align}
The coarse intermediate subdivision is likewise regular, generated by any $\omega$ satisfying $\lambda\cdot\omega=0$, such as $\omega=\mathbf{0}$. Let $\omega_\pm$ be heights such that $\mathscr{T}_\pm = \mathscr{T}(\mathbf{A},\omega_\pm)$. For example, if $\omega_+$ is any point in $\relint{\mathbf{C}(\mathbf{A},\mathscr{T}_+)}$, then $\omega_-=-\omega_+$ generates\footnote{This has a cute interpretation, from \cite{De_Loera2010-ss}: lifting by heights $-\omega_+$ can be thought of as projecting down the \textit{upper} facets of $\mathbf{A}^\omega$. Thus, the flip in this case can be thought of as picking either the lower or upper facet of the lifted polyhedron. This geometric picture makes it clear why vector configurations can have non-flippable circuits while point configurations cannot: polytopes always have upper and lower facets but this is not true for cones.} $\mathscr{T}_-$. Consider lifting $\mathbf{A}$ by heights $(1-t)\omega_+ + t\omega_-$ as $t$ increases from $0$ to $1$. The resultant triangulation will transform, as $t$ increases, from $\mathscr{T}_+(\mathbf{A}) = \mathscr{T}(\mathbf{A}, \omega_+)$ to $\mathscr{T}_-(\mathbf{A}) = \mathscr{T}(\mathbf{A}, \omega_-)$. This height perspective demonstrates a $0<\tau<1$ such that $\lambda\cdot\omega(\tau)=0$, in which case $\omega(\tau)$ generates the coarsening. 
See, e.g., \cref{ex:simple_secondary_cone,fig:diagonal_flip}. This height perspective applies whenever both endpoints of a flip, as well as the intermediate coarsening, are all regular. This `height homotopy' \cite{De_Loera2010-ss} picture will be frequently used when connecting to toric geometry.

While useful, this height perspective is limited: irregular triangulations can be flipped in which case there is no height interpretation. Similarly, a flip between two regular triangulations may pass through an irregular coarsening (\cite{De_Loera2010-ss}, Ex. 5.3.4), in which case the height picture also fails. It is thus best to be cautious when adopting the height perspective of flips.

\subsubsection{Common Circuit Signatures}

\label{sec:common_sig}
What circuit signatures can arise for a $d$-dimensional vector configuration $\mathbf{A}_\mathrm{VC}$? First, observe that any collection of $d+1$ vectors is dependent, so signatures $(n,m)$ must satisfy $n+m\leq d+1$. We now split discussion by the smallest element in the signature (in our ordering, $m\leq n$):
\begin{enumerate}
    \item A circuit with $m=|J_-|=0$ cannot define a flip: it has one triangulation since the construction \cref{eq:circuit_triang_minus} degenerates. We never encounter such circuits for $n<2$ since that would require $\mathbf{A}_\mathrm{VC}$ to be semi-pathological (e.g., signature $(1,0)$ requires $\mathbf{0}\in\mathbf{A}_\mathrm{VC}$).
    \item For $m=|J_-|=1$, the flip $\mathscr{T}_+ \leftrightarrow \mathscr{T}_-$ consists of deleting/inserting the unique $j\in J_-$. I.e., $\mathscr{T}_+$ uses all rays $J_+\cup J_-$ while $\mathscr{T}_-$ only uses rays $J_+$.
    \item Circuits with $m>1$ do not change which rays are used in the triangulation.
\end{enumerate}

\subsubsection{Flippable Circuits, Star, and Link}
\label{subsubsec:flippable_secondary_cone}
We now know how to flip a circuit and we saw hints (\cref{eq:secondary_cone_circuit_a,eq:secondary_cone_circuit_b}) of how it relates to the secondary cone. Once we generalize flips to triangulations of non-minimally dependent configurations, we will effectively have the construction of the secondary cone in hand.

Recall that flips can be thought of as minimal transformations on triangulations. We begin with a simple case: let $\mathbf{A}$ be a vector configuration and let $\mathbf{A}'\subseteq\mathbf{A}$ be a circuit (i.e., a minimally dependent subconfiguration) with $\dim(\mathbf{A}') = \dim(\mathbf{A})$ and with $2$ triangulations. Then, if $\mathscr{T}_+(\mathbf{A}')\subseteq\mathscr{T}(\mathbf{A})$, one can transform
\begin{equation}
\label{eq:flip_solid}
    \mathscr{T}(\mathbf{A}) \to \left( \mathscr{T}(\mathbf{A})\setminus \mathscr{T}_+(\mathbf{A}')\right) \;\cup \; \mathscr{T}_-(\mathbf{A}').
\end{equation}
In this case, the circuit defined by $\mathbf{A}'$ is said to be \textit{flippable}. A complication arises, however, if $\dim(\mathbf{A}')<\dim(\mathbf{A})$. In this case, the maximal simplices of $\mathscr{T}_+(\mathbf{A}')$ will not be maximal in $\mathscr{T}(\mathbf{A})$ and \cref{eq:flip_solid} will no longer work: even if $\mathscr{T}_+(\mathbf{A}')\subseteq\mathscr{T}(\mathbf{A})$, the circuit $\mathbf{A}'$ might not be flippable --- see \cref{ex:not_embedded}.

\begin{example}
    Take the point configuration
    \begin{equation}
        \mathbf{A} = 
        \begin{bNiceMatrix}[first-row]
            1 & 2 & 3 & 4 & 5\\
            1 & 2 & 3 & 4 & 5\\
            1 & 2 & 1 & 2 & 1
        \end{bNiceMatrix}
        .
    \end{equation}
    Observe that $\mathbf{A}$ contains the circuit $\mathbf{A}'=\{\mathbf{A}_1,\mathbf{A}_3,\mathbf{A}_5\}$ with dependency 
    \begin{equation}
        \lambda'=\begin{bmatrix} 1 & -2 & 1\end{bmatrix}.
    \end{equation}
    This circuit has two triangulations $\mathscr{T}_+(\mathbf{A}')=(\{1,3\},\{3,5\})$ and $\mathscr{T}_-(\mathbf{A}')=(\{1,5\})$. A flip $\mathscr{T}_+(\mathbf{A}')\to\mathscr{T}_-(\mathbf{A}')$ corresponds to deleting the point $\mathbf{A}_3$.
    
    Now, consider $\mathscr{T}(\mathbf{A}) = (\{1,2,3\}, \{2,3,4\}, \{3,4,5\})$. Observe:
    \begin{enumerate}
        \item $\mathscr{T}_+(\mathbf{A}')\subset\mathscr{T}(\mathbf{A})$ and
        \item both $\mathscr{T}(\mathbf{A})$ and $\mathbf{A}'$ are symmetric under the map $g$ mapping both $1\leftrightarrow5$ and $2\leftrightarrow4$.
    \end{enumerate}
    While the first property suggests that one might be able to flip $\mathbf{A}'$ in $\mathscr{T}(\mathbf{A})$, the second property will indicate it is not. We will \textit{sketch} an argument here.

    Assume that the flip of the circuit, $\mathscr{T}_+(\mathbf{A}')\to\mathscr{T}_-(\mathbf{A}')$, corresponds to a flip in the entire triangulation $\mathscr{T}(\mathbf{A})$,
    \begin{equation}
        \mathscr{T}(\mathbf{A})\to\mathrm{flip}(\mathscr{T}(\mathbf{A})),
    \end{equation}
    where we let $\mathrm{flip}(\mathscr{T}(\mathbf{A}))$ be the post-flip triangulation. Since both $\mathscr{T}(\mathbf{A})$ and $\mathbf{A}'$ (the circuit) are symmetric under $g$, the post-flip triangulation $\mathrm{flip}(\mathscr{T})$ should also be symmetric. I.e., $\mathrm{flip}(\mathscr{T}(\mathbf{A})) = g(\mathrm{flip}(\mathscr{T}(\mathbf{A})))$. This is already a contradiction since $\mathbf{A}\setminus 3$ is quadrilateral, analogous to \cref{fig:diagonal_flip}, and hence it has two triangulations:
    \begin{align}
        \mathscr{T}_{25} &= (\{1,2,5\}, \{2,4,5\}) \text{ and}\\
        \mathscr{T}_{14} &= (\{1,2,4\}, \{1,4,5\}).
    \end{align}
    These triangulations only differ by whether the diagonal $\{1,4\}$ or $\{2,5\}$ is kept. Neither $\mathscr{T}_{25}$ nor $\mathscr{T}_{14}$ is symmetric under $g$, hence the contradiction.

    This failure should be understood as there being no unique way to delete $\mathbf{A}_3$ from $\mathscr{T}(\mathbf{A})$. Flips must be unique (i.e., relate exactly $2$ triangulations).
    \label{ex:not_embedded}
\end{example}

To check whether a lower-dimensional circuit is flippable (i.e., to generalize \cref{eq:flip_solid}), one needs a bit more technology. First, let $\sigma\in\mathscr{T}$. We stress that $\sigma$ need not be maximal. Define the \textit{star}\footnote{This has unfortunate notational overload with ``star triangulations'' as in \cref{eq:star_triang}.} of $\sigma$ as
\begin{equation}
    \text{st}_\mathscr{T}(\sigma) = \setbuilder{\sigma'\in\mathscr{\mathscr{T}}}{\sigma\cup\sigma'\in\mathscr{T}}
\end{equation}
In other words, $\text{st}_\mathscr{T}(\sigma)$ is the collection of all simplices $\sigma''\in\mathscr{T}$ (as well as their faces) for which $\sigma\subseteq\sigma''$. See \cref{fig:star_link}. Observe that one has to add the proper faces of each such $\sigma''$, even if they do not intersect $\sigma$, in order for $\text{st}_\mathscr{T}(\sigma)$ to be a well defined complex (see \cref{def:subdivision}). Collect these simplices not intersecting $\sigma$ as the \textit{link},
\begin{equation}
    \text{link}_\mathscr{T}(\sigma) = \setbuilder{\sigma'\in\text{st}_\mathscr{T}(\sigma)}{\sigma\cap\sigma'=\varnothing}.
\end{equation}
See \cref{fig:star_link} again.

\begin{figure}
    \centering
    \begin{tikzpicture}[scale=1.8, every node/.style={circle, fill=black, inner sep=2.5pt}]
        \coordinate (v1) at (1,2);
        \coordinate (v2) at (1,3);
        \coordinate (v3) at (2,1);
        \coordinate (v4) at (2,2);
        \coordinate (v5) at (2,3);
        \coordinate (v6) at (2,4);
        \coordinate (v7) at (4,2);
        \coordinate (v8) at (5,3);
        \coordinate (v9) at (5,4);
        \coordinate (v10) at (5,5);

        \draw[thick] (v1) -- (v2) -- (v3) -- cycle;
        \draw (v2) -- (v3) -- (v4) -- cycle;
        \filldraw[very thick, fill=Purple!20, draw=Purple] (v2) -- (v4) -- (v5) -- cycle;
        \draw[thick] (v2) -- (v5) -- (v6) -- cycle;
        \draw[thick] (v3) -- (v4) -- (v7) -- cycle;
        \filldraw[very thick, fill=Purple!20, draw=Purple] (v4) -- (v5) -- (v7) -- cycle;
        \draw[thick] (v5) -- (v6) -- (v8) -- cycle;
        \draw[thick] (v5) -- (v7) -- (v8) -- cycle;
        \filldraw[very thick, fill=Purple!20, draw=Purple] (v6) -- (v8) -- (v9) -- cycle;
        \filldraw[very thick, fill=Purple!20, draw=Purple] (v6) -- (v9) -- (v10) -- cycle;

        \node[] at (v1) {};
        \node[fill=ForestGreen] at (v2) {};
        \node[] at (v3) {};
        \node[fill=Blue] at (v4) {};
        \node[Blue, label=above right:\textcolor{Blue}{$\sigma'$}] at (v5) {};
        \node[fill=ForestGreen] at (v6) {};
        \node[fill=ForestGreen] at (v7) {};
        \node[fill=ForestGreen] at (v8) {};
        \node[Blue, label=right:\textcolor{Blue}{$\sigma$}] at (v9) {};
        \node[fill=ForestGreen] at (v10) {};
        
        \draw[very thick, draw=Blue] (v4) -- (v5);

        \draw[very thick, draw=ForestGreen] (v6) -- (v8);
        \draw[very thick, draw=ForestGreen] (v6) -- (v10);

        \node[fill=none, Purple] at (4.3, 3.7) {$\text{st}_\mathscr{T}(\sigma)$};
        \node[fill=none, Purple] at (2.5, 2.3) {$\text{st}_\mathscr{T}(\sigma')$};
        
        \node[fill=none, ForestGreen] (txtA) at (3.3, 1.3) {$\text{link}_\mathscr{T}(\sigma')$};
        \draw[->, ForestGreen, very thick, shorten >=10pt, shorten <=0pt] (txtA) to[bend left=30] (v2);
        \draw[->, ForestGreen, very thick, shorten >=10pt, shorten <=-14pt] (txtA) -- (v7);

        \node[fill=none, ForestGreen] at (2.6, 4.5) {$\text{link}_\mathscr{T}(\sigma)$};
        
    \end{tikzpicture}
    \caption{A triangulation $\mathscr{T}$ of a point configuration. Two simplices $\sigma$ and $\sigma'$ of dimensions $0$ and $1$ respectively are plotted in blue. In green, the $\text{link}_\mathscr{T}(\sigma)$ and $\text{link}_\mathscr{T}(\sigma')$ are plotted. In purple, $\text{st}_\mathscr{T}(\sigma)$ and $\text{st}_\mathscr{T}(\sigma')$ are plotted. Note the link is a subcomplex of the star, so green colored regions also correspond to the star.}
    \label{fig:star_link}
\end{figure}
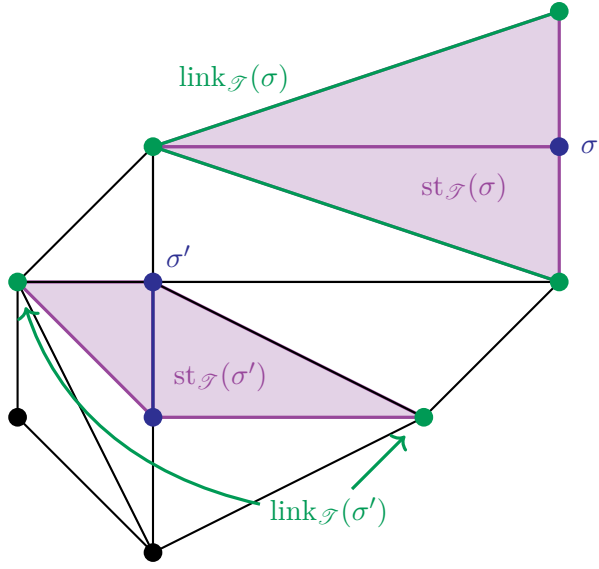

In this language, a circuit $\mathbf{A}'\subseteq\mathbf{A}$ can be flipped in $\mathscr{T}(\mathbf{A})$ if and only if (\cite{De_Loera2010-ss} Thm. 4.4.1)\footnote{We use the freedom in setting the sign of the dependency $\lambda$ to enforce that $\mathbf{A}'$ is originally triangulated as $\mathscr{T}_+(\mathbf{A})$.}
\begin{enumerate}
    \item $|J_-| > 0$,
    \item $\mathscr{T}_+(\mathbf{A}')\subseteq\mathscr{T}(\mathbf{A})$, and
    \item $L=\mathrm{link}_{\mathscr{T}(\mathbf{A})}(\sigma)$ is the same for all maximal $\sigma\in\mathscr{T}_+(\mathbf{A}')$.
\end{enumerate}
The first condition ensures that the circuit $\mathbf{A}'$ has $2$ triangulations while the latter two conditions ensure that $\mathscr{T}_+(\mathbf{A}')$ is properly embedded into $\mathscr{T}(\mathbf{A})$:
\begin{equation}
\label{eq:embedded}
    \setbuilder{\sigma\cup\sigma'}{\sigma\in\mathscr{T}_+(\mathbf{A}'), \;\sigma'\in L} \subseteq\mathscr{T}(\mathbf{A}).
\end{equation}
In lieu of a better name, we say that a circuit $\mathbf{A}'$ satisfying \cref{eq:embedded} is \textit{embedded} into $\mathscr{T}(\mathbf{A})$, regardless of the size $|J_-|$ (even if $|J_-|=0$). If $\mathbf{A}'$ is embedded with $|J_-|>0$, it is \textit{flippable}. With this, the appropriate generalization of \cref{eq:flip_solid} is
\begin{equation}
\label{eq:flip}
    \mathscr{T}(\mathbf{A}) \to \bigg( \mathscr{T}(\mathbf{A})\setminus \underbrace{ \setbuilder{\sigma\cup\sigma'}{\sigma\in\mathscr{T}_+(\mathbf{A}'),\sigma'\in L} }_{\text{embedding of }\mathscr{T}_+(\mathbf{A}') \text{ into } \mathscr{T}(\mathbf{A})}\bigg) \;\cup \; \underbrace{ \setbuilder{\sigma\cup\sigma'}{\sigma\in\mathscr{T}_-(\mathbf{A}'),\sigma'\in L} }_{\text{embedding of }\mathscr{T}_-(\mathbf{A}') \text{ into } \mathscr{T}(\mathbf{A})}.
\end{equation}
To connect with \cref{eq:flip_solid}, observe that \cref{eq:flip_solid} concerned flips of circuits with $\dim(\mathbf{A}') = \dim(\mathbf{A})$. In this case, $\text{st}_\mathscr{T}(\sigma) = \{\sigma\}$ and $\text{link}_\mathscr{T}(\sigma) = \varnothing$, showing that \cref{eq:flip} indeed reduces to \cref{eq:flip_solid}. We stress that \cref{eq:flip} only defines a flip if $\mathscr{T}_+(\mathbf{A}')$ has a constant link in $\mathscr{T}(\mathbf{A})$. This constant-link requirement is why the circuit in \cref{ex:not_embedded} failed to be flippable.

\subsubsection{General Construction of the Secondary Cone}
Finally, with an understanding of embedded/flippable circuits, we are able to provide an H-representation of the secondary cone $\mathbf{C}(\mathbf{A}_{\mathrm{VC}},\mathscr{T})$ of a regular triangulation $\mathscr{T}$. Recall from \cref{sec:dep_and_circ} the height perspective of flips: if the secondary cone of another triangulation $\mathscr{T}'$ shares a facet with $\mathbf{C}(\mathbf{A}_{\mathrm{VC}},\mathscr{T})$, then a height homotopy demonstrates a flip between $\mathscr{T}$ and $\mathscr{T}'$ through a regular subdivision. We later identified the intermediate subdivision with a circuit and, in \cref{ex:simple_secondary_cone}, saw that the dependency vector of the circuit defines the linear constraint on the heights to define the subdivision/facet. A similar story holds if a circuit is embedded but not flippable. This shows that $\mathbf{C}(\mathbf{A}_{\mathrm{VC}},\mathscr{T})$ is bounded by linear hyperplanes whose normals are given by the dependencies of the embedded circuits\footnote{Again, we set the orientation of the circuit to enforce that $\mathscr{T}_+(\mathbf{A}')\subseteq \mathscr{T}(\mathbf{A})$, not $\mathscr{T}_-(\mathbf{A}')\subseteq \mathscr{T}(\mathbf{A})$.},
\begin{equation}
\label{eq:secondary_cone_2}
    \mathbf{C}(\mathbf{A}_{\mathrm{VC}},\mathscr{T}) = \setbuilder{\omega}{\forall \text{ embedded circuits } \mathbf{A}' \text{ with dependency }\lambda; \;\lambda\cdot\omega\geq0}.
\end{equation}
Here, we implicitly let $\lambda_i=0$ for points $\mathbf{A}_i\notin \mathbf{A}'$. In other words, \cref{eq:secondary_cone_2} collects all embedded circuits, orients them such that $\mathscr{T}_+\subseteq\mathscr{T}$, and imposes the dependency $\lambda$ of each such circuit as an inwards-facing normal. A more common presentation is to collect these normals $\lambda$ as rows of a matrix $H$, so that $\mathbf{C}(\mathbf{A}_{\mathrm{VC}},\mathscr{T}) = \setbuilder{\omega}{H\omega\geq0}$. The definition given by \cref{eq:secondary_cone_2} is how one practically writes \cref{eq:secondary_cone_1}. This is the H-representation of the secondary cone.

Attempting to apply the construction \cref{eq:secondary_cone_2} to an irregular triangulation $\mathscr{T}$ results in either
\begin{enumerate}
    \item a non-solid cone for which height vectors $\omega$ in the relative interior define a non-triangulation subdivision which refines to $\mathscr{T}$ or
    \item a solid cone which contains no valid height vectors in its relative interior. 
\end{enumerate}
While the first case is easy to check using linear programming, the second case is a bit more annoying. Instead, if one wants to check the regularity of some triangulation $\mathscr{T}$, it is recommended to construct the cone of Theorem 2.3.20 in \cite{De_Loera2010-ss} and check if this is solid. This cone coincides with \cref{eq:secondary_cone_2} when $\mathscr{T}$ is regular but is always non-solid if $\mathscr{T}$ is irregular.

Now, consider the collection of secondary cones for a configuration $\mathbf{A}$. It is not hard to show that, if $\omega$ and $\omega'$ both define regular subdivisions, then so does both $\omega+\omega'$ and $s\omega$ for all $s\geq0$ (\cite{De_Loera2010-ss}, Thm. 5.2.11). Thus, the heights defining regular subdivisions form a convex cone. For point configurations, this is a trivial cone $\mathbb{R}^{|\mathbf{A}_\mathrm{PC}|}$; for vector configurations, this cone can be (but is not always) smaller. In either case, this cone has a natural decomposition: since $\omega\in\mathrm{relint}(\mathbf{C}(\mathbf{A},\mathscr{T}))$ uniquely defines a regular subdivision/triangulation of $\mathbf{A}$, the cone of valid heights is subdivided into a polyhedral fan of secondary cones. The external walls of this fan correspond to circuits for which $|J_-|=0$. This fan is called the \textit{secondary fan} $\Sigma$ of $\mathbf{A}$ \cite{Gelfand:1994} and will be discussed in more detail in \cref{sec:sec_fan}.

\subsection{Point vs. Vector Configuration}

\label{sec:pc_vs_vc}

As we saw when computing dependencies, every point configuration $\mathbf{A}_\mathrm{PC}$ can equivalently be thought of as a vector configuration $\mathbf{A}_\mathrm{VC}^\omega$ via homogenization. These objects are effectively the same. Previously we just used $\omega=\mathbf{1}$ but all that matters is that all points in $\mathbf{A}_\mathrm{VC}^\omega$ lay on an affine (not linear) hyperplane of $\mathbb{R}^{|\mathbf{A}_\mathrm{PC}|+1}$. In the other direction, any \textit{acyclic} vector configuration, one for which $\exists \psi\in\mathbb{R}^{\text{ambient dim}(\mathbf{A})}$ such that $\mathbf{A}_\mathrm{VC}^T \psi > 0$, can be equivalently thought of as a point configuration\footnote{See, for example, that any $\omega$ is a valid height vector in this case. This is because $\omega$ plus a sufficient scaling of $\mathbf{A}_\mathrm{VC}^T \psi>0$ will always be non-negative.}.

The novelty of vector configurations arises, in this work, for \textit{totally cyclic} vector configurations, those for which $\cone{\mathbf{A}} = \mathbb{R}^{\dim(\mathbf{A)}}$. Observe, if a vector configuration is totally cyclic, then it is not acyclic. A simple example of such novelty is that totally cyclic vector configurations may have non-flippable circuits, those for which either $J_+=\varnothing$ or $J_-=\varnothing$. Recall \cref{ex:bad_heights} which concerned a totally cyclic vector configuration, $\mathbf{A} = \begin{bmatrix} -1 & 1\end{bmatrix}$. Since $\lambda=(1,1)$ defines a dependency of this configuration, $J_- = \varnothing$ and hence the subdivision $\mathscr{S}(\mathbf{A})=(\{1,2\})$ only has one refinement, $\mathscr{T}(\mathbf{A})=(\{1\},\{2\})$. This circuit cannot be flipped --- it defines an impassible wall in the secondary fan. Such circuits are trivially impossible for point configurations: point configurations concern affine dependencies and hence $\sum_i\lambda_i=0$ which, if $\lambda\neq\mathbf{0}$, requires $\min(\lambda)<0<\max(\lambda)$.

Even for totally cyclic vector configurations, one can still sometimes draw a correspondence to a point configuration. For example, let $\mathscr{T}(\mathbf{A}_\mathrm{PC})$ be a star triangulation of a reflexive polytope $P=\conv{\mathbf{A}_\mathrm{PC}}$. Hence, $\mathscr{T}(\mathbf{A}_\mathrm{PC})$ can be mapped to a simplicial fan by extending the simplices $\conv{\sigma}$ to $\cone{\sigma}$. Thus, if we reinterpret $\mathbf{A}_\mathrm{PC}$ as a vector configuration\footnote{We stress, the reinterpretation of $\mathbf{A}_\mathrm{PC}$ as a vector configuration is \textit{not} the same as homogenization --- it is just treating every nonzero point like a vector.}, then the simplices of $\mathscr{T}(\mathbf{A}_\mathrm{PC})$ still define a triangulation, now of $\mathbf{A}_\mathrm{VC}$. Interesting behavior occurs, however, when $\mathscr{T}(\mathbf{A}_\mathrm{PC})$ is not star --- see \cref{ex:non_star}.

\begin{example}
\label{ex:non_star}
    Take the point configuration
    \begin{equation}
        \mathbf{A}_\mathrm{PC} = \begin{bNiceMatrix}[first-row]
        0 &  1 &  2 &  3 & 4 & 5\\
        0 & -1 & -1 &  1 & 1 & 0\\
        0 & -1 &  1 & -1 & 1 & 2
    \end{bNiceMatrix},
    \end{equation}
    with associated vector configuration (obtained by dropping the origin),
    \begin{equation}
    \label{eq:corresponding_vc}
        \mathbf{A}_\mathrm{VC} = \begin{bNiceMatrix}[first-row]
        1 &  2 &  3 & 4 & 5\\
        -1 & -1 &  1 & 1 & 0\\
        -1 &  1 & -1 & 1 & 2
    \end{bNiceMatrix}.
    \end{equation}
    both plotted in \cref{fig:starpc=vc}. Note: $\mathbf{A}_\mathrm{VC}$ is \textit{not} the homogenization of $\mathbf{A}_\mathrm{PC}$.
    
    There is a unique fine, star triangulation of $\mathbf{A}_\mathrm{PC}$,
    \begin{equation}
    \label{eq:starpc}
        \mathscr{T}(\mathbf{A}_\mathrm{PC}) = \begin{bmatrix}
            0 & 0 & 0 & 0 & 0\\
            1 & 1 & 2 & 3 & 4\\
            2 & 3 & 5 & 4 & 5
        \end{bmatrix},
    \end{equation}
    plotted in the top-left of \cref{fig:starpc=vc}. Since this triangulation is star, each simplex uniquely defines a simplicial cone with rays given by the non-origin points. This corresponds to the following triangulation of $\mathbf{A}_\mathrm{VC}$ (top-right of \cref{fig:starpc=vc}),
    \begin{equation}
    \label{eq:star_vc}
        \mathscr{T}(\mathbf{A}_\mathrm{VC}) = \begin{bmatrix}
            1 & 1 & 2 & 3 & 4\\
            2 & 3 & 5 & 4 & 5
        \end{bmatrix}.
    \end{equation}
    This correspondence can also be made through heights since $\mathscr{T}(\mathbf{A}_\mathrm{PC})$ is regular. Observe that every cone in $\supp{\mathbf{A}_\mathrm{VC}^\omega}$ trivially includes the origin, which can be thought of as having $0$-height. Hence, a non-negative height vector $\omega$ for $\mathscr{T}(\mathbf{A}_\mathrm{VC})$, such as $\omega=(1,1,1,1,1)$, can be interpreted as a height vector for $\mathscr{T}(\mathbf{A}_\mathrm{PC})$ upon setting $\omega_0=0$ (i.e., map $(1,1,1,1,1)\to(0,1,1,1,1,1)$). This map can be reversed: if $\omega'$ is any height generating $\mathscr{T}(\mathbf{A}_\mathrm{PC})$ with $\omega'_0\leq\omega'_j$ for all $j$ (this can always be satisfied for a regular, star $\mathscr{T}(\mathbf{A}_\mathrm{PC})$), then $\omega'_j\to\omega'_j-\omega'_0$ will be non-negative and generate $\mathscr{T}(\mathbf{A}_\mathrm{VC})$.

    Now, observe what changes if $\mathscr{T}(\mathbf{A}_\mathrm{PC})$ is non-star, such as
    \begin{equation}
    \label{eq:nonstarpc}
        \mathscr{T}(\mathbf{A}_\mathrm{PC}) = \begin{bmatrix}
            0 & 0 & 0 & 0 & 2\\
            1 & 1 & 2 & 3 & 4\\
            2 & 3 & 4 & 4 & 5
        \end{bmatrix}.
    \end{equation}
    This triangulation contains a non-star simplex $\{2,4,5\}$ which cannot obviously be encoded into any triangulation of $\mathbf{A}_\mathrm{VC}$. Only the subcomplex of simplices containing the origin, $\text{st}_{\mathscr{T}(\mathbf{A}_\mathrm{PC})}(0)$, has a natural correspondence to $\mathbf{A}_\mathrm{VC}$. Nonetheless, we can utilize the height map discussed in the last paragraph. For $\epsilon>0$, the heights $\omega=(0,1,1,1,1,2+\epsilon)$ generate $\mathscr{T}(\mathbf{A}_\mathrm{PC})$. Utilizing the aforementioned map $\omega_j\to\omega_j-\omega_0$, this corresponds to heights $\omega=(1,1,1,1,2+\epsilon)$ for the vector configuration. Lifting $\mathbf{A}_\mathrm{VC}$ by these heights generates the non-fine triangulation
    \begin{equation}
    \label{eq:multi_vc}
        \mathscr{T}(\mathbf{A}_\mathrm{VC}) = \begin{bmatrix}
            1 & 1 & 2 & 3\\
            2 & 3 & 4 & 4
        \end{bmatrix}
    \end{equation}
    which corresponds to extending all star simplices of \cref{eq:nonstarpc} to simplicial cones, dropping all non-star simplices. We have deleted vector $5$ from $\mathbf{A}_\mathrm{VC}$.
\end{example}

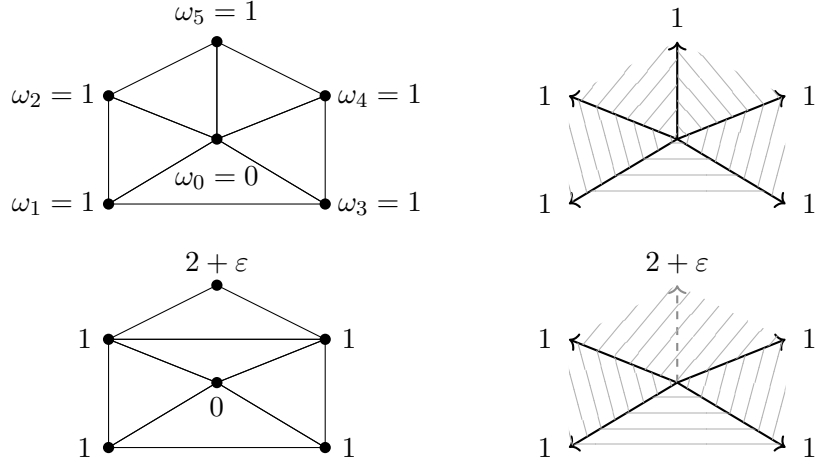
\begin{figure}
    \centering
    \begin{tikzpicture}[scale=1.8, every node/.style={circle, fill=black, inner sep=1.5pt}]
    
        \begin{scope}[shift={(0,0)}, scale=0.4]
            \coordinate (v1) at (1,1);
            \coordinate (v2) at (1,3);
            \coordinate (v3) at (3,2.2);
            \coordinate (v4) at (3,4);
            \coordinate (v5) at (5,1);
            \coordinate (v6) at (5,3);
    
            \draw (v1) -- (v2) -- (v3) -- cycle;
            \draw (v1) -- (v3) -- (v5) -- cycle;
            \draw (v2) -- (v3) -- (v4) -- cycle;
            \draw (v3) -- (v4) -- (v6) -- cycle;
            \draw (v3) -- (v5) -- (v6) -- cycle;
    
            \node[label=left:{$\omega_1=1$}] at (v1) {};
            \node[label=left:{$\omega_2=1$}] at (v2) {};
            \node[label={[yshift=6pt]below:{$\omega_0=0$}}] at (v3) {};
            \node[label={[yshift=-10pt]above:{$\omega_5=1$}}] at (v4) {};
            \node[label=right:{$\omega_3=1$}] at (v5) {};
            \node[label=right:{$\omega_4=1$}] at (v6) {};
        \end{scope}
            
        \begin{scope}[shift={(2*\dx,0)}, scale=0.4]
            \coordinate (v1) at (1,1);
            \coordinate (v2) at (1,3);
            \coordinate (v3) at (3,2.2);
            \coordinate (v4) at (3,4);
            \coordinate (v5) at (5,1);
            \coordinate (v6) at (5,3);
    
            \draw[thick, ->] (v3) -- (v1);
            \draw[thick, ->] (v3) -- (v2);
            \draw[thick, ->] (v3) -- (v4);
            \draw[thick, ->] (v3) -- (v5);
            \draw[thick, ->] (v3) -- (v6);

            \fill[pattern={Lines[angle=+105, distance=6pt]}, pattern color=gray!60] (v1) -- (v2) -- (v3) -- cycle;
            \fill[pattern={Lines[angle=+50, distance=6pt]}, pattern color=gray!60] (v2) -- (v3) -- (v4) -- cycle;
            \fill[pattern={Lines[angle=-50, distance=6pt]}, pattern color=gray!60] (v3) -- (v4) -- (v6) -- cycle;
            \fill[pattern={Lines[angle=-105, distance=6pt]}, pattern color=gray!60] (v3) -- (v5) -- (v6) -- cycle;
            \fill[pattern={Lines[angle=0, distance=6pt]}, pattern color=gray!60] (v1) -- (v3) -- (v5) -- cycle;
    
            \node[fill=none, label=left:1] at (v1) {};
            \node[fill=none, label=left:1] at (v2) {};
            \node[fill=none] at (v3) {};
            \node[fill=none, label=above:1] at (v4) {};
            \node[fill=none, label=right:1] at (v5) {};
            \node[fill=none, label=right:1] at (v6) {};
        \end{scope}
            
        \begin{scope}[shift={(0,-1.2*\dy)}, scale=0.4]
            \coordinate (v1) at (1,1);
            \coordinate (v2) at (1,3);
            \coordinate (v3) at (3,2.2);
            \coordinate (v4) at (3,4);
            \coordinate (v5) at (5,1);
            \coordinate (v6) at (5,3);
    
            \draw (v1) -- (v2) -- (v3) -- cycle;
            \draw (v1) -- (v3) -- (v5) -- cycle;
            \draw (v2) -- (v3) -- (v6) -- cycle;
            \draw (v2) -- (v4) -- (v6) -- cycle;
            \draw (v3) -- (v5) -- (v6) -- cycle;
    
            \node[label=left:1] at (v1) {};
            \node[label=left:1] at (v2) {};
            \node[label=below:0] at (v3) {};
            \node[label={[yshift=-3mm]above:$2+\epsilon$}] at (v4) {};
            \node[label=right:1] at (v5) {};
            \node[label=right:1] at (v6) {};
        \end{scope}
            
        \begin{scope}[shift={(2*\dx,-1.2*\dy)}, scale=0.4]
            \coordinate (v1) at (1,1);
            \coordinate (v2) at (1,3);
            \coordinate (v3) at (3,2.2);
            \coordinate (v4) at (3,4);
            \coordinate (v5) at (5,1);
            \coordinate (v6) at (5,3);
    
            \draw[thick, ->] (v3) -- (v1);
            \draw[thick, ->] (v3) -- (v2);
            \draw[thick, dashed, gray!90, ->] (v3) -- (v4);
            \draw[thick, ->] (v3) -- (v5);
            \draw[thick, ->] (v3) -- (v6);

            \fill[pattern={Lines[angle=+105, distance=6pt]}, pattern color=gray!60] (v1) -- (v2) -- (v3) -- cycle;
            \fill[pattern={Lines[angle=+50, distance=6pt]}, pattern color=gray!60] (v2) -- (v3) -- (v6) -- (v4) -- cycle;
            \fill[pattern={Lines[angle=-105, distance=6pt]}, pattern color=gray!60] (v3) -- (v5) -- (v6) -- cycle;
            \fill[pattern={Lines[angle=0, distance=6pt]}, pattern color=gray!60] (v1) -- (v3) -- (v5) -- cycle;
    
            \node[fill=none, label=left:1] at (v1) {};
            \node[fill=none, label=left:1] at (v2) {};
            \node[fill=none] at (v3) {};
            \node[fill=none, label={[yshift=-3mm]above:$2+\epsilon$}] at (v4) {};
            \node[fill=none, label=right:1] at (v5) {};
            \node[fill=none, label=right:1] at (v6) {};
        \end{scope}
    \end{tikzpicture}
    \caption{Left: regular triangulations $\mathscr{T}$ of a point configuration $\mathbf{A}$, labeled with heights $\omega_j$ which generate $\mathscr{T}$. Right: the associated vector configuration $\mathbf{A}\setminus0$ (\textit{not} the homogenization), lifted with corresponding heights.}
    \label{fig:starpc=vc}
\end{figure}

While \cref{ex:non_star} maps a non-star triangulation of a point configuration to a non-fine triangulation of a vector configuration, there are also cases where one can map a non-star triangulation of a point configuration to a fine triangulation of a vector configuration. These will be the ``vex triangulations'' that we study in \cref{sec:ref_poly}.

The lessons from \cref{ex:non_star} are that one \textit{can} map back and forth between triangulations of point configurations $\mathbf{A}_\mathrm{PC}$ (which contain the origin) and their associated vector configurations $\mathbf{A}_\mathrm{VC} = \mathbf{A}_\mathrm{PC} \setminus 0$, preserving the star simplices. We formalize these lessons below.

\begin{prop}
\label{thm:pc_to_vc}
    For any $\mathscr{T}(\mathbf{A}_\mathrm{PC},\omega)$ with $\omega_0\leq\omega_j$, lifting the analogous configuration $\mathbf{A}_\mathrm{VC}=\mathbf{A}_\mathrm{PC}\setminus0$ by $\omega_j-\omega_0$ generates
    \begin{equation}
        \mathscr{T}(\mathbf{A}_\mathrm{VC}, \omega_j-\omega_0) = \setbuilder{\sigma}{\sigma\cup\{0\}\in\mathscr{T}(\mathbf{A}_\mathrm{PC},\omega)}.
    \end{equation}
\end{prop}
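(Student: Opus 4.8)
The plan is to compare the two lifting constructions directly, face by face. First I would normalize the height vector: replacing $\omega$ by $\omega-\omega_0\mathbf{1}$ translates the lifted configuration $\mathbf{A}_\mathrm{PC}^\omega$ vertically in $\mathbb{R}^{m+1}$, so it changes neither the set of lower faces nor the induced subdivision, while the shifted heights $\omega_j-\omega_0$ on $\mathbf{A}_\mathrm{VC}$ are unaffected. Thus without loss of generality $\omega_0=0$ and $\omega_j\geq 0$ for all $j$; in particular $\omega|_{J\setminus\{0\}}\geq 0$ is a valid height vector for the vector configuration (Thm.~4.1.39 of \cite{De_Loera2010-ss}, quoted above), so $\mathscr{T}(\mathbf{A}_\mathrm{VC},\omega|_{J\setminus\{0\}})$ is well-defined. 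The claim to prove is then that the lower faces of $\conv{\mathbf{A}_\mathrm{PC}^\omega}$ containing the lifted origin $(\mathbf{0},0)$ are in label-preserving bijection, up to the label $0$, with the lower faces of $\cone{\mathbf{A}_\mathrm{VC}^\omega}$.

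Next I would describe both families of lower faces by their supporting functionals, normalized to unit coefficient in the lifted direction. For $\conv{\mathbf{A}_\mathrm{PC}^\omega}$, the cells of $\mathscr{T}(\mathbf{A}_\mathrm{PC},\omega)$ are the label sets $\{i\in J:\omega_i-\langle a,\mathbf{A}_i\rangle=\mu_a\}$, where $a\in\mathbb{R}^m$ and $\mu_a:=\min_{i\in J}(\omega_i-\langle a,\mathbf{A}_i\rangle)$. Because $\mathbf{A}_0=\mathbf{0}$ and $\omega_0=0$, the label $0$ contributes the value $0$ to this minimum, so $\mu_a\leq 0$ always and the label $0$ lies in the cell of $a$ if and only if $\mu_a=0$, i.e.\ if and only if $\langle a,\mathbf{A}_j\rangle\leq\omega_j$ for every $j$ — the constraint at $j=0$ being automatic. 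Hence the cells of $\mathscr{T}(\mathbf{A}_\mathrm{PC},\omega)$ containing $0$ are exactly the sets $\{0\}\cup\{j\in J\setminus\{0\}:\langle a,\mathbf{A}_j\rangle=\omega_j\}$ as $a$ ranges over $\{a:\langle a,\mathbf{A}_j\rangle\leq\omega_j\ \forall\,j\in J\setminus\{0\}\}$. For $\cone{\mathbf{A}_\mathrm{VC}^\omega}$, a lower face is the zero set on the cone of a functional $(x,h)\mapsto h-\langle a,x\rangle$ that is nonnegative on all generators; this says precisely $\langle a,\mathbf{A}_j\rangle\leq\omega_j$ for all $j\in J\setminus\{0\}$, with the face spanned by those $j$ attaining equality — the same data, minus the label $0$. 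So $a\mapsto a$ gives the bijection, and projecting to $\mathbb{R}^m$ yields: $\sigma\in\mathscr{T}(\mathbf{A}_\mathrm{PC},\omega)$ with $0\in\sigma$ if and only if $\sigma\setminus\{0\}\in\mathscr{T}(\mathbf{A}_\mathrm{VC},\omega|_{J\setminus\{0\}})$. Undoing the normalization replaces $\omega$ by $\omega_j-\omega_0$, which is the claimed equality.

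To finish I would note two minor points. First, the correspondence respects simpliciality: since $\mathbf{A}_0=\mathbf{0}$, a cell $\sigma$ with $0\in\sigma$ is a simplex iff $\sigma\setminus\{0\}$ is linearly independent, which is exactly the condition for the matching cone to be simplicial, so triangulations correspond to triangulations, consistent with the notation $\mathscr{T}$. Second, there is a cleaner conceptual gloss on the bijection: $\cone{\mathbf{A}_\mathrm{VC}^\omega}$ is the tangent cone of $\conv{\mathbf{A}_\mathrm{PC}^\omega}$ at $(\mathbf{0},0)$ — indeed $\mathrm{cone}\big(\conv{\mathbf{A}_\mathrm{PC}^\omega}-(\mathbf{0},0)\big)$ is generated by the vectors $(\mathbf{A}_j,\omega_j)$, $j\neq 0$ — and faces through a fixed point of a polytope correspond to faces of the tangent cone there, a correspondence preserving ``lowerness'' because $(\mathbf{0},0)$ minimizes the lifted coordinate.

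The step needing the most care is the sign-and-normalization bookkeeping in the middle paragraph, and in particular the observation that the inequality coming from the origin's label is vacuous: this is exactly where the hypothesis $\omega_0\leq\omega_j$ enters, both to make that constraint trivial at the right value and to guarantee that the vector-configuration lift is admissible in the first place. Everything else is routine once the two families of lower faces are written in the same form.
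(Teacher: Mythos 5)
Your proof is correct, and it takes a genuinely different route from the paper's. The paper argues simplex by simplex and by contradiction: given a star simplex $\sigma\cup\{0\}$ of $\mathscr{T}(\mathbf{A}_\mathrm{PC},\omega)$, it first shows that $\cone{\sigma}$ must be contained in some lower facet of $\mathbf{A}_\mathrm{VC}^{\omega_j-\omega_0}$ (by tracking a point of minimal lifted height), and then that no further point can be lifted onto the same hyperplane (using maximality of the cell); the reverse inclusion is left implicit, to be supplied by a covering argument. You instead work dually, with supporting functionals: after normalizing $\omega_0=0$ you observe that the inequality system $\langle a,\mathbf{A}_j\rangle\le\omega_j$ ($j\ne 0$) simultaneously parameterizes the lower faces of $\conv{\mathbf{A}_\mathrm{PC}^{\omega}}$ through the lifted origin and the lower faces of $\cone{\mathbf{A}_\mathrm{VC}^{\omega}}$, with identical label sets up to the label $0$ --- equivalently, $\cone{\mathbf{A}_\mathrm{VC}^{\omega}}$ is the tangent cone of $\conv{\mathbf{A}_\mathrm{PC}^{\omega}}$ at $(\mathbf{0},0)$. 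This buys you both inclusions of the claimed set equality at once and in all cell dimensions, and it makes the role of the hypothesis $\omega_0\le\omega_j$ transparent: it is exactly what renders the origin's constraint vacuous and the shifted heights admissible for the vector configuration. The paper's argument is more elementary and pointwise but only explicitly establishes one containment; your added observation that simpliciality is preserved (affine independence of $\sigma\cup\{0\}$ with $\mathbf{A}_0=\mathbf{0}$ is linear independence of $\sigma$) is a worthwhile detail the paper omits.
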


\begin{prop}
\label{thm:vc_to_pc}
    For any $\mathscr{T}(\mathbf{A}_\mathrm{VC},\omega)$ with $\omega\geq0$, lifting the analogous configuration $\mathbf{A}\cup\{\mathbf{0}\}$ by $(0,\omega)$ generates a regular subdivision $\mathscr{S}(\mathbf{A}_\mathrm{PC})$ satisfying
    \begin{equation}
        \setbuilder{\sigma\cup\{0\}}{\sigma\in\mathscr{T}(\mathbf{A}_\mathrm{VC},\omega)} \subseteq \mathscr{S}(\mathbf{A}_\mathrm{PC},(0,\omega)).
    \end{equation}
    We stress: $\mathscr{S}(\mathbf{A}_\mathrm{PC})$ can be non-star.
\end{prop}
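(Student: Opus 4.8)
The plan is to establish the inclusion directly from the supporting-functional (``lower face'') description of regular subdivisions, as a companion to \cref{thm:pc_to_vc}: for each simplex of $\mathscr{T}(\mathbf{A}_\mathrm{VC},\omega)$ I will exhibit an affine functional on the lifted point configuration $\mathbf{A}_\mathrm{PC}^{(0,\omega)}$ whose induced lower face is exactly the corresponding cell. First I would record the two characterizations. Because $\omega\geq0$, the height vector is valid (take $\psi=0$ in the criterion recalled in \cref{sec:regular}), so $\mathscr{T}(\mathbf{A}_\mathrm{VC},\omega)$ is obtained by projecting the lower facets of $\cone{\mathbf{A}_\mathrm{VC}^\omega}\subset\mathbb{R}^{m+1}$, with $m=\dim\mathbf{A}_\mathrm{VC}$. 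Normalizing the inward normal of such a facet to have last coordinate $1$ and writing it as $(-v_\sigma,1)$, a maximal simplex $\sigma\in\mathscr{T}(\mathbf{A}_\mathrm{VC},\omega)$ is exactly the set on which some $v_\sigma\in\mathbb{R}^m$ satisfies $\langle v_\sigma,\mathbf{A}_j\rangle\leq\omega_j$ for all $j\in J$ with equality --- and equality holds on no larger set precisely because $\mathscr{T}$ is a \emph{triangulation} (each lower facet carries exactly $m$ lifted points), which is the one place the triangulation hypothesis is used rather than mere regularity. On the other side, $F\subseteq\mathbf{A}_\mathrm{PC}$ is a cell of $\mathscr{S}(\mathbf{A}_\mathrm{PC},(0,\omega))$ exactly when some affine functional $a(x)=\langle v,x\rangle+b$ obeys $a(\mathbf{0})\leq 0$ and $a(\mathbf{A}_j)\leq\omega_j$ for all $j$, with equality attained precisely on the points of $F$.

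The key step is then immediate. Given maximal $\sigma\in\mathscr{T}(\mathbf{A}_\mathrm{VC},\omega)$, take $a(x)=\langle v_\sigma,x\rangle$, i.e.\ $b=0$: then $a(\mathbf{0})=0$ with equality, and $a(\mathbf{A}_j)\leq\omega_j$ with equality exactly for $j\in\sigma$, so the lower face of $\conv{\mathbf{A}_\mathrm{PC}^{(0,\omega)}}$ cut out by $a$ has point set exactly $\sigma\cup\{0\}$; hence $\sigma\cup\{0\}\in\mathscr{S}(\mathbf{A}_\mathrm{PC},(0,\omega))$ (and is a full-dimensional cell, since $\mathbf{0}$ together with the linearly independent $\{\mathbf{A}_j\}_{j\in\sigma}$ is affinely independent). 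For a non-maximal $\sigma$, pick a maximal $\tau\in\mathscr{T}(\mathbf{A}_\mathrm{VC},\omega)$ with $\sigma\subseteq\tau$; since $\conv{\tau\cup\{0\}}$ is a simplex and $\sigma\cup\{0\}\subseteq\tau\cup\{0\}$, the set $\conv{\sigma\cup\{0\}}$ is a face of $\conv{\tau\cup\{0\}}$, so $\sigma\cup\{0\}$ lies in $\mathscr{S}(\mathbf{A}_\mathrm{PC},(0,\omega))$ by closure under faces (\cref{def:subdivision}). This gives the claimed inclusion. For the closing remark I would note that the cells $\conv{\sigma\cup\{0\}}$ cover only a star-shaped neighborhood of $\mathbf{0}$ inside $\conv{\mathbf{A}_\mathrm{PC}}$, which need not be all of $\conv{\mathbf{A}_\mathrm{PC}}$ (the cones $\cone{\sigma}$ already tile $\cone{\mathbf{A}_\mathrm{VC}}$, strictly larger than $\conv{\mathbf{A}_\mathrm{PC}}$), so $\mathscr{S}(\mathbf{A}_\mathrm{PC},(0,\omega))$ must in general carry additional maximal cells not incident to $\mathbf{0}$; an explicit non-star witness is obtained exactly as in \cref{ex:non_star}, read in reverse.

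The main obstacle is purely one of conventions: one must fix signs so that an inward normal of a \emph{lower facet of the cone} $\cone{\mathbf{A}_\mathrm{VC}^\omega}$ translates verbatim into a \emph{lower-face supporting functional of the polytope} $\conv{\mathbf{A}_\mathrm{PC}^{(0,\omega)}}$, with the origin placed at height $0$. Once that translation is nailed down the rest is a line-by-line check; the substantive point is simply that enlarging $\mathbf{A}_\mathrm{VC}$ to $\mathbf{A}_\mathrm{PC}=\mathbf{A}_\mathrm{VC}\cup\{\mathbf{0}\}$ adds the datum ``height zero at the origin'', which is precisely the datum already implicit in the cone picture, so the two subdivisions agree on the part of $\mathscr{S}(\mathbf{A}_\mathrm{PC},(0,\omega))$ that touches $\mathbf{0}$.
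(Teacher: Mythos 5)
Your proof is correct. Note that the paper does not actually write out a proof of \cref{thm:vc_to_pc}: it proves only the companion statement \cref{thm:pc_to_vc} and asserts that this direction is ``analogous.'' The paper's written proof proceeds by contradiction in two stages (first that $\ConeOp(\sigma)$ is contained in a lower facet of the lifted cone, then that no extra point lands on the supporting hyperplane), whereas you argue constructively: you take the normal $v_\sigma$ of the lower facet of $\cone{\mathbf{A}_\mathrm{VC}^\omega}$ supporting $\sigma$ and reuse it verbatim as a linear (offset-zero) supporting functional for the lifted polytope $\conv{\mathbf{A}_\mathrm{PC}^{(0,\omega)}}$, so that $a(\mathbf{0})=0$ pins the origin onto the face and the contact set among the $\mathbf{A}_j$ is exactly $\sigma$ because $\sigma$ is a cell of the vector-configuration subdivision. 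This is the cleaner route for this direction of the correspondence (the supporting data is handed to you by hypothesis, so there is nothing to extract by contradiction), and your handling of non-maximal cells via face-closure and your explanation of why the resulting subdivision need not be star are both fine. The only point worth making explicit if you wrote this up formally is the standard fact that a face of the lifted polytope exposed by a normal with strictly positive last coordinate is necessarily a face of some \emph{lower facet} and hence genuinely a cell of $\mathscr{S}(\mathbf{A}_\mathrm{PC},(0,\omega))$; for your maximal cells this is immediate since $\sigma\cup\{0\}$ is full-dimensional and therefore already a facet of the lower hull.
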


These two propositions have analogous proofs, so we only show \cref{thm:pc_to_vc} in detail. The result of these theorems is that one can map between regular triangulations of the associated point and vector configurations while preserving the star of $0$ in $\mathscr{T}(\mathbf{A}_\mathrm{PC})$. We will have more to say about these maps in \cref{subsubsec:upper}.

\begin{proof}[Proof of \cref{thm:pc_to_vc}]
    Let $\sigma\cup\{0\}\in\mathscr{T}(\mathbf{A}_\mathrm{PC},\omega)$. Call this a ``star simplex'' of $\mathscr{T}(\mathbf{A}_\mathrm{PC},\omega)$. One needs to show that $\sigma$ corresponds to a lower facet of $\mathbf{A}_\mathrm{VC}^{\omega_j-\omega_0}$.

    First we show that, upon lifting, $\ConeOp(\sigma)$ is \textit{contained in} a lower facet of $\mathbf{A}_\mathrm{VC}^{\omega_j-\omega_0}$. Suppose, for the sake of contradiction, that $\ConeOp(\sigma)$ was not contained in a lower facet of $\mathbf{A}_\mathrm{VC}^{\omega_j-\omega_0}$. Then, there exists a $p\in\ConeOp(\sigma)$ that, when lifted to $p^{\omega_j-\omega_0}$, is also not contained in a lower facet. Since $\omega_j-\omega_0$ is a valid, non-negative height vector, there is a minimal $h$ such that $(p,h)\in\ConeOp(\mathbf{A}_\mathrm{VC}^{\omega_j-\omega_0})$. Call such a point $p^*$. Clearly, then, $p^*$ is contained in a lower facet, $f$, of $\mathbf{A}_\mathrm{VC}^{\omega_j-\omega_0}$. Let $\sigma'$ be the subconfiguration corresponding to the rays of $\mathbf{A}_\mathrm{VC}^{\omega_j-\omega_0}$ defining $f$. By scaling $p$ such that $p\in\ConvOp(\sigma\cup\{0\})\cap\ConvOp(\sigma'\cup\{0\})$, one observes that $\sigma\cup\{0\}$ likewise is not a lower facet of $\mathbf{A}_\mathrm{PC}^\omega$, a contradiction.

    We now must show that $j\notin\sigma$ implies that $\mathbf{A}_j$ is lifted above the linear hyperplane spanned by $\sigma$. This will show that $\sigma$ is not just contained in a lower facet of $\mathbf{A}_\mathrm{VC}^{\omega_j-\omega_0}$, but that it \textit{is} a lower facet. Assume, for the sake of contradiction, that $\omega_j$ lifts $\mathbf{A}_j$ onto the hyperplane spanned by $\sigma$. In terms of $\mathscr{T}(\mathbf{A}_\mathrm{PC},\omega)$, this shows that $\sigma\cup\{0\}$ is strictly contained in a non-simplicial cell of $\mathscr{T}(\mathbf{A}_\mathrm{PC},\omega)$ which also contains $j$ (and maybe other points). This is a contradiction since $\sigma\cup\{0\}$ is a maximal cell of $\mathscr{T}(\mathbf{A}_\mathrm{PC},\omega)$.
\end{proof}

\Cref{ex:non_star} also touches upon an interesting and important note: every subdivision of a totally cyclic vector configuration, regardless of fineness, has the same support. Hence \cref{eq:multi_vc} equivalently defines a triangulation of $\mathbf{A}_\mathrm{VC}$ or of $\mathbf{A}_\mathrm{VC}\setminus5$. These vector configurations have different corresponding point configurations, either $\mathbf{A}_\mathrm{PC} \cup\{\mathbf{0\}}$ or $(\mathbf{A}_\mathrm{PC}\setminus5)\cup\{\mathbf{0\}}$, respectively. Only for the latter configuration does \cref{eq:multi_vc} correspond to an FRST. In this way, the most natural point configuration corresponding to some triangulation $\mathscr{T}(\mathbf{A}_\mathrm{VC})$ is that of the vectors \textit{used} in $\mathscr{T}(\mathbf{A}_\mathrm{VC})$ (and the origin).

Thus, by performing a deletion/insertion flip in $\mathbf{A}_\mathrm{VC}$ (specifically, of what would be a vertex of $\conv{\mathbf{A}_\mathrm{VC}}$), one can transform between FRSTs of different polytopes. In fact, \textit{any} collection FRSTs can trivially be connected in this way (just consider the union of their vector configurations). Such a transformation is particularly interesting when it occurs via a single flip between two reflexive $4$D polytopes (this occurs). As we will elaborate on in \cref{sec:sec_fan}, flips induce birational maps of toric varieties, which in turn can give rise to such maps between their anticanonical hypersurfaces. In this way, one may be able to show that these insertion/deletion flips constitute the building blocks of Reid's fantasy --- the conjecture that all Calabi--Yau threefolds are birational \cite{reid1987moduli}, or are continuously connected in the moduli space --- within the Kreuzer--Skarke database.

This concludes the review. All-in-all, we have the dictionary given by \cref{tab:pc_vc_dict} between point configurations and vector configurations.
\begin{table}[]
    \centering
    {
    \renewcommand{\arraystretch}{1.2}
    \begin{tabular}{|>{\centering\arraybackslash}m{4.25cm}|
                >{\centering\arraybackslash}m{4.25cm}|
                >{\centering\arraybackslash}m{4.25cm}|}
        \hline
        & Point Configuration & Vector Configuration \\
        \hline 
        \hline
        elements & points, $(\mathbf{A}_\mathrm{PC})_j$ & vectors, $(\mathbf{A}_\mathrm{VC})_j$\\
        support & convex hull, $\conv{\mathbf{A}_\mathrm{PC}}$ & positive hull, $\cone{\mathbf{A}_\mathrm{VC}}$\\
        subdivison & polytopal complex & polyhedral fan\\
        triangulation & simplicial complex & simplicial fan\\
        what defines a fan & star subdivision & any subdivision\\
        height-space & $\supp{\Sigma} = \mathbb{R}^{|\mathbf{A}|}$ & $\mathbb{R}_{\geq0}^{|\mathbf{A}|} \subseteq \supp{\Sigma} \subseteq\mathbb{R}^{|\mathbf{A}|}$\\
        correspondence of configs (homogenization) & \multicolumn{2}{c|}{$\mathbf{A}_\mathrm{PC} \simeq \mathbf{A}^\mathbf{1}_\mathrm{VC}$} \\
        correspondence of triangulations & \multicolumn{2}{c|}{$\mathrm{st}_{\mathscr{T}(\mathbf{A}_\mathrm{PC})}(0) \simeq \mathscr{T}(\mathbf{A}_\mathrm{VC})$} \\
        \hline
    \end{tabular}
    }
    \caption{Summary of the structure of point and vector configurations.}
    \label{tab:pc_vc_dict}
\end{table}

\section{Review of Toric Geometry}

\label{sec:toric}

In this section, we review toric geometry and set notation, following \cite{cls}. Complementary introductions from various perspectives can be found, e.g., in \cite{fulton1993introduction, Hori:2003ic, Closset:2009sv}. All results stated here are standard. We assume some familiarity with algebraic geometry.

\subsection{Toric Varieties}

A \textit{toric variety} is an irreducible variety $V$ containing a dense algebraic torus $T \cong (\mathbb{C}^*)^n$ such that the action of $T$ on itself extends algebraically to $V$.
To a torus $(\mathbb{C}^*)^n$ we can associate characters and one-parameter subgroups: group homomorphisms to and from $\mathbb{C}^*$, respectively. These are given by dual lattices of Laurent monomials. Explicitly, characters take the form
\begin{equation}
    (t_1, \dots, t_n) \mapsto \prod_{i=1}^n t_i^{a_i}
\end{equation}
for some vector $a = (a_1, \dots, a_n) \in \mathbb{Z}^n$. Likewise, one-parameter subgroups take the form $s \mapsto (s^{b_1}, \dots, s^{b_n})$ for some $b = (b_1, \dots, b_n) \in \mathbb{Z}^n$. In particular, then, the torus $T \subset V$ has a \textit{character lattice} $M \cong \mathbb{Z}^n$ and a \textit{one-parameter subgroup lattice} $N \cong \mathbb{Z}^n$. It is useful to note that $T \cong \mathrm{Hom}(M, \mathbb{C}^*)$ by $t \mapsto (\chi \mapsto \chi(t))$ for $\chi \in M$.

We will only consider toric varieties which arise from \textit{fans} $\Sigma$ contained in $N_\mathbb{R} := N \otimes_{\mathbb{Z}} \mathbb{R}$ (i.e., separated normal toric varieties \cite{normalHasFan}), by which we mean a finite collection of rational polyhedral cones in $N_\mathbb{R}$ closed under taking faces and such that pairwise intersections are faces of each cone.\footnote{We note that if the cones are not pointed, there are some subtleties. In \cite{cls} such fans are called generalized fans; these arise at the boundary of the toric effective cone/secondary fan, and we will return to them in \cref{sec:translate}, but they will not play a major role.} Note that subdivisions of vector configurations with pointed cones induce fans (by taking supports), but these objects are formally different because fans do not remember vectors and cells: they only consist of the geometric regions (i.e., the cones). The building blocks of toric varieties are \textit{affine toric varieties} which are in correspondence to cones in $N_\mathbb{R}$: fans encode general toric varieties by prescribing a set of affine toric varieties (given by the individual cones) and how they are glued together (namely, along the affine toric varieties associated to intersections of cones). However, we will not pursue this construction: rather, we will construct toric varieties as quotients.\footnote{We stress that toric varieties only depend on the cones in a fan, as compared to triangulations of vector configurations, which depend on cells (subsets of vectors) rather than merely supports of cells (conical hulls of cells, or cones). This is elaborated upon in \cref{sec:translate}.}

Fix an $n$-dimensional lattice $N$ and a fan $\Sigma \subset N_\mathbb{R}$. Let $\Sigma(n)$ denote the set of $n$-dimensional cones in $\Sigma$, and let $u_\rho$ denote the \textit{minimal generator} of the ray $\rho \in \Sigma(1)$, defined as the generator of $N \cap \rho$. The associated toric variety $V_\Sigma$ with dense torus $T$ can be constructed as a quotient, written as follows.
\begin{equation}
    \label{eq:quot}
    V_\Sigma = \frac{\mathbb{C}^{|\Sigma(1)|} \setminus Z(\Sigma)}{G}.
\end{equation}
Here, $Z(\Sigma) \subset \mathbb{C}^{|\Sigma(1)|}$ is the \textit{exceptional set}, which must be removed for a well-behaved (i.e., separated/Hausdorff) quotient, and $G$ is a group acting on $\mathbb{C}^{|\Sigma(1)|}$. This is a natural generalization of projective space, where one removes the origin and quotients by the torus $\mathbb{C}^*$ acting by overall scaling. The group $G$ and other coarse topological invariants of $V_\Sigma$ are determined by the rays $\Sigma(1)$ alone. For example, $V_\Sigma$ is compact if $\Sigma$ is complete (i.e., $\cup_{\sigma \in \Sigma} \sigma = N_\mathbb{R}$). In contrast, the exceptional set $Z(\Sigma)$ and the finer topological data of $V_\Sigma$ --- especially its intersection theory --- are determined by the higher-dimensional cones.

Let us now define $G$ in \cref{eq:quot}, though it will take a little work. Following \S5.1 in \cite{cls}, $\mathbb{C}^{|\Sigma(1)|}$ is acted upon naturally by the torus $(\mathbb{C}^*)^{|\Sigma(1)|}$: in particular, we have the following short exact sequence.
\begin{equation}
    0 \to G \stackrel{f}{\to} (\mathbb{C}^*)^{|\Sigma(1)|} \stackrel{g}{\to} T \to 0.
\end{equation}
This is the intuitive result that the torus $T$ of $V_\Sigma$ consists of those elements of $(\mathbb{C}^*)^{|\Sigma(1)|}$ that were not rendered trivial upon taking the quotient by $G$. We have not yet understood any of these maps, though: in this direction, we can re-write the sequence as
\begin{equation}
    0 \to \mathrm{Hom}(\mathrm{Cl}(V_\Sigma), \mathbb{C}^*) \stackrel{f}{\to} \mathrm{Hom}(\mathbb{Z}^{|\Sigma(1)|}, \mathbb{C}^*) 
    \stackrel{g}{\to} \mathrm{Hom}(M, \mathbb{C}^*) \to 0.
\end{equation}
Here, $\mathrm{Hom}$ denotes the group of homomorphisms between a pair of groups. At this point, $\mathrm{Cl}(V_\Sigma)$ is \textit{defined} by this sequence (i.e., as the group whose characters are given by $\ker g$), but it will end up being the divisor class group of $V_\Sigma$, hence the notation. This sequence results from applying $\mathrm{Hom}(-, \mathbb{C}^*)$ to the following sequence (\cite{cls}, Th. 4.1.3).
\begin{equation}
    \label{eq:SES}
    0 \to M \stackrel{\alpha}{\longrightarrow} \mathbb{Z}^{|\Sigma(1)|} \stackrel{\beta}{\to} \mathrm{Cl}(V_\Sigma) \to 0.
\end{equation}
This is one of the more important sequences in toric geometry. We will interpret it momentarily: for now, though, let us focus on defining $G$. The linear map $\alpha$ is determined by $|\Sigma(1)|$ vectors in $M^* = N$. We already have a natural set of such vectors, namely the minimal generators $u_\rho$, and these are indeed the correct choice: $\alpha$ acts as
\begin{equation}
    m \mapsto (\langle m, u_\rho \rangle)_{\rho \in \Sigma(1)}.
\end{equation}
From the sequence, we have that $\mathrm{Cl}(V_\Sigma) = \mathrm{coker}(\alpha) \cong \mathbb{Z}^{|\Sigma(1)|} / M$ and $G$ is formally defined as $\mathrm{Hom}(\mathrm{Cl}(V_\Sigma), \mathbb{C}^*)$, though this is rather abstract. 

It's worthwhile to unpack the definition of $G$, so let's take some time to do this. Let us understand how $\alpha$ and $\beta$ induce $f$ and $g$ in the original sequence. First, exploiting $T \cong \mathrm{Hom}(M, \mathbb{C}^*)$, we can write the map $g$ as 
\begin{equation}
    (t_\rho)_{\rho \in \Sigma(1)} \mapsto \Big( m \mapsto \prod_\rho t_\rho^{\langle m, u_\rho \rangle} \Big).
\end{equation}
This arises just from unraveling definitions, though it is notationally involved. We can then readily express $G$ as the kernel of $g$: that is, points in $\mathbb{C}^{|\Sigma(1)|}$ related by elements in $\ker g$ are identified in $V_\Sigma$. 
\begin{equation}
    \begin{aligned}
        G 
        &= \ker g \\
        &= \Big\{ (t_\rho)_{\rho \in \Sigma(1)} \in (\mathbb{C}^*)^{|\Sigma(1)|} \; \Big| \;  \prod_\rho t_\rho^{\langle m, u_\rho \rangle} = 1, \forall m \in M\Big \} \\
        &= \Big\langle (\lambda^{q_\rho})_{\rho \in \Sigma(1)} \in (\mathbb{C}^*)^{|\Sigma(1)|} \; \Big| \; \lambda \in \mathbb{C}^*, q_\rho \in \mathbb{Z}^{|\Sigma(1)|}, \sum_\rho q_\rho u_\rho = 0 \Big\rangle.
    \end{aligned}
    \label{eq:G}
\end{equation}
The last equality is a convenient expression of the generators of $G$ in terms of linear relations of the $u_\rho$ (direct substitution readily confirms that these elements belong to $\ker g$). In particular, it suffices to let the $q_\rho$ be a \textit{basis} of such linear relations. These impose \textit{scaling relations} on $\mathbb{C}^{|\Sigma(1)|}$, identifying $(x_\rho)_{\rho \in \Sigma(1)} \sim (\lambda^{q_\rho} x_\rho)_{\rho \in \Sigma(1)}$, generalizing the familiar single scaling relation of $\mathbb{P}^n$ (for which each $q_\rho = 1$). 

We can equivalently recover this final expression for $G$ by understanding $\beta$. Let us (non-canonically) choose group generators for the finitely generated Abelian group $\mathrm{Cl}(V_\Sigma)$, decomposing it as $\mathrm{Cl}(V_\Sigma) \cong \mathbb{Z}^k \times H$ for $H$ finite. Letting $N' = \mathbb{Z}\{u_\rho \; | \; \rho \in \Sigma(1)\}$, one can show that $H = N / N'$: we will assume $H = \{0\}$, as the free part of $\mathrm{Cl}(V_\Sigma)$ is most important for our purposes.\footnote{The case of $H \neq \{0\}$ is not much more difficult, just mildly cumbersome. All Weil divisors have a torsional component valued in $H$, the group $G$ also has the finite factor $H$, and as a consequence $V_\Sigma$ is non-simply connected with $\pi^1(V_\Sigma) = H$ (\cite{cls}, Th. 12.1.10). We are primarily interested in the class group because $\mathrm{Cl}(V_\Sigma)_\mathbb{R}$ houses the secondary fan: for this purpose, torsion is clearly irrelevant. Additionally, we comment that of the $\sim$400 million 4D reflexive polytopes, only 16 give rise to this finite factor \cite{Klemm:2004km, Batyrev:2005jc}.} Then the rows of the matrix representation $Q$ of $\beta : \mathbb{Z}^{|\Sigma(1)|} \to \mathbb{Z}^k$ furnish a basis of linear relations of the $u_\rho \in N$. Then $G \cong \mathrm{Hom}(\mathbb{Z}^k, \mathbb{C}^*) \cong (\mathbb{C}^*)^k$ and the induced map $f$ maps generators $\lambda e_i \in (\mathbb{C}^*)^k$ (for $e_i$ the $i$th standard basis vector) to $(\lambda^{Q_{i\rho}})_{\rho \in \Sigma(1)} \in (\mathbb{C}^*)^{|\Sigma(1)|}$, which are exactly the generators exhibited in \cref{eq:G}.

The upshot, then, is that both the free part of the group $G$ and the map $\beta$ are described by a basis of linear relations of the minimal generators $u_\rho$. It's worth seeing this in practice in \cref{ex:hirz_1}.

Now, with $G$ in hand, we can now turn our attention to the exceptional set $Z(\Sigma)$. To achieve this, we will define the \textit{irrelevant ideal} $B$, a square-free monomial ideal in the coordinate ring $S = \mathbb{C}[x_\rho \; | \; \rho \in \Sigma(1)]$ of the ``upstairs'' $\mathbb{C}^{|\Sigma(1)|}$. To a subset $\tau \subset \Sigma(1)$ we can associate a monomial $x^\tau = \prod_{\rho \in \tau} x_\rho \in S$, and we also define $\hat{\tau} := \Sigma(1) \setminus \tau$. Then we can write
\begin{equation}
    B = \left\langle x^{\hat{\tau}} \; | \; \tau \in \Sigma \right\rangle.
\end{equation}
In particular, it suffices to take $\tau \in \Sigma(n)$. The exceptional set is then 
\begin{equation}
    Z(\Sigma) = \mathbb{V}(B).
\end{equation} 
We can interpret this as follows. The affine open charts of $V_\Sigma$ are given by the affine toric varieties associated to the maximal cones $\sigma \in \Sigma(n)$, on which we impose $x_\rho \neq 0$ for $\rho \notin \sigma$. For this to indeed be an open cover, we cannot have points which do not belong to any chart. One can verify that $Z(\Sigma)$ is exactly the set of points in $\mathbb{C}^{|\Sigma(1)|}$ which do not fall in a chart.

While this is succinct, in the case that $\Sigma$ is simplicial we have an alternative definition for $Z(\Sigma)$ that elucidates the intersection theory of $V_\Sigma$. In such a case, we can define another square-free monomial ideal, the \textit{Stanley-Reisner ideal} $J \subset S$, given by
\begin{equation}
    J = \left\langle x^\tau \; \Big| \; \tau \subset \Sigma(1), \tau \notin \Sigma \right\rangle.
\end{equation}
in which case 
\begin{equation}
    \label{eq:primary_decomp}
    Z(\Sigma) = \bigcup_{\substack{\tau \subset \Sigma(1) \\ x^\tau \in J}} \mathbb{V}( \langle x_\rho \; | \; \rho \in \tau \rangle ).
\end{equation}
This is Alexander duality for monomial ideals: the minimal generators of $J$ encode the primary decomposition of $B$, and vice versa. Primary decomposition is convenient because it partitions the associated variety into irreducible components: this is the content of \cref{eq:primary_decomp}. Namely, the monomial generators $x^\tau$ of $J$ for $\tau = (\rho_1, \dots, \rho_m)$ encode the irreducible components $x_{\rho_1} = \dots = x_{\rho_m} = 0$ of $Z(\Sigma)$. This makes it especially clear what is going on: we exclude $x_{\rho_1} = \dots = x_{\rho_m} = 0$ from $V_\Sigma$ if $u_{\rho_1}, \dots, u_{\rho_m}$ do not span a cone in $\Sigma$.

Now let us spend some time elaborating on this construction. In the same way that the homogeneous coordinate ring $\mathbb{C}[x_0, \dots, x_n]$ is useful for studying $\mathbb{P}^n$ (e.g., for expressing its subvarieties), the \textit{homogeneous coordinate ring} or \textit{Cox ring} $S$ of $V_\Sigma$ will play a crucial role. The codimension-$m$ linear subspaces $\mathbb{V}(x_\rho \; | \; \rho \in \sigma) \subset \mathbb{C}^{|\Sigma(1)|}$ for $\sigma \in \Sigma(m)$ an $m$-cone in the fan descend to codimension-$m$ subvarieties of $V_\Sigma$. We call subvarieties of this type \textit{toric subvarieties}, as they are the prime torus-invariant subvarieties. We will also use the notation $D_\rho$ for $V_{\Sigma,\rho}$ with $\rho \in \Sigma(1)$ a one-cone (i.e., a ray). The $D_\rho$ are called \textit{prime toric divisors}. The discussion of the previous paragraph can be summarized as the statement that toric subvarieties intersect when all of their associated rays form a cone in the fan: in this way, the fan quite explicitly encodes the intersection structure of the toric variety.

Let us return to our claim that $\mathrm{Cl}(V_\Sigma)$ as defined above was the class group of Weil divisors on $V_\Sigma$. We recall that Weil divisors are formal integer linear combinations of irreducible complex codimension-one subvarieties in $V$, and the class group is an algebraic geometric analog to homology, given by the Weil divisors modulo principal divisors (those divisors given by a single meromorphic function). Indeed, if $V$ is smooth, $\mathrm{Cl}(V)$ is directly related to singular homology: $\mathrm{Cl}(V_\Sigma) \cong H^2(V_\Sigma,\mathbb{Z})$ (\cite{cls}, Th. 12.3.2). In particular, the (Poincar\'e duals of) divisor classes will all belong to the $(1,1)$ piece of the Hodge decomposition of de Rham cohomology.\footnote{There are technically some subtleties in thinking about differential forms on singular spaces, but we mostly care about differential forms on toric varieties for the purposes of restricting them to smooth hypersurfaces, so this won't impact us.} However, in general we only have $\mathrm{Pic}(V_\Sigma) \cong H^2(V_\Sigma,\mathbb{Z})$ for $\mathrm{Pic}(V_\Sigma) \subset \mathrm{Cl}(V_\Sigma)$ the Picard group of Cartier divisors classes on $V_\Sigma$. We will return to the Picard group later in the section. Sometimes we will use brackets to denote the class $[D]$ of a divisor $D$, while often for brevity we will let $D$ denote both a divisor and its class. 

Let us now interpret \cref{eq:SES}. Elements $a \in \mathbb{Z}^{|\Sigma(1)|}$ are interpreted as \textit{torus-invariant divisors}\footnote{We mention that it can be convenient in algebraic geometry to let divisors be rational or even real linear combinations of irreducible subvarieties: these are called $\mathbb{Q}$-divisors and $\mathbb{R}$-divisors, respectively. These won't be strictly necessary for us, but we refer to them when comparing with triangulation theory in \cref{tab:dict} for completeness. Torus-invariant divisors correspond to integral height vectors, while a general, real-valued height vector is naturally thought of as a torus-invariant $\mathbb{R}$-divisor.} $D_a = \sum_\rho a_\rho D_\rho$ given by $\mathbb{V}(x^a)$, where $x^a$ is shorthand for $\prod_\rho x_\rho^{a_\rho}$. The map $\alpha$ maps a character to the divisor associated to its unique extension to all of $V_\Sigma$, which is a principal torus-invariant divisor. \cref{eq:SES} is then the statement that on a toric variety, the class group can be constructed as torus-invariant divisors modulo principal torus-invariant divisors. In particular, the class group is generated by torus-invariant divisors. \cref{eq:SES} also enables us to compute the class of a torus-invariant divisor $D$ by applying $\beta$ (indeed, given a matrix representation $Q$, the class of the $i$th prime toric divisor is the $i$th column of $Q$).

The map $\beta$, as well as any choice of matrix representation $Q$ for it, are referred to as a \textit{class group grading}, as they provide a $\mathrm{Cl}(V_\Sigma)$-grading of the torus-invariant divisors and their associated monomials in $S$ (and thus, by linearity, all of $S$). We mention that $Q$ is also commonly referred to in supersymmetric/string theoretic applications as the \textit{gauged linear sigma model (GLSM) charge matrix}, as it encodes the $U(1)$ charges of fields in a supersymmetric GLSM with superpotential $W = 0$ whose vacuum manifold, modulo gauge transformations, is a toric variety.

Let us briefly exemplify the concepts introduced thus far in an example. 
\begin{example}
    Let us construct the $r$th Hirzebruch surface $\mathscr{H}_r$ as a toric variety. Let $N = \mathbb{Z}^2$, in which case we can write the minimal generators of the rays (one-cones) of the fan for $\mathscr{H}_r$ as the columns of the following matrix. 
    \begin{equation}
        \label{eq:hirz_rays}
        \mathbf{A}_r = \begin{pNiceMatrix}[first-row]
            1 & 2 & 3 & 4 \\
            0 & -r & 1 & -1 \\
            1 & -1 & 0 & 0
        \end{pNiceMatrix}.
    \end{equation}
    $\mathbf{A}_r^\top$ represents the map $\alpha$ in \cref{eq:SES}. In particular, $|\Sigma(1)| = 4$. We let $u_i$ denote the $i$th column of this matrix (i.e., the minimal generator of the $i$th ray in $\Sigma(1)$). The $u_i$ generate $N$, so we have no finite factor/torsion in $G$ or $\mathrm{Cl}(\mathscr{H}_r)$: in particular, $\mathrm{Cl}(\mathscr{H}_r) \cong \mathbb{Z}^2$. $\beta$ in \cref{eq:SES} is represented by a matrix $Q$ (i.e., a GLSM charge matrix), a choice of which fixes a basis for the class group. We choose the following.
    \begin{equation}
        \label{eq:hirz_charge}
        Q = \begin{pNiceMatrix}[first-row]
            1 & 2 & 3 & 4 \\
            1 & 1 & r & 0 \\
            0 & 0 & 1 & 1
        \end{pNiceMatrix}.
    \end{equation}
    This choice involved fixing a basis of linear relations of the $u_i$ (the rows of $Q$). In particular, the columns of $Q$ are the classes of the prime toric divisors $D_i$: i.e., we've chosen a basis for $\mathrm{Cl}(\mathcal{H}_r)$ consisting of $D_1 = D_2$ and $D_4$. The group $G \cong (\mathbb{C}^*)^2 \ni (\lambda, \tau)$ is the subgroup of $(\mathbb{C}^*)^4$ generated by $(\lambda^1, \lambda^1, \lambda^r, \lambda^0)$ and $(\tau^0, \tau^0, \tau^1, \tau^1)$, so performing the quotient of $\mathbb{C}^4$ imposes the scaling relations
    \begin{equation}
        (x_1, x_2, x_3, x_4) \sim (\lambda \, x_1, \lambda \, x_2, \lambda^r \tau \, x_3, \tau \, x_4).
    \end{equation}
    There is a unique fan which uses all of these one-cones. These induce the following irrelevant ideal $B$ and Stanley-Reisner ideal $J$.
    \begin{equation}
        \begin{aligned}
            B &= \langle x_1x_3, x_1x_4, x_2x_3, x_2x_4 \rangle , \\
            J &= \langle x_1x_2, x_3x_4 \rangle .
        \end{aligned}
    \end{equation}
    \label{ex:hirz_1}
\end{example}

We will be interested in the singularity structure of toric varieties and their hypersurfaces, so we note now that the singular locus $\mathrm{Sing} \, V_\Sigma$ of $V_\Sigma$ is a union of toric subvarieties as follows (\cite{cls}, Prop. 11.1.2).
\begin{equation}
    \mathrm{Sing} \, V_\Sigma = \bigcup_{\substack{\sigma \in \Sigma \\ \sigma \text{ not smooth}}} V_{\Sigma, \sigma} .
\end{equation}
We recall that a $k$-dimensional cone in $N_\mathbb{R}$ is \textit{smooth} if its minimal generators can be extended to a $\mathbb{Z}$-basis for $N$: in such a case, the affine toric variety associated to the cone is isomorphic to $\mathbb{C}^k$.\footnote{A cone fails to be smooth immediately if $\sigma$ has greater than $\dim N$ generators: this is much of our motivation to consider only simplicial fans.}
To a maximal cone $\sigma \in \Sigma(n)$ we can associate a polytope $\Pi_\sigma$ given by
\begin{equation}
    \label{eq:simplex_poly}
    \Pi_\sigma = \conv{\{u_\rho \; | \; \rho \in \sigma\} \cup \{\mathbf{0}\}} .
\end{equation}
If the only non-vertex lattice points in $\Pi_\sigma$ lie on the affine hyperplane containing the $u_\rho$, then the affine toric variety associated to $\sigma$ is said to have canonical singularities. The specific details of classifications of singularities will not be important to us, so we defer the reader to \S11.4 of \cite{cls} for further details in the toric context. What matters for us is that each of these non-vertex lattice points induces a canonical singularity (\cite{cls}, Prop. 11.4.17). These can be resolved by including each of these interior lattice points as minimal generators of one-cones in the fan: i.e., blowing up the singularities into new exceptional prime toric divisors. The result has only terminal singularities (\cite{cls}, Prop 11.4.12).

From this discussion we can conclude that to avoid canonical singularities, one is motivated to restrict to fans $\Sigma$ whose minimal generators $u_\rho$ satisfy $\{u_\rho\}_{\rho \in \Sigma(1)} = \partial \Delta^\circ \cap N$ --- that is, they constitute \textit{all} of the boundary points of a polytope --- such that $\Delta^\circ$ has a unique interior point (the origin). Such polytopes are called \textit{canonical polytopes} \cite{Kasprzyk:sca}: we will see in \cref{sec:ref_poly} why one may be motivated to restrict further to reflexive polytopes in the context of Calabi--Yau geometry, but we note that canonical polytopes are already useful in their own right for the construction of Calabi--Yau manifolds \cite{Kasprzyk:sca}.

The polytopes $\Pi_\sigma$ also allow us to conveniently introduce the intersection theory of $V_\Sigma$, which is well-defined for simplicial $\Sigma$. The intersection number associated to $n$ distinct prime toric divisors on a toric $n$-fold is
\begin{equation}
    D_1 \cdot D_2 \cdot {\dots} \cdot D_n = 
    \begin{cases}
        \frac{1}{\mathrm{vol} \, \Pi_\sigma} & \sigma = \cone{\rho_1, {\dots}, \rho_n} \in \Sigma, \\
        0 & \text{otherwise.}
    \end{cases}
\end{equation}
That is, the intersection number is $1$ if $u_1, \dots, u_n$ span a smooth cone $\sigma \in \Sigma$; if they span a non-smooth cone, the number is fractional, signaling that the affine toric variety associated to $\sigma$ was singular. It turns out that all other intersection numbers can be deduced from these, by exploiting the linear relations among the prime toric divisors.
For more details on this computation, see e.g., \cite{Demirtas:2022hqf}. 

Another important topological datum is the total Chern class: its Poincar\'e dual is the sum of all toric subvarieties $V_{\Sigma,\sigma}$ (\cite{cls}, Prop. 13.1.12).

\subsection{Line Bundles and Newton Polytopes}

\label{sec:line_bundle_newton}

A variety can be at most two out of the following three: toric, Calabi--Yau, and compact.\footnote{Calabi--Yau toric varieties must have minimal generators belonging to a common affine hyperplane, which is incompatible with the fan being complete. In the language of triangulation theory compact (Calabi--Yau) toric varieties arise from totally cyclic (acyclic) vector configurations. We note that non-compact toric Calabi--Yau varieties play an important role in geometric engineering and supersymmetric field theory: see, e.g., \cite{Katz:1996fh, Leung:1997tw, Intriligator:1997pq} and also \cite{Eckhard:2020jyr} for a review.} To construct compact Calabi--Yau varieties, then, we will study toric varieties and consider Calabi--Yau hypersurfaces inside of them. We therefore turn our attention to hypersurfaces in toric varieties, the nicest of which are given as zero loci of sections of line bundles. 

We recall that line bundles are in correspondence with Cartier divisor classes, or Weil divisors classes which can be locally represented as the zero locus of a single equation.\footnote{The first Chern class maps line bundles to their corresponding Cartier divisor classes, while to go the other way one can take the equations $f_i = 0$ defining a representative Cartier divisor on an open chart $\{U_i\}$ and construct transition functions $f_i/f_j$ between these charts, which define a line bundle.} We will often use the two interchangeably. For simplicial toric varieties, the Picard group of Cartier divisors is free (\cite{cls}, Prop. 4.2.5) and has finite index in the class group (\cite{cls}, Prop. 4.2.7), so all Weil divisors admit a multiple that is Cartier. Such Weil divisors are called $\mathbb{Q}$-Cartier, and varieties where all Weil divisors are $\mathbb{Q}$-Cartier are called $\mathbb{Q}$-factorial (i.e., simplicial toric varieties are $\mathbb{Q}$-factorial). We also have that the \textit{Picard number} is $\mathrm{rank} \, \mathrm{Pic}(V_\Sigma) = \mathrm{rank} \, \mathrm{Cl}(V_\Sigma) = |\Sigma(1)| - \dim V_\Sigma$ (assuming completeness). A torus-invariant divisor $D = \sum_\rho a_\rho D_\rho$ is in particular Cartier if for every maximal cone $\sigma \in \Sigma(n)$ spanned by minimal generators $u_{\rho_1}, \dots, u_{\rho_n}$, there exists $m_\sigma \in M$ such that $\langle m_\sigma, u_{\rho_i} \rangle = -a_{\rho_i}$, $1 \leq i \leq n$ (\cite{cls}, Th. 4.2.8). The characters $m_\sigma$ are referred to as the Cartier data of $D$, and they furnish the local equations of $D$ on the canonical open charts of $V_\Sigma$ given by the affine toric varieties associated to the maximal cones. The Cartier property lifts to divisor classes.

We now consider sections of line bundles on $V_\Sigma$.\footnote{While only Cartier divisors are associated to line bundles, Weil divisors more generally are associated to rank-one reflexive sheaves, and the discussion of this paragraph immediately applies to these sheaves as well.} Homogeneous polynomials in $S$ with a Cartier $\mathrm{Cl}(V_\Sigma)$-grading $[D]$ are global sections of the line bundle $\mathcal{O}(V_\Sigma, [D])$ associated to $[D]$. More generally, Laurent monomials $x^a$ ($a \in \mathbb{Z}^{|\Sigma(1)|}$) in the field of fractions $\mathrm{Frac}(S)$ --- whose divisors are the torus-invariant divisors $\sum_\rho a_\rho D_\rho$ --- exhaust the local sections of $\mathcal{O}(V_\Sigma, D)$. These can be parameterized by the kernel of the map $\beta : \mathbb{Z}^{|\Sigma(1)|} \to \mathrm{Cl}(V_\Sigma)$ from \cref{eq:SES}, which is just the $M$ lattice. Explicitly, fix a class $[D]$ represented by a torus-invariant divisor $D = \sum_\rho a_\rho D_\rho$: the Laurent monomials of degree $[D]$ are given as follows.
\begin{equation}
    x^{\langle m, D \rangle} := \prod_\rho x^{\langle m, u_\rho \rangle + a_\rho}.
\end{equation}
The global sections are exactly those which have non-negative exponents and thus belong to $S$ (see, e.g., Prop. 5.1.4 in \cite{cls}). These correspond to a polyhedron in $M$: for compact $V_\Sigma$, it is a polytope known as the \textit{Newton polytope} $\Delta_D \subset M$.
\begin{equation}
    \Delta_D = \{ m \in M \; | \; \langle m, u_\rho \rangle + a_\rho \geq 0 \}.
\end{equation}  
Choices of different $a$ with the same $\mathrm{Cl}(V_\Sigma)$-grading (torus-invariant divisors in the same class) have Newton polytopes related by translation in $M$. A line bundle/Cartier divisor is \textit{effective} if it admits a global section. The effective torus-invariant divisors belong to the positive orthant in $\mathbb{Z}^{|\Sigma(1)|}$, which is mapped by $\beta$ to the \textit{effective cone} $\mathrm{Eff}(V_\Sigma) \subset \mathrm{Cl}(V_\Sigma)$. Effective divisors are those whose Newton polytopes contain lattice points.

The convex geometry of the Newton polytope $\Delta_D$ encodes rich and diverse structure. Its lattice points encode global sections of $D$, while its faces (especially their intersection structure) encode the \textit{basepoint loci} of $D$, where global sections of $\mathcal{O}(V_\Sigma, D)$ always vanish. In particular, every ray $\rho \in \Sigma(1)$ arising in the fan imposes an affine linear constraint on $\Delta_D$ and thus induces a face 
\begin{equation}
    \label{eq:poly_faces}
    \Theta_{D,\rho} := \setbuilder{m \in M}{\langle m, u_\rho \rangle = -a_\rho} \cap \Delta_D.
\end{equation}
This face need not have codimension one: indeed, it may even be empty. The lattice points on this face correspond to global sections that do not depend on $x_\rho$. If every global section of $D$ depends on at least one of a set of homogeneous coordinates $x_{\rho_1}, \dots, x_{\rho_k}$, then $D_{\rho_1} \cap \dots \cap D_{\rho_k}$ is a basepoint locus of $D$ if it doesn't fall in the exceptional set $Z(\Sigma)$. This corresponds to the face $\Theta_{D,\rho_1} \cap \dots \cap \Theta_{D,\rho_k}$ being empty. 

Of particular interest to us is the \textit{anticanonical class} 
\begin{equation}
    \overline{K} := -K = \sum_\rho [D_\rho] = c_1(V_\Sigma).
\end{equation}
($K$ being the canonical divisor) because divisors in this class are Calabi--Yau by adjunction. Its Newton polytope $\Delta_{\overline{K}} \subset M$ agrees with the more widespread notion of the \textit{polar dual} $(\Delta^\circ)^\circ = \Delta$ of a polytope $\Delta^\circ \subset N$,\footnote{This motivates the notation: $^\circ$ is the polar dual operation.} the convex hull of the $u_\rho$. In particular, unlike most divisors, the Newton polytope of the anticanonical divisor is uniquely determined by the vertices of $\Delta^\circ$. That is, it is agnostic to the presence of other rays whose minimal generators are lattice points of $\Delta^\circ$. The pair $\Delta^\circ, \Delta$ is especially important when both are \textit{lattice polytopes} (i.e., their vertices are integral), in which case we call them both \textit{reflexive polytopes}.

Reflexive polytopes are canonical polytopes, and are especially amenable to constructing smooth anticanonical hypersurfaces, as we will review when we discuss Batyrev's construction in \cref{sec:batyrev}. For now, we will just introduce one concept for reflexive polytopes that we will exploit several times. Namely, there is an inclusion-reversing duality of faces for pairs of reflexive polytopes, relating $k$-dimensional faces of one to $(n-k-1)$-faces of the other. This duality is convenient for at least the following reason. Given a subset of one-cones $\rho_1, \dots, \rho_k$ or minimal generators $u_1, \dots, u_k$, define their \textit{minface} $\Theta^\circ \subset \Delta^\circ$ to be the smallest face of $\Delta^\circ$ containing each minimal generator \cite{Braun:2017nhi}.\footnote{We comment that in triangulations, this is known as the \textit{carrier} of a set of points \cite{De_Loera2010-ss}.} If one restricts to the toric subvariety $x_1 = \dots = x_k = 0$ associated to the subset, the only sections of the anticanonical bundle which do not vanish correspond to the points on the dual face $\Theta \subset \Delta$.

\subsection{Normal Fans and K\"ahler Cones}

\label{sec:toric_bir}

We conclude this review of toric geometry by initiating a discussion of their birational geometry, though this will continue into the next section as we connect with triangulation theory to discuss the secondary fan. In the strictest sense, all toric varieties of the same dimension are birational, because they each contain equidimensional dense tori. However, following \S 15.3 in \cite{cls}, we take a birational map to be a torus-equivariant map which is an isomorphism in codimension 1 (for non-toric varieties --- such as compact Calabi--Yau varieties --- we also take birational maps to be isomorphisms in codimension one). This means that toric varieties with exactly the same one-cones but different higher-dimensional cones are birational. Thus, to study the birational equivalence class of $V_\Sigma$, we are interested in the set of toric varieties whose fans have exactly the one-cones $\Sigma(1)$. Ultimately, the object we will employ to study these toric varieties --- the secondary fan, the topic of \cref{sec:sec_fan} --- will describe all fans $\Sigma'$ whose one-cones are \textit{contained in} $\Sigma(1)$. If $\Sigma'(1) = \Sigma(1)$, we say $\Sigma'$ is a \textit{fine} fan (with respect to $\Sigma(1)$). When a fixed $\Sigma(1)$ is understood, we will also refer to the associated toric varieties as being fine or not. The toric varieties associated to fine fans will form a birational class. We comment that for fixed $\Sigma(1)$, non-fine toric varieties are related to the fine toric varieties by blowups, which are not isomorphisms in codimension one, so we will not call them birational maps, though they do define isomorphisms between dense open subsets.

We begin by introducing a map from effective divisors $D$ on $V_\Sigma$ to fans $\Sigma'$ with one-cones contained in $\Sigma(1)$. This is achieved by mapping $D$ to the \textit{normal fan} $\Sigma_D$ of its Newton polytope $\Delta_D$. We can construct $\Sigma_D$ as follows. The maximal cones of $\Sigma_D$ are in bijection with the vertices of $\Delta_D$. In particular, to each (generically non-integer) vertex $m \in M \otimes \mathbb{Q}$ we associate the maximal cone given by the positive hull of the rays $\rho$ such that $m$ saturates the hyperplane inequality $\langle m, u_\rho \rangle + a_\rho \geq 0$. Because the normal fan is invariant under polytope translation, $\Sigma_D$ depends only on $[D]$. Importantly, from our earlier discussion of basepoint loci, we can conclude that the basepoint loci of $D$ are contained in $Z(\Sigma_D)$: if an intersection of $\Theta_{D,\rho_i}$ is empty, it contains no vertex. In particular, $D$ is ample and basepoint-free on the toric variety $V_{\Sigma_D}$ (\cite{cls}, Prop. 6.1.10). Consequently, normal fans include only a subset of the one-cones in $\Sigma(1)$ because if $D_\rho$ belongs to the basepoint locus of $D$, $\rho$ will not be included in $\Sigma_D$. 

A fan is \textit{regular} if it can be realized as the normal fan of some polytope \cite{fulton1993introduction}, otherwise it is \textit{irregular}. All projective toric varieties arise from regular fans (\cite{cls}, Prop. 7.2.9). We are primarily interested in K\"ahler hypersurfaces $X$, which in particular are projective, so we will restrict our discussion to regular fans.

The torus-invariant divisor $D$ is Cartier on the toric variety $V_D$ associated to the normal fan $\Sigma_D$ of $\Delta_D$ if and only if $\Delta_D$ is lattice, as for this variety the Cartier data are exactly the vertices of $\Delta_D$. A variety is \textit{Fano} if its anticanonical divisor is ample: for a toric variety, this means taking the fan $\Sigma_{\overline{K}}$. A variety is \textit{Gorenstein} if the anticanonical divisor is Cartier: thus, a toric variety is Gorenstein Fano if and only if it arises as the normal fan of a reflexive polytope, thought of as the Newton polytope of $\overline{K}$. An important property of the normal fan of $\Delta_{\overline{K}}$ of any fan $\Sigma$ is that its only one-cones are those whose minimal generators were \textit{vertices} of the convex hull of the minimal generators of $\Sigma$. In the event that these generators were the lattice points on the boundary of a polytope $\Delta^\circ$, this fan is referred to as the \textit{central fan} of $\Delta^\circ$. The one-cones of this fan are generated by the vertices of $\Delta^\circ$ and more generally its $k$-cones are the cones over the $(k-1)$-faces of $\Delta^\circ$.
This is the starting point for Batyrev's construction of smooth Calabi--Yau hypersurfaces \cite{Batyrev:1993oya}, which we will discuss in greater detail in \cref{sec:frst_cy}.

We are now prepared to understand how the normal fan construction relates to toric birational geometry. The map $D \mapsto \Sigma_D$ sends effective divisors on $V_\Sigma$ to regular fans whose one-cones are a subset of $\Sigma(1)$. It turns out that this map is piecewise constant on $\mathrm{Eff}(V_\Sigma)$. It is then natural to ask what the level sets are. In the following, we will only consider simplicial fans. For such fans, they are solid polyhedral cones, called \textit{secondary cones} (differing slightly from the secondary cones defined in \cref{sec:triang}, as we will clarify momentarily). For $\Sigma'$ satisfying $\Sigma'(1) \subseteq \Sigma(1)$, its secondary cone $\Gamma_{\Sigma'}$ with respect to $\Sigma(1)$ is (\cite{cls}, Prop. 15.2.1)
\begin{equation}
    \label{eq:sec_cone}
    \Gamma(V_{\Sigma'}) = \bigcap_{\sigma \in \Sigma'} \cone{D_\rho \; | \; \rho \notin \sigma} \subset \mathrm{Eff}(V_\Sigma) \subset \mathrm{Cl}(V_\Sigma).
\end{equation}
That is, the normal fan of the Newton polytope of any divisor class belonging to the relative interior of $\Gamma_\Sigma$ is $\Sigma$. In the next section, we will discuss how these secondary cones subdivide the effective cone $\mathrm{Eff}(V_\Sigma)$. Because these secondary cones characterize birationally related toric varieties, this is a geometric organization of toric birational geometry. We stress that irregular fans do not have a secondary cone, and moreover that we are only considering simplicial fans in this context.

The secondary cone $\Gamma_{\Sigma'}$ associated to a simplicial fan $\Sigma'$ has a convenient geometric interpretation (\cite{cls}, Prop. 15.1.3.).
\begin{equation}
    \label{eq:nef}
    \Gamma(V_{\Sigma'}) \cong \mathrm{Nef}(V_{\Sigma'}) \times \mathbb{R}_+^{|\Sigma(1)| - |\Sigma'(1)|}.
\end{equation}
Here, $\mathrm{Nef}(V_{\Sigma'})$ is the \textit{nef (numerically effective) cone} of $V_{\Sigma'}$, or the cone of divisors which have non-negative pairing with every curve class in the \textit{Mori cone} of effective curves.

That is, $\mathrm{Nef}(V_{\Sigma'})$ is dual to the Mori cone. Because this pairing is Poincar\'e dual to the evaluation of two-forms on curves, the interior of the nef cone is (up to Poincar\'e duality) the \textit{K\"ahler cone} of $2$-forms (in particular, $(1,1)$-forms) assigning strictly positive volume to all holomorphic cycles. We stress that for fine fans, \cref{eq:nef} is strict equality: secondary cones \textit{are} nef cones. As we noted, the term ``secondary cone'' means something similar but different in triangulation theory, to be clarified fully in the next section. To avoid confusion we will refer to fine secondary cones in the sense of \cref{eq:sec_cone} as nef/K\"ahler cones both in \cref{tab:dict} and for the remainder of the text. 

Let us conclude by exhibiting some of these recent concepts by continuing our study of the Hirzebruch surfaces of \cref{ex:hirz_1}. 
\begin{example}
    Recall the $r$th Hirzebruch surface from \cref{ex:hirz_1}. Let us restrict to $\mathscr{H}_r$ for $r \in \{1,2\}$, as in these cases the minimal generators of the fan $\Sigma_r$ for $\mathscr{H}_r$ are, in fact, the lattice points of a 2D reflexive polytope $\Delta_r^\circ$:
    \begin{equation}
        \Delta^\circ = \conv{\{u_1, u_2, u_3, u_4\}}.
    \end{equation}
    Thus, this is a nice toy model for our extended discussion of reflexive polytopes in later sections. In fact, $\Sigma_r$ is the central fan of $\Delta_r^\circ$: all of its cones are given by faces of $\Delta_r^\circ$. Equivalently, it is the normal fan of the polar dual polytope $\Delta_r$ (the Newton polytope of the anticanonical divisor $\overline{K}$). For example, let us fix $r = 1$.
    \begin{equation}
        \Delta_1 = \conv{\{(-1, -1), (-1, 2), (1, -1), (1, 0)\}}.
    \end{equation}
    Under the normal fan construction, the maximal cone generated by $u_3$ and $u_1$ in $\Sigma_1$ (for example) is associated to the vertex $(-1,-1)$ of $\Delta_1$, as it lies on the faces $\Theta_{\overline{K},u_3}$ and $\Theta_{\overline{K},u_1}$. Repeating this for the other vertices recovers the whole fan $\Sigma_1$.

    Thus, $\mathscr{H}_1$ is a Gorenstein Fano toric variety (as is $\mathscr{H}_2$). It is worth stressing that for higher-dimensional reflexive polytopes, it is rare for the central fan to be simplicial and smooth. We will return to this point in \cref{sec:frst_cy}.

    We can construct the secondary cone of $\mathscr{H}_r$ as well. Because its fan is fine, this will also be its nef/K\"ahler cone. Applying \Cref{eq:sec_cone}, and recalling how we expressed our prime toric divisors in a basis in \cref{ex:hirz_1}, we find
    \begin{equation}
        \begin{aligned}
            \Gamma_{\Sigma} 
            &= \cone{D_2, D_4} \cap \cone{D_1, D_4} \cap \cone{D_1, D_3} \cap \cone{D_2, D_3} \\
            &= \cone{D_1, D_3}.
        \end{aligned}
    \end{equation}
    As a nice sanity check, we see the anticanonical class $\overline{K} = D_1 + D_2 + D_3 + D_4$, which is $(2+r,2)$ in our basis from \cref{ex:hirz_1}, belongs to the relative interior of the nef cone $\cone{(1,0),(r,1)}$ --- meaning $\overline{K}$ is ample and the toric variety is Fano --- if and only if $r \in \{0,1\}$. This agrees with our earlier comment about when the minimal generators of the fan $\Sigma_r$ form a reflexive polytope. 
    \label{ex:hirz_2}
\end{example}

\section{Review of the Secondary Fan}
\label{sec:sec_fan}

We continue our review of the triangulation/fan/toric variety classification problem stated in \cref{sec:intro}, but at this point the separate paths of triangulation theory and toric geometry intersect. We emphasize that the content of this section is still a review of known results, though we hope our exposition is illuminating. First we will discuss the secondary fan and its role in triangulation theory and toric geometry. Then, we will introduce an algorithm for constructing all fine regular simplicial fans of a vector configuration.

\subsection{Theory}

In \cref{sec:triang} and \cref{sec:toric}, we converged on the notion of the \textit{secondary cone} from two directions. Fix a vector configuration $\mathbf{A}$ in $\mathbb{R}^n$ with primitive integral vectors, thinking of $\mathbb{R}^n$ as $\mathbb{Z}^n \otimes \mathbb{R}$, with $\mathbb{Z}^n$ being the lattice $N$ upon fixing a basis. Let $\mathscr{T}$ be some fine regular triangulation of $\mathbf{A}$. By taking supports, this induces a simplicial fan $\Sigma$. In particular, $\Sigma(1)$ is the set of rays generated by the vectors in $\mathbf{A}$, so the minimal generators of $\Sigma$ are just the vectors in $\mathbf{A}$. More generally, triangulations of $\mathbf{A}$ and simplicial fans constructed from one-cones $\Sigma(1)$ are in one-to-one correspondence, and thus can be identified. We note, as discussed in \cref{sec:triang}, that this bijection does not continue to hold if we relax to subdivisions and to fans, because several subdivisions of $\mathbf{A}$ can in general induce the same fan: we will say more about this in \cref{sec:translate}. 

In \cref{sec:regular}, we introduced the lifting construction, which mapped a height vector $\omega \in \mathbb{R}^{|\mathbf{A}|}$ to a triangulation (or perhaps just a subdivision). On the other hand, in \cref{sec:toric_bir} we used normal fans of Newton polytopes to map divisor classes $D \in \mathrm{Eff}(V_\Sigma) \subset \mathrm{Cl}(V_\Sigma)$ to fans $\Sigma_D$. In both setups, we identified that the map from vectors were piecewise constant on polyhedral cones known as secondary cones. In particular, the regular subdivision lifting by $\{a_1, \dots, a_{|\mathbf{A}|}\}$ induces a fan which is equal to the normal fan of the Newton polytope associated to the torus-invariant divisor $\sum_i a_i D_i$.

We have also glimpsed how distinct secondary cones for distinct triangulations/fans sew together: in \cref{sec:regular}, we saw how triangulations with secondary cones that intersect at a facet are related by flips. In general, all secondary cones of all triangulations/fans arising from $\mathbf{A}/\Sigma(1)$ organize into a polyhedral fan: namely, the \textit{secondary fan} \cite{Gelfand:1994}. This is an object of central importance in triangulation theory and toric geometry alike, and is the subject of this section. In particular, the intersections between secondary cones in the secondary fan encode how the associated regular triangulations/fans are combinatorially related (in triangulation theory) and how the induced toric varieties are related by maps called extremal contractions (in toric geometry). In particular, many such contractions are birational maps in the sense of \cref{sec:toric_bir}, so the secondary fan encodes a toric variety's birational geometry.\footnote{One way of understanding this is that toric varieties are quotients in the sense of geometric invariant theory (GIT). It it well understood in GIT that the well-behaved quotients of a space (such as $\mathbb{C}^{|\Sigma(1)|}$) by a group are not unique: one must additionally choose a semistable locus and excise its complement (i.e., $Z(\Sigma)$). This choice is parameterized in GIT by an (effective) line bundle, which we recall corresponds to a Cartier divisor in the Picard group (and for toric varieties, this is a finite index subgroup of the class group). The resulting decomposition of the effective cone into chambers associated to distinct GIT quotients produces the \textit{GIT fan}, of which the secondary fan is a special case. Indeed, the theory of toric birational geometry is a special case of the study of the variation of geometric invariant theory quotients (VGIT) \cite{dolgachev1998variation, thaddeus1996geometric}.}

We note that triangulation theory considers secondary cones and the secondary fan to live in height space $\mathbb{R}^{|\mathbf{A}|} = \mathbb{R}^{|\Sigma(1)|}$ while toric geometry constructs them in the class group $\mathrm{Cl}(V_\Sigma)_\mathbb{R}$. The difference between the dimension of these spaces $|\mathbf{A}| - \dim \mathrm{Cl}(V_\Sigma)_\mathbb{R}$ is the rank of the set of vectors/rays (usually their dimension). In spite of this, the constructions are equivalent: the higher-dimensional triangulation-theoretic secondary cones/fan have \textit{lineality spaces}, or contain entire linear subspaces, which one can freely quotient out without loss of information, resulting in the lower-dimensional toric geometric secondary cones/fan.

Triangulation theory and toric geometry do each incorporate both manifestations of the secondary fan. In triangulation theory, quotienting the secondary fan in $\mathbb{R}^{|\mathbf{A}|}$ by the lineality space yields the \textit{chamber complex}, while $\mathbb{R}^{|\Sigma(1)|}$ in toric geometry is merely the space of torus-invariant divisors $\mathbb{Z}^{|\Sigma(1)|}$, tensored with $\mathbb{R}$. This also clarifies how to pass between the representations: one can pullback or pushforward using the map $\beta$ (the class group grading/GLSM charge matrix) from \cref{eq:SES}. The lineality space is exactly the kernel of $\beta$, or the trivial torus-invariant divisors, which are parameterized by the lattice $M$.

For the remainder of the text, we will let ``secondary fan'' refer strictly to the higher-dimensional fan in $\mathbb{R}^{|\mathbf{A}|} = \mathbb{R}^{|\Sigma(1)|}$, with lineality spaces, as this is how the term was originally defined. Because we will almost exclusively focus on fine triangulations/fans in this paper, we can exploit \cref{eq:nef} to refer to the lower-dimensional secondary cones as nef/K\"ahler cones.

We are especially interested in the collection of secondary cones in the secondary fan of $\mathbf{A}/\Sigma(1)$ which correspond to fine triangulations/fans. These are the fans that correspond to a birational equivalence class of toric varieties. It turns out that these cones organize into a convex subfan of the secondary fan (Cor. 5.3.14 \cite{De_Loera2010-ss} and Prop. 15.1.4 in \cite{cls}). In particular, the union of the nef/K\"ahler cones of fine fans in toric geometry is known as the cone of movable divisors $\mathrm{Mov}(V_\Sigma)$ (divisors with basepoint locus in codimension at least two), or the moving cone,\footnote{While the definition provided in \cref{eq:movable} doesn't obviously connect to the traditional notion of movable divisors (and the cone they generate), the two ideas indeed agree. This is Th. 15.1.10 in \cite{cls}, but the basic idea was already presented in \cref{sec:toric_bir}: a fan $\Sigma_D$ omits a ray $\rho$ if and only if the divisor $D_\rho$ is contained in the basepoint locus of $D$, so $\Sigma_D$ is fine if and only if $D$ has no codimension-one basepoint locus.} given as follows.
\begin{equation}
    \label{eq:movable}
    \begin{aligned}
        \mathrm{Mov}(V_\Sigma) 
        &= \bigcup_{\Sigma' \text{ fine}} \Gamma(V_{\Sigma'}) \\
        &= \bigcup_{\mathscr{T} \text{ fine}} \beta(\mathbf{C}(\mathbf{A}, \mathscr{T})) \\
        &= \bigcap_{\rho \in \Sigma(1)} \cone{D_{\rho'} \; | \; \rho' \neq \rho}.
    \end{aligned}
\end{equation}
The first equality is a definition, the second exploits that nef/K\"ahler cones are projections of secondary cones by $\beta$ from \cref{eq:SES}, and the third is a non-trivial result that allows one to avoid the computational problem of enumerating all fine fans/triangulations (\cite{cls}, Prop. 15.2.4). The toric varieties arising from the fine fans associated to some set of one-cones form a birational equivalence class in the sense defined in \cref{sec:toric_bir}.

We now turn our attention to how secondary cones are organized within the secondary fan. As we saw in \cref{sec:regular}, each facet of the secondary cone for $C(\mathbf{A},\mathscr{T})$ corresponds to an embedded circuit in triangulation theory, which either induces a flip transforming $\mathscr{T}$ into a distinct regular triangulation $\mathscr{T}'$ --- in which case the facet was $C(\mathbf{A},\mathscr{T}) \cap C(\mathbf{A},\mathscr{T}')$ --- or cannot be flipped, and demarcates a boundary of the secondary fan.\footnote{It's worth clarifying a subtlety at this point, first mentioned at the end of \cref{sec:dep_and_circ}. Facets of secondary cones given by intersections of two secondary cones always correspond to embedded circuits, but the converse is not always true. A flip corresponds to such a facet if and only if the flip relates two regular triangulations and the intermediate subdivision is also regular. If either condition fails, the flip is embedded but corresponds to no facet of the secondary cone, as either the flipped triangulation or the intermediate subdivision does not correspond to any height vector in the secondary fan. Neither of these scenarios are especially rare: for example, they both occur in the Kreuzer--Skarke database at small Hodge numbers.} In toric geometry, secondary cone facets correspond to maps called \textit{extremal contractions}, a notion from the minimal model program (see, e.g., \cite{kollar1998birational}, or \S15.4 in \cite{cls} for the toric case). These maps belong to one of three types, in a way that actually aligns with the three categories of \cref{sec:common_sig}.

If the facet is indeed shared by another secondary cone, say that of a simplicial fan $\Sigma'$, then we can ask if $V_\Sigma$ and $V_{\Sigma'}$ have the same Picard number --- i.e., $\Sigma(1) = \Sigma'(1)$. If so, this is a \textit{flip} or \textit{flipping contraction}, a birational map (in particular, an isomorphism in codimension one). This unfortunately overloads the term ``flip'' from \cref{eq:flip}, but this cannot be helped, as the name is standard in both triangulation theory and algebraic geometry. In this case, the associated circuit has signature $(a,b)$ for $a,b > 1$.

On the other hand, if the Picard numbers of $V_\Sigma, V_{\Sigma'}$ differ by one (the only other possibility) then this is a \textit{blowup/blowdown} or \textit{divisorial contraction}, a map involving the blowup/blowdown of an exceptional prime toric divisor (corresponding to the one-cone not shared by $V, V'$). In triangulation theory, this is known as a ``deletion/insertion flip'' (but we stress that this is not a flip in the algebraic geometric sense). In this case, the associated circuit has signature $(a,1)$ for $a > 1$.

Finally, if this facet intersected no other secondary cone and was therefore a boundary of the secondary fan, it corresponds to a \textit{fibering contraction} mapping $V_\Sigma$ to the lower-dimensional base of an (Iitaka) fibration enjoyed by $V_\Sigma$. This fibration will not be important for us, but we include a discussion of how to construct the base and fiber of this fibration in \cref{sec:translate}. In this case, the associated circuit has signature $(a,0)$ for $a > 1$. We mention in passing that for reflexive polytopes, it is well understood that these fibrations of the ambient toric variety are inherited by the anticanonical hypersurface \cite{Kreuzer:2000qv}. Thus, toric varieties admitting fibering contractions furnish examples of either genus-one ($a = 3$) or $T^4$/$K3$-fibered ($a = 4$) Calabi--Yau geometries, where a face of the nef/K\"ahler cone corresponds to shrinking fiber limit.\footnote{The case $a = 2$ is somewhat degenerate: the CY is fibered over a toric threefold with generic fiber given by a pair of points.}

We stress that by our definition of birational map from \cref{sec:toric_bir}, only flipping contractions are birational. Let us now illustrate these concepts in an example.

\begin{example}
    We return to the Hirzebruch surfaces $\mathscr{H}_r$. We recall that in each case, the nef/K\"ahler cone is $K = \cone{D_1, D_3} \subset \mathrm{Eff}(\mathscr{H}_r)$. Let us denote the vector configuration and triangulation associated to $\mathscr{H}_r$ as $\mathbf{A}_r$ and $\mathcal{T}_r$, respectively. We want to construct the complete secondary fan of $\mathbf{A}_r$, so we first ought to construct the secondary cone $C(\mathbf{A}_r, \mathcal{T}_r)$ associated to $\mathscr{H}_r$ (in $\mathbb{R}^4$, height space or the space of real-valued torus-invariant divisors) from which we can read off the embedded/flippable circuits. It is most conveniently defined via its hyperplanes. 
    \begin{equation}
        C(\mathbf{A}_r, \mathcal{T}_r) = \left\{ \omega \in \mathbb{R}^4 \; \Bigg| \; \begin{pmatrix}1 & 1 & 0 & -r \\ 0 & 0 & 1 & 1\end{pmatrix}\omega \geq 0 \right\}.
    \end{equation}
    This is a solid cone containing a two-dimensional lineality space (because $\dim \mathscr{H}_r = 2$) spanned by $(0, -r, 1, -1)$ and $(1, -1, 0, 0)$. These are the rows of the matrix $\mathbf{A}_r$ from \cref{eq:hirz_rays} giving the vectors/minimal generators for this configuration. They generate the lineality space because they are the two linearly independent linear evaluations associated to the vector configuration; equivalently, recalling \cref{eq:SES}, they represent the matrix $\alpha$ generating the kernel of the projection $\beta$ from torus-invariant divisors to divisor classes. 
    
    The two hyperplane normals specify the embedded circuits: they are each exactly the unique dependency of the associated circuit. Thus, we read off that the embedded circuits have support $\{1,2,4\}$ and $\{3,4\}$ which decompose into positive and negative components as $(\{1,2\}, \{4\})$ and $(\{3,4\}, \varnothing)$, respectively. In particular, this entails that their signatures are $(2,1)$ and $(2,0)$, respectively. Thus they are a deletion flip --- removing $u_4$ --- and an unflippable positive circuit, which correspond toric geometrically to a blowdown of the prime toric divisor $D_4$ and a fibering contraction, respectively. 

    At the present, we are interested in constructing the secondary fan, so we will only consider flippable circuits (we defer a discussion of the fibering contraction to \cref{ex:hirz_4}). Flipping this circuit to get a new triangulation $\mathcal{T}'_r$ involves removing $u_4$ and fusing the cones $\cone{u_1,u_4}$ and $\cone{u_2,u_4}$ to $\cone{u_1,u_2}$. The associated toric variety has charge matrix $(1,1,r)$, immediately recognizable as the weighted projective space $\mathbb{P}_{11r}$ (e.g., for $r = 1$, this is $\mathbb{P}^2$). This reproduces the familiar fact that $\mathscr{H}_r$ is the blowup of $\mathbb{P}_{11r}$ at a smooth point: the divisorial contraction associated to this triangulation flip was the associated blowdown.

    With respect to the four vectors of $\mathcal{T}_r$, the triangulation $\mathcal{T}'_r$ of $\mathbb{P}_{11r}$ then has secondary cone
    \begin{equation}
        C(\mathbf{A}_r, \mathcal{T}'_r) = \left\{ \omega \in \mathbb{R}^4 \; \Bigg| \; \begin{pmatrix}-1 & -1 & 0 & r \\ 1 & 1 & r & 0\end{pmatrix}\omega \geq 0 \right\}.
    \end{equation}
    which includes the circuit we could flip to return to $\mathscr{H}_r$ along with a non-flippable $(3,0)$ circuit. Thus, we have no more flippable circuits, and we have computed the full secondary fan. 

    Projecting down to the chamber fan using the charge matrix $Q$ from \cref{eq:hirz_charge} reproduces the K\"ahler cone of $\mathscr{H}_r$ and assigns $\cone{D_3, D_4}$ to $\mathbb{P}_{11r}$. This cone could've also been computed using \cref{eq:sec_cone}. In the end, we see that the chamber fan refines the effective cone $\mathrm{Eff}(\mathscr{H}_1) = \cone{(1,0),(0,1)}$. 

    There are no more remaining circuits to flip, so we conclude that the chamber fan has support $\cone{D_1, D_4}$ with two chambers: $\cone{D_1, D_3}$ for the fine triangulation/fan associated to $\mathscr{H}_r$ and $\cone{D_3, D_4}$ for the non-fine triangulation/fan associated to $\mathbb{P}_{11r}$. The full secondary fan (in our basis) is the pullback of this cone into $\mathbb{R}^4$ by the class group grading / GLSM charge matrix $Q$ from \cref{ex:hirz_1}.
    \label{ex:hirz_3}
\end{example}

\subsection{Computational Algorithm}

\label{sec:algo}

Having discussed theoretical aspects of the secondary fan, let us now explain how this knowledge can be applied computationally to solve the classification problem posed in \cref{sec:intro}. One can efficiently construct the secondary fan of some configuration $\mathbf{A}$ if they can compute all neighboring triangulations (those differing by $1$ flip) $\mathscr{T}'\in\mathrm{neighbors}(\mathscr{T})$ of some triangulation $\mathscr{T}$. See \cref{alg:secondary_fan}.\footnote{Since this algorithm involves constructing the neighbors of every triangulation $\mathscr{T}$, \cref{alg:secondary_fan} can easily be adapted to also output the flip graph of regular triangulations.}

\begin{algorithm}[H]
\caption{secondary fan}
\label{alg:secondary_fan}
\begin{algorithmic}
\State let $\mathscr{T}(\mathbf{A}, \omega)$ be a regular triangulation of a configuration, $\mathbf{A}$
\State let $\Sigma = \varnothing$ and $\text{checking} = \{\mathscr{T}(\mathbf{A},\omega)\}$
\While{$\text{len}(\text{checking}) > 0$}
    \State let $\mathscr{T}' = \mathrm{pop}(\text{checking})$
    \State update $\Sigma \to \Sigma\cup\{\mathscr{T}'\}$
    \For{$\mathscr{T}''\in \mathrm{neighbors}\left( \mathscr{T}' \right)$}
        \If{$\mathscr{T}''$ is regular and $\mathscr{T}''\notin\Sigma$}
            \State update $\text{checking}\to\text{checking}\cup\{\mathscr{T}''\}$
        \EndIf
    \EndFor
\EndWhile
\State \textbf{return} $\Sigma$
\end{algorithmic}
\end{algorithm}

Our work concerns fine triangulations, for which \cref{alg:secondary_fan} can be adapted by beginning with a fine $\mathscr{T}(\mathbf{A}, \omega)$ and restricting to fine neighbors $\mathscr{T}'$. This is valid because (fine) regular triangulations are connected by flips (\cite{De_Loera2010-ss}, Thm. 5.3.13). \cref{alg:secondary_fan} is simple, only requiring a subprocess to compute $\mathrm{neighbors}(\mathscr{T})$. I.e., one just needs to compute the flippable circuits of $\mathscr{T}$.

We begin with a conceptually simple, albeit (very) inefficient, procedure for generating the circuits. The final algorithm will only slightly modify this basic procedure. We can enumerate all flippable circuits of $\mathscr{T}(\mathbf{A})$ with the following procedure.
\begin{enumerate}
    \item For every subconfiguration $\mathbf{A}'\subseteq\mathbf{A}$,
    \item check that $\mathbf{A}'$ is minimally dependent.
    \item If this check passes, also check that both $|J_+|>0$ and $|J_-|>0$,
    \item if this check passes, also check that both $\mathscr{T}_\pm(\mathbf{A}')\subseteq\mathscr{T}(\mathbf{A})$ and that $\mathscr{T}_\pm(\mathbf{A}')$ has a constant link in $\mathscr{T}(\mathbf{A})$.
    \item If all checks passed, save $\mathbf{A}'$ as a flippable circuit.
\end{enumerate}
Steps $1$, $2$ have the interpretation of enumerating all circuits of $\mathbf{A}$, step $3$ checks whether each circuit is conceivably flippable (i.e., not an external wall of the secondary fan), and step $4$ checks whether the circuit is embedded in $\mathscr{T}(\mathbf{A})$.

The issue with this procedure is in step $1$ which, naively, involves iterating over $2^{|\mathbf{A}|}$ subsets. This is clearly unnecessary: if $\mathbf{A}'$ is minimally dependent and $\mathbf{A}''\supset\mathbf{A'}$, then $\mathbf{A}''$ is not minimally dependent. Thus, since any subset $\mathbf{A}'$ with $|\mathbf{A'}|=\dim(\mathbf{A})+1$ is dependent, it is unnecessary to check any subset of size $|\mathbf{A'}|>\dim(\mathbf{A})+1$. This effectively motivates the following lemma.

\begin{lem}[\cite{De_Loera2010-ss}, Lemma 4.4.9]
    Let $\mathscr{T}$ be a triangulation of a point configuration that has a flip supported at the circuit $Z$. If all elements of $Z$ are used in $\mathscr{T}$ (that is, ``unless the flip is an insertion flip'') then the flip has some witness wall.
\end{lem}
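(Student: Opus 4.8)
The plan is to first nail down what must be shown. A \emph{witness wall} for the flip at $Z$ should mean an interior wall of $\mathscr{T}$ — a codimension-one simplex $W$ lying in two maximal simplices of $\mathscr{T}$, with the two apexes $v_1\neq v_2$ satisfying $v_i\notin\mathrm{aff}(W)$ — such that $Z\subseteq U_W$, where $U_W:=W\cup\{v_1,v_2\}$. The first thing I would record is that this is equivalent to: the unique circuit carried by $U_W$ is $Z$. Indeed, each $W\cup\{v_i\}$ is a maximal (hence $d$-)simplex, so $U_W$ consists of $d+2$ affinely spanning points, its space of affine dependencies is one-dimensional, and any circuit contained in $U_W$ must be its unique circuit. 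So the task reduces to producing one interior wall $W$ with $Z\subseteq U_W$. Throughout I would set $d:=\dim\mathbf{A}$ and, using the freedom of sign in the circuit dependency, orient $Z$ so that $\mathscr{T}_+(Z)\subseteq\mathscr{T}$, writing $L$ for the common link of the maximal simplices of $\mathscr{T}_+(Z)$ (this is the setup of the flippability criterion around \cref{eq:embedded}).

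The crux — and what I expect to be the only genuinely delicate point — is the reduction showing that the hypothesis ``all elements of $Z$ are used in $\mathscr{T}$'' forces $|Z_+|\ge 2$. I would argue by contradiction: if $Z_+=\{z_+\}$, then $\mathscr{T}_+(Z)=\{Z\setminus z_+\}$ is the single simplex on $Z_-$, and the circuit dependency of $Z$ expresses $z_+$ as a strict convex combination of $Z_-$, so $z_+\in\relint{\conv{Z_-}}$. Since the flip is supported at $Z$, the embedded complex $\{\sigma\cup\eta:\sigma\in\mathscr{T}_+(Z),\,\eta\in L\}$ of \cref{eq:embedded} is contained in $\mathscr{T}$, so $\conv{Z_-}$ is a face of maximal simplices of $\mathscr{T}$; hence $z_+$ lies in the relative interior of a cell of $\mathscr{T}$ and cannot be a vertex, contradicting that $z_+$ is used. (The alternative $|Z_+|=0$ is impossible because a flippable circuit has $|Z_+|,|Z_-|\ge 1$.) Conversely, when $|Z_+|\ge 2$ one has $\bigcup\mathscr{T}_+(Z)=Z$, so under the standing orientation ``all elements of $Z$ used'' is \emph{precisely} the condition $|Z_+|\ge 2$ — the parenthetical ``not an insertion flip.'' I would take some care to confirm this characterization is independent of the orientation choice in the appropriate sense.

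Granting $|Z_+|\ge 2$, the rest is bookkeeping with the star/link structure. I would fix distinct $z,z'\in Z_+$ and a maximal face $\eta\in L$, and define
\begin{equation}
    W := \bigl(Z\setminus\{z,z'\}\bigr)\cup\eta .
\end{equation}
Then $W\in\mathscr{T}$, being a facet of $(Z\setminus z)\cup\eta\in\mathscr{T}$; $W$ is codimension one, because $(Z\setminus z)\cup\eta$ is a maximal simplex of $\mathscr{T}$ (by the structure of the flip, \cref{eq:flip}) and hence has $d+1$ elements, forcing $|\eta|=d+2-|Z|$ and $|W|=|Z|-2+|\eta|=d$; and $W$ is interior, since it is a facet of the two distinct maximal simplices $W\cup\{z'\}=(Z\setminus z)\cup\eta$ and $W\cup\{z\}=(Z\setminus z')\cup\eta$, while a codimension-one simplex of a triangulation of a point configuration lies in at most two maximal simplices. (Distinctness uses $\eta\cap Z=\varnothing$, which holds because $\eta$ lies in the link.) Thus the apexes of $W$ are $z$ and $z'$, and $U_W=W\cup\{z,z'\}=Z\cup\eta\supseteq Z$.

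It then remains to identify the unique circuit of $U_W=Z\cup\eta$ with $Z$: because $Z$ is a circuit, $\mathrm{aff}(Z)=\mathrm{aff}(Z\setminus z)$, and because $(Z\setminus z)\cup\eta$ is a simplex, the affine spans of $Z\setminus z$ and of $\eta$ are affinely independent; hence in any affine dependency on $Z\cup\eta$ the $\eta$-coefficients must sum to zero, and affine independence of $(Z\setminus z)\cup\eta$ then forces the dependency to be supported on $Z$, i.e. a scalar multiple of the circuit dependency of $Z$. Therefore the circuit of $U_W$ is $Z$, so $W$ is a witness wall. In summary the steps, in order, are: (i) reformulate ``witness wall'' via the unique circuit of $U_W$; (ii) deduce $|Z_+|\ge 2$ from the hypothesis; (iii) construct $W$ and check it is an interior codimension-one face with $Z\subseteq U_W$; (iv) verify $Z$ is the circuit of $U_W$. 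I expect (ii) — disentangling ``all elements used'' from the insertion case and showing it is exactly $|Z_+|\ge 2$ — to be the main obstacle; everything after that follows mechanically from \cref{eq:embedded} and \cref{eq:flip}.
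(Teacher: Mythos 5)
Your proof is correct. One thing to be aware of: the paper itself gives no proof of this statement — it is quoted verbatim as Lemma 4.4.9 of \cite{De_Loera2010-ss} and used as a black box to justify the witness-wall search in \cref{alg:flippable_circuits} — so there is no in-paper argument to compare against; your write-up is essentially a reconstruction of the standard argument from the reference. The two key moves are sound: (ii) with the orientation $\mathscr{T}_+(Z)\subseteq\mathscr{T}$, if $|Z_+|=1$ the dependency exhibits $z_+$ as a strict convex combination of $Z_-$, and since $Z_-\in\mathscr{T}$ (it is a face of cells of the embedded complex), having $\{z_+\}\in\mathscr{T}$ would violate the disjoint-relative-interior axiom of \cref{def:subdivision}, so ``all of $Z$ used'' indeed forces $|Z_+|\geq 2$; and (iii)--(iv) the wall $W=(Z\setminus\{z,z'\})\cup\eta$ works. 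The only step you gloss with ``by the structure of the flip'' deserves one explicit line: for $\eta$ maximal in the common link $L$, the cell $(Z\setminus z)\cup\eta$ is maximal in $\mathscr{T}$ (any cell $\tau\supsetneq(Z\setminus z)\cup\eta$ would put $\tau\setminus(Z\setminus z)\in L$ strictly above $\eta$), and maximal cells of a triangulation of a point configuration are full-dimensional, which is what gives $|\eta|=d+2-|Z|$ and hence $\dim W=d-1$. With that line added, the argument is complete; note also that the paper's definition of witness wall only asks that $Z$ be contained in the union of two maximal simplices sharing $W$, so your extra verification that $Z$ is the unique circuit of $W\cup\{z,z'\}$ is a harmless strengthening rather than a needed step.
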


The circuit $Z$ in this lemma corresponds to a circuit (i.e., minimally dependent subconfiguration) $\mathbf{A}'\subseteq\mathbf{A}$ in our notation. In this lemma, a ``witness wall'' to a flip is simply a $\dim(\mathscr{T})-1$ dimensional simplex $\sigma\in\mathscr{T}$ such that the circuit $\mathbf{A}'\subseteq\text{st}_\mathscr{T}(\sigma)$. That is, the wall $\sigma$ is contained in two maximal simplices $\sigma',\sigma''\in\mathscr{T}$ for which $\mathbf{A}'\subseteq\sigma'\cup\sigma''$.  While the above lemma is stated specifically for point configurations, it also holds for vector configurations. This gives one form of the flippable circuit algorithm, \cref{alg:flippable_circuits}.

\begin{algorithm}[H]
\caption{flippable circuits}
\label{alg:flippable_circuits}
\begin{algorithmic}
\State let $\mathscr{T}(\mathbf{A},\omega)$ be a fine, regular triangulation of a (totally cyclic) vector configuration, $\mathbf{A}$
\State let $C = [\,]$
\For{each $\sigma$ a $\dim(\mathscr{T})-1$ dimensional simplex of $\mathscr{T}$}
    \State let $\mathbf{A}' = \text{st}_\mathscr{T}(\sigma)$ with associated label a $J'$
    \State compute the dependency $\lambda$ such that $\mathbf{A}'\lambda=0$
    \If{either $|J_+|=0$ or $|J_-|=0$}
        \State continue to next iteration
    \ElsIf{both $\mathscr{T}_+(\mathbf{A}')\not\subseteq\mathscr{T}(\mathbf{A})$ and $\mathscr{T}_-(\mathbf{A}')\not\subseteq\mathscr{T}(\mathbf{A})$}
        \State continue to next iteration
    \EndIf
    \State orient $\lambda$ such that $\mathscr{T}_+(\mathbf{A}')\subseteq\mathscr{T}(\mathbf{A})$
    \If{$\text{link}_{\mathscr{T}(\mathbf{A})}(\sigma)$ is constant for all maximal $\sigma\in\mathscr{T}_+(\mathbf{A}')$}
        \State let $\lambda'_j = \begin{cases}
            0,         & j\notin J'\\
            \lambda_j, & j\in J'
        \end{cases}$
        \State append $\lambda'$ to $C$
    \EndIf 
\EndFor
\State \textbf{return} $C$
\end{algorithmic}
\end{algorithm}

\cref{alg:flippable_circuits} is better than iterating over all subsets $\mathbf{A}'\subseteq\mathbf{A}$ since it only loops over
\begin{equation}
    \frac{1}{2}\cdot\dim(\mathbf{A})\cdot\bigg|\setbuilder{\sigma\in\mathscr{T}}{\dim(\sigma)=\dim(\mathscr{T})}\bigg|
\end{equation}
subsets, for a totally cyclic $\mathbf{A}$. I.e., the number of iterations in \cref{alg:flippable_circuits} scales linearly with the number of maximal simplices --- this is much better than the exponential scaling on $|\mathbf{A}|$. Together, \cref{alg:secondary_fan,alg:flippable_circuits} form a fairly efficient algorithm of generating the secondary fan $\Sigma$ of $\mathbf{A}$.

There is one further improvement which we sketch here. For each triangulation $\mathscr{T}$ seen in \cref{alg:secondary_fan}, the neighbors must be computed. That is, naively, \cref{alg:flippable_circuits} must be run on all such $\mathscr{T}$. This is wasteful, especially for $\mathscr{T}$ with a large number of maximal simplices. This is because flips are local modifications on triangulations so, if $\mathscr{T}'$ is a neighbor of $\mathscr{T}$, a large number of witness walls $\sigma\in\mathscr{T}$ will be unchanged and have $\text{st}_\mathscr{T}(\sigma) = \text{st}_\mathscr{T'}(\sigma)$. Such witness walls will define the same circuit, have the same link, and hence be embedded in $\mathscr{T}'$ if and only if they are embedded in $\mathscr{T}$. To avoid such redundant computations it suffices, when flipping, to copy all circuits $\mathbf{A}'$ of $\mathscr{T}$ to $\mathscr{T}'$, other than those for which there exists a simplex $\sigma$ that is being flipped out,
\begin{equation}
    \sigma\in\setbuilder{\sigma'\cup\sigma''}{\sigma'\in\mathscr{T}_+(\mathbf{A}'), \sigma''\in L}.
\end{equation}
This leaves only a small fraction of witness walls to be checked post-flip. 

\section{4D Reflexive Polytopes I: Theoretical Aspects}

\label{sec:ref_poly}

Having spent some time developing the theory of triangulations and toric geometry and studying the classification problem for fine regular triangulations/fans --- culminating in the computational methods of \cref{sec:algo} --- we now apply our newfound knowledge to better understand the toric varieties of the Kreuzer--Skarke database and their CY hypersurfaces, with the goal of constructing new CYs and mapping out more of their K\"ahler moduli spaces. We begin with a recollection of Batyrev's original theory of smooth CY hypersurfaces from FRSTs \cite{Batyrev:1993oya}, as well as some properties of these hypersurface CYs. This will then continue into a more general discussion of the birational geometry of toric varieties and CY hypersurfaces arising from four-dimensional reflexive polytopes: in particular, the role that vex triangulations play. 

For clarity, we note now that because in \cref{sec:pc_vs_vc} we saw how FRSTs induce simplicial fans, we will now proceed with a minor abuse of terminology by letting FRSTs --- formally point configuration triangulations --- refer also to the vector configuration triangulations/simplicial fans they induce.

\subsection{Review of Calabi--Yau Hypersurfaces from FRSTs}
\label{sec:frst_cy}

\subsubsection{Batyrev's Construction}
\label{sec:batyrev}

The anticanonical hypersurface in \textit{any} compact projective toric variety is compact, K\"ahler, and has trivial canonical class, but achieving smoothness is more difficult. In \cite{Batyrev:1993oya} Batyrev identified a set of sufficient conditions for the anticanonical hypersurface in a toric variety to be smooth, as we review now.

Singularities on a hypersurface $X \subset V$ can arise from intersection with singularities of $V$ or failures of the Jacobian criterion (i.e., solutions to $F = dF = 0$ for $F = 0$ the hypersurface equation). Bertini's theorem states that the Jacobian criterion is satisfied away from the basepoint locus of the class of the hypersurface in question. A sufficient condition for achieving the criterion everywhere in our case, then, is to impose that the anticanonical class is basepoint free. This is true for ample divisors. A smooth hypersurface contained in the smooth locus of a normal variety must also be Cartier\footnote{A smooth subvariety of a smooth variety is a regular embedding, and codimension-one regular embeddings are Cartier divisors. Thus, a smooth hypersurface $Y$ contained in the smooth locus of a variety is a Cartier divisor on that smooth locus. For normal varieties, whose singular loci are in codimension $\geq 2$, a Cartier divisor on the smooth locus induces a Cartier divisor on the entire variety. \label{foot:cartier_and_smooth}} so we are motivated to consider Gorenstein Fano toric varieties (\cite{cls}, Def. 8.3.1). Famously, the fan of such a toric variety must be the normal fan $\Sigma_\Delta$ of the polar dual $\Delta \subset M$ of the convex hull $\Delta^\circ \subset N$ of its minimal generators $u_\rho$. That is, $\Delta$ must be the anticanonical Newton polytope (ampleness) and $\Delta$ must be lattice (Cartier). Thus the rays of the fan $\Sigma_\Delta$ are exactly those generated by the vertices of a reflexive polytope $\Delta^\circ \subset N$ dual to $\Delta$, with higher-dimensional cones given by the faces of $\Delta^\circ$. That is, $\Sigma_\Delta$ is the central fan of $\Delta^\circ$.

This toric variety has at worst canonical singularities (\cite{cls}, Prop. 11.4.11) but generically does have many such singularities. Indeed, the associated fan is typically non-simplicial and the convex hulls $\Pi_\sigma$ over minimal generators of maximal cones $\sigma$ typically contain interior lattice points.
This is handled by blowing up all of the associated canonical singularities, adding in rays generated by these interior points (corresponding to new, exceptional prime toric divisors) and subdividing the cones of $\Sigma_\Delta$ to arrive at a simplicial fan $\Sigma'$.\footnote{A standard result is that prime toric divisors corresponding to points interior to facets do not intersect the generic anticanonical hypersurface: the typical argument exploits that the fan is a refinement of the central fan but this argument can be generalized to all fine fans. A consequence is that the canonical singularities that exist in lieu of such blowups do not affect generic CY hypersurfaces: thus, we do not need to consider points interior to facets in general for fine fans. See \cref{sec:intersect} for further discussion.} By definition, a resolution of (strictly) canonical singularities is \textit{crepant}, or preserves the canonical class, so the pullback of the anticanonical hypersurface in $V_{\Sigma_\Delta}$ remains the anticanonical hypersurface in $V_{\Sigma'}$.\footnote{This is essentially a consequence of the fact that adding interior points of $\Delta^\circ$ doesn't change the Newton polytope of the anticanonical class: i.e., $\Delta_{\overline{K}}$ is the same on $\Sigma_\Delta$ and $\Sigma'$. This follows from the result we stated earlier that the anticanonical Newton polytope depends only on the vertices of $\Delta^\circ$. If we had introduced a new vertex, the anticanonical Newton polytope would shrink, and correspondingly the pullback of the anticanonical class of $\Sigma_\Delta$ would differ from that of $\Sigma'$ by a non-trivial \textit{discrepancy}, or a linear combination of the exceptional blown-up prime toric divisor(s).} The resulting Gorenstein simplicial toric variety has at most terminal singularities (\cite{cls}, Prop. 11.4.12) so its singular locus is at codimension $\geq 4$ (\cite{cls}, Prop. 11.4.22) and is therefore at worst point-like on a four-dimensional toric variety, which will not intersect the generic anticanonical hypersurface. On the resulting toric variety, the anticanonical class will in general only be nef, but this is still enough for the class to be basepoint free (\cite{cls}, Th. 6.3.12), so the Jacobian criterion continues to hold. Thus, the generic anticanonical hypersurface is smooth.

\subsubsection{Interpretation with Triangulations}

We've just described a somewhat abstract algebraic geometric process for constructing smooth Calabi--Yau threefolds, but there is nice intuition to be gained by thinking about this process from the perspective of triangulations and the secondary fan. In particular, this will make the role of fine, regular, star (point configuration) triangulations clear. 

Consider the vector configuration $\mathbf{A}$ associated to a reflexive polytope $\Delta^\circ$. We would like to consider a subdivision of this configuration whose associated fan is the central fan. This choice is actually usually non-unique, because subdivisions and fans are not quite in bijection given our definitions (as discussed in \cref{sec:triang}), unlike the triangulations and simplicial fans (we discuss this in detail in \cref{sec:translate}). Nevertheless, there is a canonical choice of subdivision which we will call the \textit{central subdivision} of $\Delta^\circ$: namely, the regular subdivision $\mathscr{S}(\mathbf{A}, 1) := \mathscr{S}(\mathbf{A}, (1, \dots, 1))$. The fan associated to this subdivision is indeed the central fan.\footnote{In particular, we can identify the height vector $(1, \dots, 1)$ with the torus-invariant divisor $D = \sum_\rho 1 \cdot D_\rho$, which belongs to the anticanonical class $\overline{K}$ of the associated toric variety, so the normal fan of its Newton polytope is the central fan.} It's important to note that while the cells of $\mathscr{S}(\mathbf{A}, 1)$ are in bijection with the faces of $\Delta^\circ$, the cells $F$ associated to a facet $\Theta$ consist of \textit{all} lattice points on $\Theta$, including non-vertex lattice points of $\Theta$. That is, $\mathscr{S}(\mathbf{A}, 1)$ is a fine subdivision, but the non-vertex lattice points do not correspond to one-cones.

With this in hand, recall the non-unique process from the previous section of starting with the central fan $\Sigma_\Delta$ and ending with a simplicial fan $\Sigma'$. We have now associated $\Sigma_\Delta$ to the central subdivision $\mathscr{S}(\mathbf{A}, 1)$, and $\Sigma'$ corresponds to a unique fine regular triangulation $\mathscr{T}(\mathbf{A}, \omega)$ for some height vector $\omega$. In this language, the process of passing from $\mathscr{S}(\mathbf{A}, 1)$ to $\mathscr{T}(\mathbf{A}, \omega)$ is \textit{refinement}, as discussed in \cref{sec:regular}, because every cell in the latter is contained in a cell of the former by construction. Recall that, for a regular subdivision, a height vector on the boundary of its secondary cone gives rise to a coarsening of the original subdivision. Thus, since $\mathscr{T}(\mathbf{A}, \omega)$ (maximally) refines $\mathscr{S}(\mathbf{A}, 1)$, the (solid) secondary cone $\mathbf{C}(\mathbf{A}, \mathscr{T}(\mathbf{A}, \omega))$ contains the height vector $(1, \dots, 1)$ on its boundary. From \cref{sec:pc_vs_vc}, $\mathscr{T}(\mathbf{A}, \omega)$ induces a fine regular star triangulation (FRST) of the point configuration associated to $\Delta^\circ$ (and we recall that we will abuse notation by also calling $\Sigma'$ an FRST). FRSTs are then exactly the fine regular triangulations of $\mathbf{A}$ whose secondary cones contain $(1, \dots, 1)$. We note that though $\mathscr{S}(\mathbf{A}, 1)$ and $\mathscr{T}(\mathbf{A}, \omega)$ are both fine, the latter has more one-cones, as only in the latter do the non-vertex lattice points of $\Delta^\circ$ correspond to one-cones: the associated prime toric divisors are exactly the exceptional divisors resulting from blowing up the canonical singularities from the original singular toric variety associated to $\mathscr{S}(\mathbf{A}, 1)$. 

This triangulation theoretic discussion enables an algebraic geometric observation. A variety is \textit{weak-Fano} if the anticanonical class is nef and big. A divisor on a toric variety is big if its Newton polytope is of maximal dimension (see \S9.3 in \cite{cls}), which will always be true of the anticanonical Newton polytope. We've thus seen that toric varieties are associated to FRSTs if and only if they are weak-Fano. In particular, they are Fano if and only if $\Delta^\circ$ admitted a unique FRST (i.e., the central subdivision actually was a triangulation), which in turn holds if and only if the only non-zero lattice points of $\Delta^\circ$ were vertices and each of its $k$-faces has exactly $k+1$ vertices.

For a concrete illustration of this discussion of the secondary fan, we refer the reader to \cref{ex:192} and the accompanying plot in \cref{fig:192}, where a slice of the chamber fan illustrates how the ray generated by the class $\overline{K}$ (black dot), the image of the vector $(1,\dots,1)$ in the chamber fan, intersects five secondary cones, corresponding to the five FRSTs which refine the cells of the central fan.

\subsubsection{Calabi--Yau Hypersurface Properties}

\label{sec:frst_cy_prop}

We now briefly review some standard properties of Calabi--Yau hypersurfaces arising from FRSTs.

First, as introduced in \cref{ex:intro}, the diffeomorphism class of a CY is determined by its Wall data, or its two non-trivial Hodge numbers $h^{1,1}$ and $h^{2,1}$, its intersection form, and its second Chern class \cite{wall_classification_1966}. In the FRST case, CY hypersurface Hodge numbers can be computed, for example, using the stratification methods of Danilov and Khovanskii \cite{danilov_newton_1987} (summarized in our case by Th. 4.3.7 in \cite{Batyrev:1993oya}). The CY intersection numbers can be computed by contracting those of the ambient space with the anticanonical class, and the second Chern class can similarly be computed from the total Chern class of the ambient space by adjunction.

Having briefly touched on topology, let us now turn our attention to the (birational) geometry of CY hypersurfaces from FRSTs, elaborating on our discussion from the introduction. Facets of general CY threefold K\"ahler cones have been classified \cite{wilson1992kahler, Witten:1996qb}: either a disjoint union of rigid $\mathbb{P}^1$s belonging to a common curve class $C$ contracts (i.e., shrinks, or is blown down) to singular points, or effective divisor(s) are contracted. The former case corresponds to a flop transition, as the singular points can be blown up into the curve class $-C$, yielding a naively distinct CY topology $X'$ with its own K\"ahler cone $K(X')$. In particular, $K(X)$ and $K(X')$ share a facet consisting of K\"ahler forms $J$ satisfying $J \cdot C = 0$. In the latter case, there exists no such continuation of the geometric moduli space. 

We define a birational equivalence class of CYs to be the set of CYs related by flops,\footnote{This is consistent with our convention in \cref{sec:toric_bir} that birational maps be isomorphisms in codimension one.} the CY analog to a flip.\footnote{We now take a moment to clarify what's going on with the names ``flip'' and ``flop''. Both refer to a birational map achieved by contracting a curve whose class $C$ is an extremal ray of the Mori cone: that is, an extremal contraction. The difference between the two is that flips $X \dashrightarrow Y$ technically satisfy $K_X \cdot C \leq 0$ and $K_Y \cdot C \geq 0$ while flops are defined to be the special case where $K_X \cdot C = K_Y \cdot C = 0$. We join much of the literature in abusing terminology by referring to both a flip and its inverse as flips (technically one of the directions will not be a flip, but rather an ``anti-flip''). For a Calabi--Yau variety, $K_X = K_Y = 0$ so such an extremal contraction is always a flop, while for compact toric varieties it is always just a flip and never a flop.} The union of their K\"ahler cones forms a fan, the extended K\"ahler cone $\mathcal{K}(X)$. This is analogous to how the toric moving cone $\mathrm{Mov}(V_\Sigma)$ decomposes as the disjoint union of toric K\"ahler cones $\Gamma(V_\Sigma)$: indeed, $\mathcal{K}(X)$ is the moving cone of $X$. 

These decompositions are, in fact, related. There is an inclusion $H^2(V_\Sigma) \hookrightarrow H^2(X)$ of cohomology given by restriction, which descends to an inclusion of K\"ahler cones because K\"ahler forms on $V_\Sigma$ yielding non-negative holomorphic cycle volumes will, in particular, yield non-negative holomorphic cycle volumes on $X$. Thus, the K\"ahler cone $\Gamma(V_\Sigma)$ is a subcone of the K\"ahler cone $K(X)$ of $X$, bounding $K(X)$ from within. We say that $\Gamma(V_\Sigma)$ is an \textit{inner approximation} of $K(X)$, and we will continue to use this term to describe such inclusions. Indeed, there are facets of $\Gamma(V_\Sigma)$ whose K\"ahler forms assign zero volume to holomorphic cycles on $V_\Sigma$ which do not intersect the CY hypersurface: these facets do not descend to $K(X)$. In this way, some flips of $V_\Sigma$ restrict to flops of $X$ (if a curve shrinks on both $V_\Sigma$ and $X$) and some do not (if a curve only shrinks on $V_\Sigma$). 

In particular, there is a well-understood criterion for a flip between FRSTs to descend to a flop of CYs. The Wall data of a Calabi--Yau hypersurface arising from an FRST depends only on the restriction of an FRST to the two-faces of the polytope $\Delta^\circ \subset N$ \cite{Demirtas:2020dbm}: we say two FRSTs with the same two-face triangulations are two-face equivalent. Moreover, a flip descends to a flop if and only if it affects the two-face triangulations. That is, the flip contracts a toric subvariety that intersects the CY hypersurface only if it affects the two-face triangulations (i.e., the associated cone has minface dimension $\leq 2$, rather than $3$: see \cref{sec:intersect} for further discussion).\footnote{This actually provides a different way to see that two-face equivalent triangulations yield diffeomorphic Calabi--Yau hypersurfaces, without Wall's theorem: the sequence of two-face-preserving flips which relate the two triangulations induces a smooth homotopy of the Calabi--Yau hypersurfaces.}

This naturally leads to the toric inner approximation $K_\cup(X)$ \cite{Demirtas:2018akl,cox1999mirror} (``K-cup'') for CY hypersurface K\"ahler cones. Let $\mathscr{V}(X)$ denote the set of toric varieties arising from FRSTs related to $V_\Sigma$ by flips which do not descend to flops (do not affect two-faces): then
\begin{equation}
    \Gamma(V_\Sigma) \; \subseteq \; K_\cup(X) = \bigcup_{V_{\Sigma'} \in \mathscr{V}(X)} \Gamma(V_{\Sigma'}) \; \subseteq \; K(X).
\end{equation}
The lift of this cone to the space of torus-invariant divisors/heights --- i.e., the unions of secondary cones which triangulate the 2-skeleton of the polytope $\Delta^\circ$ in the same way --- was studied by one of the authors in \cite{macfadden2023efficient}, where an efficient algorithm for generating such unions of secondary cones was presented.

We analogously get an inner approximation $\mathcal{K}_\cup(X)$ of the extended K\"ahler cone.
\begin{equation}
    \mathcal{K}_\cup(X) := \bigcup_{\Sigma' \text{ from an FRST}} \Gamma(V_{\Sigma'}) \subset \mathcal{K}(X).
\end{equation}
This is how the combinatorics of toric geometry serves to map out the K\"ahler moduli space of CY hypersurfaces. 

\subsection{FRSTs vs. Vex Triangulations}
\label{sec:frst_vs_vex}

From the previous section we see how FRSTs are a very natural class of fans to consider for the purpose of generating smooth CY hypersurfaces. However, one is led to wonder about the fans which do not arise from FRSTs. We follow earlier references which studied such fans for the purposes of CY geometry and string theory \cite{Berglund:2016yqo, Berglund:2016nvh, Berglund:2022dgb, Berglund:2022zto, Berglund:2024zuz, Huang:2019pne, Jefferson:2022ssj, vex_notes}, and label non-FRST fans \textit{vex triangulations}.\footnote{One technicality ought to be noted: vex triangulations are defined such that their associated toric varieties must be Gorenstein (equivalently, for all maximal cones with rays $u_1, \dots, u_n$, the affine hyperplane containing each $u_i$ must have a lattice distance of $1$ from the origin in the sense of Def. 4.1.4 in \cite{cls}). We thank Tristan Hubsch for clarifying this for us. This property holds for any toric variety with smooth anticanonical hypersurface avoiding toric singularities (recalling \cref{foot:cartier_and_smooth}, such smooth divisors are Cartier, and Gorenstein means Cartier anticanonical class), so the Gorenstein property is a corollary of our main result --- \cref{main} --- that vex triangulations yield smooth CY hypersurfaces. We discuss this further in \cref{sec:sing_locus}. However, in general, a non-FRST fine regular simplicial fan of a polytope need not be a vex triangulation.} In this work, whenever we employ this term, it is implied that we restrict to fine, regular, simplicial fans. We know that such fans exist: indeed, we saw one in \cref{ex:intro}. We also now understand how to systematically generate FRSTs and vex triangulations alike, thanks to the theory we developed in \cref{sec:triang}, \cref{sec:toric}, and \cref{sec:sec_fan}, culminating in the algorithms of \cref{sec:algo}. The time has thus now come to understand the toric geometry of vex triangulations of reflexive polytopes and their anticanonical hypersurfaces. It is worth beginning by carefully understanding what distinguishes FRSTs and vex triangulations --- and in which ways they are essentially the same. Among other things, this will clarify which of the assumptions one often makes when working with FRSTs must be let go when transitioning to vex triangulations. 

First, though, we should stress that the general techniques and results of triangulation theory and toric geometry are largely agnostic to the FRST-vex distinction: e.g., we never needed to make reference to this distinction in \cref{sec:triang} or \cref{sec:toric}. Indeed, the methods for computing topological data such as the class group, intersection numbers, Chern classes, etc. which we illustrated in \cref{sec:toric} hold for all toric varieties. It is really only when one is interested in their Calabi--Yau hypersurfaces that one is encouraged to restrict to FRSTs. Nevertheless, though FRSTs are especially nice for constructing smooth Calabi--Yau threefolds, we will argue that vex triangulations are just as amenable to this purpose.

There are several simple criteria that distinguish FRSTs from vex triangulations, each with their own utility. We stress that every condition below is an if-and-only-if. At the end of the section, in \cref{ex:baby_vex}, we illustrate these criteria in a simple three-dimensional toy model.
\begin{itemize}
    \item \textbf{FRSTs refine the central subdivision.}
    \item \textbf{FRST secondary cones contain $(1, \dots, 1)$.}
    \item \textbf{FRSTs yield weak-Fano toric varieties.} The height vector $(1,\dots,1)$ can be interpreted as the torus-invariant divisor $\sum_\rho 1 \cdot D_\rho$, which we know belongs to the anticanonical class $\overline{K}$. Thus, by projecting from the secondary fan down to the chamber complex --- sending torus-invariant divisors to their divisor classes --- the secondary cone maps to the nef cone and contains $\overline{K}$. A variety is Fano if $\overline{K}$ is ample, and weak-Fano if $\overline{K}$ is nef, so the central fan gives rise to a Fano toric variety and all of its refinements give rise to (at least) weak-Fano toric varieties. 
    \item \textbf{FRST fans yield basepoint free $\overline{K}$.} This is an immediate corollary of the previous bullet, as divisors on compact toric varieties are basepoint free if and only if they are nef (\cite{cls}, Th. 15.1.1).
    \item \textbf{FRSTs are \textit{star}.} This is obvious from the name, but worth stressing: vex triangulations of reflexive polytopes, on the other hand, do induce triangulations of the point configuration associated to the polytope, but these are non-star, as discussed in \cref{sec:pc_vs_vc}. Similar to how we let the term ``FRST'' denote triangulations of both the associated point and vector configurations, one can think of vex triangulations either as fans or their induced non-star point configuration triangulations.
    \item \textbf{The primitive/minimal generators of any cone in an FRST are all contained in a common facet.} The only obstruction to a fine regular fan being a refinement of the central fan (or defining a star triangulation of the point configuration) is the presence of a cone whose primitive generators are not all contained in a common facet. This is not an immediate corollary of our discussion, but follows directly from Theorem 4.5.9 in \cite{De_Loera2010-ss}.
    Additionally, we will find in \cref{sec:basepoint} that the basepoint loci of $\overline{K}$ are exactly the toric subvarieties of basepoint cones. This is a toric geometric reason that a fine regular fan is not an FRST if and only if it features cones not contained in facets.
\end{itemize}

A couple of comments are in order. First, regarding point configuration triangulations (i.e., the second-to-last bullet above), let us make contact with a concept introduced in \cite{Berglund:2016nvh}: namely, \textit{VEX polytopes}. While vex triangulations induce non-star triangulations of reflexive polytopes, they induce star triangulations of a particular VEX polytope. In particular, recalling \cref{eq:simplex_poly}, to any fan $\Sigma$ constructed from an $n$-dimensional reflexive polytope $\Delta^\circ$ one can associate the union 
\begin{equation}
    \nabla = \bigcup_{\sigma \in \Sigma(n)} \Pi_\sigma.
\end{equation}
If $\Sigma$ was an FRST, $\nabla = \Delta^\circ$ because the $\Pi_\sigma$ fully triangulate $\Delta^\circ$. However, if $\Sigma$ is a vex triangulation, $\nabla$ is $\Delta^\circ$ with the non-star simplices of the associated point configuration triangulation excised. This non-convex union of simplices is what is called the VEX polytope \cite{Berglund:2016nvh}, and $\Sigma$ does triangulate it with the simplices $\Pi_\sigma$. In particular, $0$ belongs to each $\Pi_\sigma$, so we say the vex triangulation $\Sigma$ star-triangulates the VEX polytope $\nabla$. We will not employ VEX polytopes further.

Second, on the topic of point configuration triangulations, recall that in \cref{sec:pc_vs_vc} we introduced a map between point configuration triangulations and vector configuration triangulations of the homogenization. It's worth briefly understanding this from a toric geometry/physics perspective. Interpreting the fan of a toric variety $V$ as a point configuration triangulation --- possibly non-star --- one can consider the toric variety $V'$ associated to the homogenization. One can show that $V'$ is the total space of the canonical bundle of $V$, a non-compact toric Calabi--Yau fourfold and a natural environment for two-dimensional $N = 2$ supersymmetric theories (see, e.g., \S4 of \cite{Witten:1993yc}). The non-complete fan has a complete secondary fan/chamber fan: in particular, the charge matrix for $V'$ involves appending the non-effective class $K_V$ to the charge matrix for $V$. Thus, all of the original secondary cones from the secondary fan of $V$ exist for $V'$, but now there are additional secondary cones: in certain examples, these novel cones have been shown to correspond to non-geometric phases of compactifications of string theory on the CY hypersurfaces associated to the toric varieties of the original FRSTs/vex triangulations.

To conclude this section, we present a toy model exhibiting the FRST-vex dichotomy. We will consider an acyclic triangulation/non-complete fan (i.e., a non-compact toric variety) in three dimensions, for conceptual simplicity, but one could imagine this setup being embedded inside of a complete three-dimensional fan.
\begin{example}
    Consider the following vector configuration.
    \begin{equation}
        \mathbf{A} = \begin{pNiceMatrix}[first-row]
            1 & 2 & 3 & 4 \\
            0 & 1 & 0 & 1 \\
            0 & 0 & 1 & 1 \\
            1 & 2 & 2 & 2
        \end{pNiceMatrix}.
    \end{equation}
    See \cref{fig:toy}. We will denote the columns of this matrix by $u_i$. We can choose the following charge matrix $Q$.
    \begin{equation}
        Q = \begin{pNiceMatrix}[first-row]
            1 & 2 & 3 & 4 \\
            2 & -1 & -1 & 1
        \end{pNiceMatrix}.
    \end{equation}
    One way to understand the FRSTs of this vector configuration is to identify which fans induce full triangulations of the associated point configuration --- the three dimensional polytope $\Delta^\circ$ given as the convex hull of the $u_i$ and the origin, $0$ --- and cover all faces with strictly star simplices. $\Delta^\circ$ has six facets (codimension-one faces): four of them contain the origin, while two do not. 
    
    The four facets containing the origin will not be interesting to us here, which we can see in at least two ways. First, by the definition of a subdivision, the support of any fan must be the convex hull of these rays, so the four origin-containing facets must be facets of cones in our fan. Alternatively, as stated before the beginning of this example, one can imagine embedding this configuration inside of an acyclic/complete configuration, in which case these facets would no longer be relevant. The two important facets are thus $\conv{u_1,u_2,u_3}$ and $\conv{u_2,u_3,u_4}$.
    
    This vector configuration admits two fine triangulations/fans: an FRST $\Sigma_1$ which respects these two faces, and a vex triangulation $\Sigma_2$ which does not. 
    
    Let us begin with $\Sigma_1$. Its maximal cones $\sigma_{11}, \sigma_{12}$ are just the cones over the two facets: $\sigma_{11} = \cone{u_1,u_2,u_3}$ and $\sigma_{12} = \cone{u_2,u_3,u_4}$. In particular, $\Delta^\circ = \Pi_{\sigma_{11}} \cup \Pi_{\sigma_{12}}$, so $\Sigma_1$ induces a star point configuration triangulation of $\Delta^\circ$. $\Sigma$ results from a height vector $\omega = (1,1,1,1)$: it was the central fan and required no further refinement. In this basis, projecting to the chamber fan reveals that the nef/K\"ahler cone for $V_{\Sigma_1}$ is $\mathbb{R}_{\geq 0}$, and the anticanonical class is $1$, so $V_{\Sigma_1}$ is indeed weak-Fano (in this case, it is in fact Fano). 
    
    Now we turn our attention to $\Sigma_2$. Its maximal cones are $\sigma_{21} = \cone{u_1,u_2,u_4}$ and $\sigma_{22} = \cone{u_1,u_3,u_4}$. Now, crucially, $\Delta^\circ \neq \Pi_{\sigma_{21}} \cup \Pi_{\sigma_{22}}$. In particular, their difference is the ``divot'' $\Pi' = \conv{u_1,u_2,u_3,u_4}$, a non-star simplex, so the induced point configuration triangulation must include this cell, yielding a non-star point configuration triangulation. This is the consequence of both maximal cones not being contained in a single facet. The nef/K\"ahler cone for $V_{\Sigma_2}$ is $\mathbb{R}_{\leq 0}$, so the associated toric variety is not weak-Fano.

    In this case, the VEX polytope that $\Sigma_2$ star-triangulates is $\Pi_{\sigma_{21}} \cup \Pi_{\sigma_{22}}$ or more explicitly $\conv{0,u_1,u_2,u_4} \cup \conv{0,u_1,u_3,u_4} = \conv{0,u_1,u_2,u_3,u_4} \setminus \conv{u_1,u_2,u_3,u_4}$. \label{ex:baby_vex}
\end{example}

\begin{figure}
    \centering
    \begin{minipage}{0.48\linewidth}
        \centering
        \includegraphics[width=\linewidth]{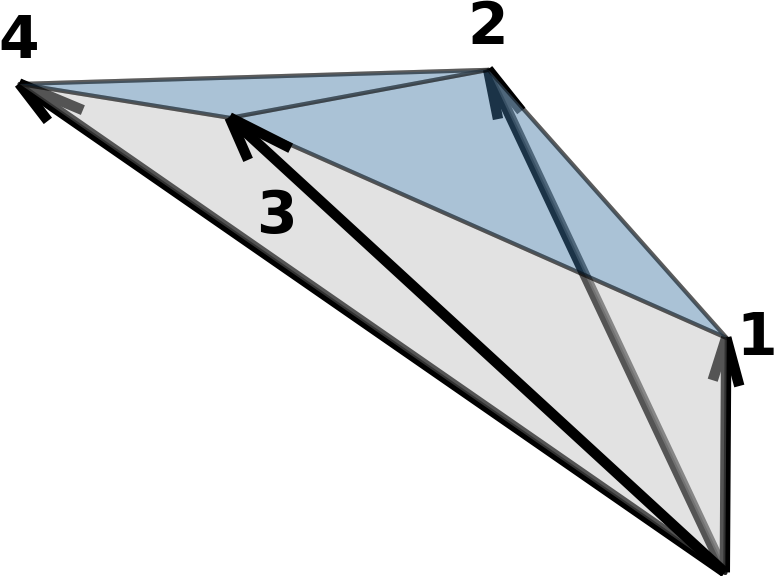}
    \end{minipage}
    \hfill
    \begin{minipage}{0.48\linewidth}
        \centering
        \includegraphics[width=\linewidth]{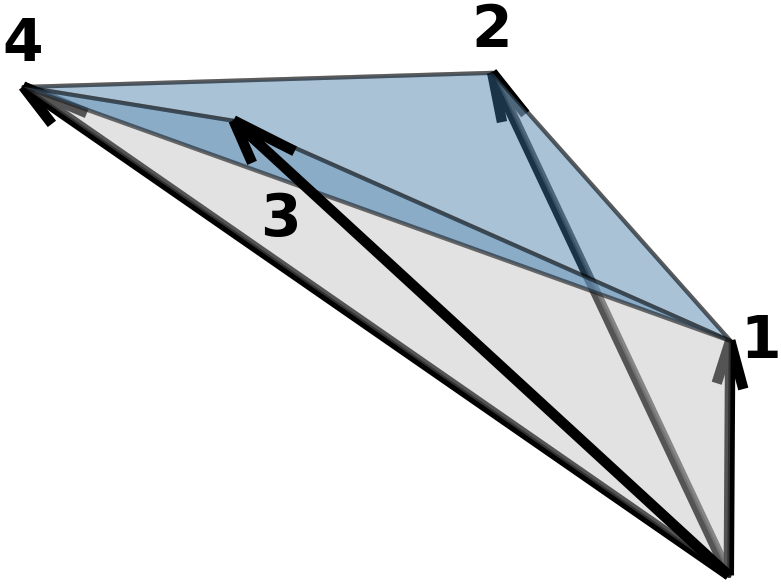}
    \end{minipage}
    \caption{The triangulations $\Sigma_1$ (left) and $\Sigma_2$ (right) of the vector configuration $\mathbf{A}$ from \Cref{ex:baby_vex}, which serve as toy models for FRSTs and vex triangulations, respectively. The polytopes $\Pi_{\sigma_{ij}}$ associated to the maximal cones are emphasized (their unique facet not containing the origin is colored in blue). For $\Sigma_1$, the union $\Pi_{\sigma_{11}} \cup \Pi_{\sigma_{12}}$ agrees with the convex hull $\Delta^\circ$ of the rays together with the origin, while for $\Sigma_2$ the union $\Pi_{\sigma_{21}} \cup \Pi_{\sigma_{22}}$ is strictly smaller than $\Delta^\circ$ by a divot --- this non-convexity is the characteristic property of vex triangulations.}
    \label{fig:toy}
\end{figure}

\subsection{Calabi--Yau Hypersurfaces from Vex Triangulations}
\label{sec:cy_hyper}

In this section, we study the topology and geometry of CY hypersurfaces arising from vex triangulations. We are immediately led to ask what changes from \cref{sec:frst_cy_prop} when considering vex triangulations. The intersection form and second Chern class can be computed for CY hypersurfaces exactly as they are for FRSTs, using the analogous data of the ambient non-weak-Fano toric variety. The methods of \cite{danilov_newton_1987} for computing Hodge numbers do not immediately apply, as they traditionally require that the hypersurface is basepoint-free. Moreover, it is \textit{a priori} unclear if the CY hypersurfaces are smooth. Additionally, if they are smooth, it is unclear when the vex triangulations of a fixed polytope are birationally related to the other triangulations (FRST or vex) of the same polytope.\footnote{For example, all fine regular simplicial fans of a four-dimensional reflexive polytope could give rise to smooth CY hypersurfaces, but a flip between two toric varieties could descend to the contraction of a divisor on the two CY hypersurfaces, meaning the two CYs were not necessarily birational. However, \cref{main} is the result that this does not happen.} In this section we will study the potential singularities of CY hypersurfaces in vex triangulations, culminating in the following result, which we prove in \cref{sec:smooth_birational}. 
\begin{prop}
    \label{main}
    (Fine, regular) fans constructed from a fixed four-dimensional reflexive polytope yield Gorenstein toric varieties with smooth Calabi--Yau hypersurfaces; moreover, all such CY hypersurfaces belong to the same birational equivalence class.
\end{prop}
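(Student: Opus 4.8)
### Proof proposal for \cref{main}

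The plan is to reduce the statement to two separate claims: (i) smoothness of the generic anticanonical hypersurface in $V_\Sigma$ for every fine regular simplicial fan $\Sigma$ of a fixed 4D reflexive polytope $\Delta^\circ$, and (ii) that any two such hypersurfaces are birational in the sense of \cref{sec:toric_bir}. The Gorenstein property, per the footnote in \cref{sec:frst_vs_vex}, will then follow as a corollary of smoothness. For (i), I would follow the template of Batyrev's argument reviewed in \cref{sec:batyrev}, but replace its reliance on weak-Fano/nef-ness with the weaker structural facts available for all fine fans. The key inputs are: the anticanonical Newton polytope $\Delta_{\overline{K}}$ depends only on the \emph{vertices} of $\Delta^\circ$ (hence is the \emph{same} reflexive $\Delta = (\Delta^\circ)^\circ$ for every fine fan $\Sigma$, as stressed in \cref{sec:line_bundle_newton}); the singular locus of $V_\Sigma$ has codimension $\geq 4$ because a fine fan uses all boundary lattice points of $\Delta^\circ$, so all maximal cones have canonical (in fact, after simplicialization, terminal) singularities — here I would invoke that prime toric divisors corresponding to points interior to facets do not affect generic CY hypersurfaces (the argument of the footnote in \cref{sec:batyrev}, generalized to all fine fans, which is precisely what \cref{sec:intersect} is meant to establish); and the Jacobian criterion, which by Bertini holds away from the basepoint locus of $|\overline{K}|$ on $V_\Sigma$.

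The crux is therefore controlling the basepoint locus of $\overline{K}$ on a \emph{non}-weak-Fano $V_\Sigma$ — this is the step I expect to be the main obstacle, since for FRSTs one simply invokes nef $\Rightarrow$ basepoint-free, and that is exactly what fails here. The strategy I would pursue: the basepoint loci of $\overline{K}$ are the toric subvarieties $V_{\Sigma,\sigma}$ associated to ``basepoint cones'' $\sigma$, namely those cones whose minface $\Theta^\circ \subset \Delta^\circ$ has the property that the dual face $\Theta \subset \Delta$ of the anticanonical Newton polytope is empty (recalling the face–section dictionary from \cref{eq:poly_faces} and the minface/dual-face discussion at the end of \cref{sec:line_bundle_newton}). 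For a \emph{reflexive} $\Delta^\circ$, the polar duality is inclusion-reversing between faces of $\Delta^\circ$ and faces of $\Delta$, so $\Theta$ is a genuine nonempty face of $\Delta$ of dimension $3 - \dim\Theta^\circ$ whenever $\Theta^\circ$ is a \emph{proper} face of $\Delta^\circ$; it is only empty in the degenerate situation $\Theta^\circ = \Delta^\circ$, i.e.\ when the generators of $\sigma$ fail to lie on any common facet — which by the last bullet of \cref{sec:frst_vs_vex} is exactly the signature of a non-FRST (vex) cone. So the basepoint cones are precisely the vex cones, and I must show their toric subvarieties are disjoint from the generic CY. The key geometric fact I would exploit: such a vex cone $\sigma$ has dimension $4$ (its generators span $\mathbb{R}^4$ but lie in no facet of $\Delta^\circ$, hence no proper face), so $V_{\Sigma,\sigma}$ is a \emph{point}; a point-like basepoint locus can always be avoided by the generic member of the linear system $|\overline{K}|$ provided the system has positive dimension — and $\dim |\overline{K}| = |\Delta \cap M| - 1 > 0$ since $\Delta$ is a 4D reflexive polytope. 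More carefully, I would argue that the base locus, being a finite set of points, is avoided generically exactly as Batyrev avoids the terminal (point-like) singularities of $V_\Sigma$ itself; combining these, the generic $X \in |\overline{K}|$ meets neither $\mathrm{Sing}\, V_\Sigma$ nor the basepoint locus, and satisfies the Jacobian criterion elsewhere, hence is smooth.

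For part (ii), birationality, I would use the secondary/chamber fan machinery of \cref{sec:sec_fan}: all fine fans of $\Delta^\circ$ have the same rays $\Sigma(1)$ (the boundary lattice points of $\Delta^\circ$), so their nef/K\"ahler cones are the full-dimensional chambers tiling the moving cone $\mathrm{Mov}(V_\Sigma)$ of \cref{eq:movable}, and the toric varieties $V_\Sigma$ are mutually birational (isomorphic in codimension one) since any two are connected by a sequence of flipping contractions — the flip-connectivity of fine regular triangulations, \cref{alg:secondary_fan} being built on Thm.~5.3.13 of \cite{De_Loera2010-ss}. It then remains to show each such flip of the ambient variety descends to a \emph{flop} (not a divisorial contraction) of the CY hypersurfaces. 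Here I would argue that a flipping contraction of $V_\Sigma$ contracts a toric subvariety $V_{\Sigma,\sigma}$ with $\dim\sigma \geq 2$; by the same minface analysis, this subvariety meets the generic $X$ only when its minface has dimension $\leq 2$, in which case the contracted locus on $X$ is a disjoint union of rigid $\mathbb{P}^1$'s in a common curve class — precisely the flop scenario from \cref{sec:frst_cy_prop} — since the Calabi--Yau condition $K_X = 0$ forces any such extremal contraction to be a flop rather than a divisorial contraction. Concretely: were a divisor of $X$ contracted, the restriction $H^2(V_\Sigma) \hookrightarrow H^2(X)$ together with $K_X = 0$ would be violated, or more directly adjunction forces the discrepancy to vanish. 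Thus every ambient flip either acts trivially on $X$ (when the contracted cone has minface dimension $3$, i.e.\ is interior to a facet) or induces a flop, so all the $X$'s lie in one birational equivalence class. The subtle point deserving care is verifying that ``vex'' flips — flips where one or both neighboring fans are non-FRST — also respect this dichotomy; but since the argument only uses the minface dimension of the flipped cone and the Calabi--Yau condition, neither of which distinguishes FRST from vex, it goes through uniformly, and the detailed verification I would relegate to \cref{sec:smooth_birational} as promised.
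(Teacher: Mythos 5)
Your part (ii) is close in spirit to the paper's actual route (the paper proves \cref{lem:flips_to_flops} — flips never contract divisors on the hypersurface — and then \emph{derives} smoothness by extrapolating from FRSTs via flops, rather than proving smoothness independently as you attempt), but your part (i) contains a fatal error that also undermines the codimension bookkeeping you need in part (ii).

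The error: you claim a basepoint cone $\sigma$ must be four-dimensional, so that $V_{\Sigma,\sigma}$ is a point, and then that "a point-like basepoint locus can always be avoided by the generic member of $|\overline{K}|$." Both halves are wrong. First, the paper proves exactly the opposite dimension statement: basepoint cones of a fine fan on a reflexive polytope can be neither two-cones (by \cref{lem:cls}, two generators sharing no face sum to a lattice point interior to the cone, contradicting fineness) nor maximal cones (the longer hyperplane argument in \cref{sec:basepoint}), so they are precisely three-cones and the basepoint locus is a union of toric \emph{curves}. Your reasoning "generators lie in no facet, hence span $\mathbb{R}^4$" conflates "not contained in a common facet" with "maximal-dimensional"; a three-cone can perfectly well cut between facets (see \cref{ex:sing_locus}, where $\cone{u_1,u_4,u_5}$ is a basepoint three-cone). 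Second, and more fundamentally, a basepoint is by definition a point contained in \emph{every} member of the linear system — it cannot be avoided by the generic member, whatever its dimension. You are confusing the basepoint locus with the terminal point-like singularities in Batyrev's argument, which are \emph{not} basepoints of $\overline{K}$ for an FRST (there $\overline{K}$ is nef, hence basepoint-free) and for that reason alone can be dodged generically. On the actual one-dimensional basepoint locus the Jacobian criterion genuinely can fail and must be analyzed (this is \cref{prop:jacobian}), or one must sidestep the issue entirely as the paper does: all fine regular fans are flip-connected, each flip has exceptional locus of codimension $\geq 2$ whose intersection with $X$ can only be a divisor if a codimension-two component lies in the basepoint locus — impossible since that locus has codimension $\geq 3$ — so every flip descends to at worst a flop, and flops of smooth CY threefolds are smooth. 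Note that this last codimension argument is also the missing ingredient in your part (ii): your appeal to "minface dimension $\leq 2$" and "adjunction forces the discrepancy to vanish" does not by itself exclude a divisor of $X$ sitting inside a codimension-two exceptional component; you need the basepoint-locus codimension bound for that.
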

Thus, one can freely construct fine, regular fans --- FRST or vex --- of polytopes in the Kreuzer--Skarke database in order to generate examples of smooth Calabi--Yau threefolds. A corollary of this is that Danilov and Khovanskii's familiar formula actually does hold, in spite of the non-trivial basepoint loci of the anticanonical class for vex triangulations, because Hodge numbers of Calabi--Yau threefolds (at generic complex structure) are a birational invariant.\footnote{It would be nice to adapt the methods of \cite{danilov_newton_1987} to accommodate basepoint loci and demonstrate that one recovers the same formula in general: we defer this to future work.}

Let us continue our discussion of how vex triangulations differ from their FRST counterparts. Analogous to FRSTs, we can approximate the K\"ahler cone of CY hypersurfaces from vex triangulations $\Sigma$ naively, using the toric K\"ahler cone $\Gamma_\Sigma$, and in an improved way using $\mathcal{K}_\cup(X)$: one merely lets $\mathscr{V}(X)$ denote the set of toric varieties arising from some fine regular simplicial fan related to $\Sigma$ by flips not descending to flops. This is slightly harder to compute for vex triangulations, as it no longer suffices to look at two-face triangulations: indeed, the restrictions of a vex triangulation to two-faces are not even triangulations, but subdivisions, due to the cones extending between faces. However, one can just check whether the toric subvarieties associated to the cones in the coarsening associated with the secondary cone facet intersect the CY hypersurface, which we explain how to do in \cref{sec:intersect}.

Importantly, a direct consequence of \cref{main} is that all toric K\"ahler cones across all fine regular simplicial fans correspond to CY hypersurfaces in the same birational class.
\begin{prop}
    \label{moving}
    The extended K\"ahler cone of the birational equivalence class of CY hypersurfaces associated to the toric varieties arising from fine regular simplicial fans of a four-dimensional reflexive polytope admits an inner approximation given by the moving cone of the ambient toric varieties.
\end{prop}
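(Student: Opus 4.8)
The plan is to deduce this statement as an essentially formal consequence of \cref{main} together with the structural facts already assembled in \cref{sec:toric_bir}, \cref{sec:sec_fan}, and \cref{sec:frst_cy_prop}. Recall that for each fine regular simplicial fan $\Sigma'$ arising from the fixed four-dimensional reflexive polytope $\Delta^\circ$, \cref{eq:nef} identifies the (lower-dimensional) secondary cone $\Gamma(V_{\Sigma'})$ with the nef/K\"ahler cone of the ambient toric variety, and \cref{eq:movable} expresses the moving cone $\mathrm{Mov}(V_\Sigma)$ as the union $\bigcup_{\Sigma'\text{ fine}} \Gamma(V_{\Sigma'})$, which by the cited results (Cor.~5.3.14 of \cite{De_Loera2010-ss}, Prop.~15.1.4 of \cite{cls}) is a genuine convex subfan of the secondary fan. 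On the other side, \cref{sec:frst_cy_prop} recalls that the extended K\"ahler cone $\mathcal{K}(X)$ of a CY hypersurface $X$ is the union of the K\"ahler cones $K(X')$ over all CYs $X'$ in the birational (flop) equivalence class of $X$, and that the restriction map $H^2(V_{\Sigma'})\hookrightarrow H^2(X')$ sends $\Gamma(V_{\Sigma'})$ into $K(X')$ as an inner approximation. So the content to be verified is precisely: every cone $\Gamma(V_{\Sigma'})$, over all fine regular simplicial $\Sigma'$, lands inside the \emph{same} $\mathcal{K}(X)$.

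First I would fix one such fan $\Sigma_0$ with CY hypersurface $X_0 = X$ and, using \cref{main}, note that every other fine regular simplicial fan $\Sigma'$ of $\Delta^\circ$ has a smooth CY hypersurface $X'$ which is birational to $X_0$; hence $K(X')$ is one of the chambers of $\mathcal{K}(X_0)$, and the inner-approximation inclusion $\Gamma(V_{\Sigma'})\subseteq K(X')\subseteq \mathcal{K}(X_0)$ holds chamber-by-chamber. Taking the union over all fine regular simplicial $\Sigma'$ and invoking \cref{eq:movable} then gives $\mathrm{Mov}(V_\Sigma) = \bigcup_{\Sigma'}\Gamma(V_{\Sigma'})\subseteq\mathcal{K}(X)$, which is exactly the claimed inner approximation. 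I would also remark that this uses only the identification $H^2(V_{\Sigma'})\cong H^2(X')$ up to the restriction inclusion — i.e., that the divisor classes generating $\Gamma(V_{\Sigma'})$ restrict to nef classes on $X'$ — which is the standard argument already given in \cref{sec:frst_cy_prop} and does not require basepoint-freeness of $\overline{K}$, only that K\"ahler forms of $V_{\Sigma'}$ assign non-negative volume to holomorphic cycles of $X'$.

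The one point demanding care — and what I expect to be the main obstacle — is the passage \enquote{birational implies same extended K\"ahler cone.} For \emph{FRSTs} this is classical: two-face-preserving flips induce smooth homotopies and flop-inducing flips realize the flops of $\mathcal{K}(X)$, so all FRST chambers sit in one $\mathcal{K}(X)$. For vex triangulations one must additionally know (i) that a flip of ambient toric varieties restricts to a birational map of the smooth CY hypersurfaces that is again an isomorphism in codimension one — i.e., it never contracts a divisor of $X'$ — and (ii) that the extended K\"ahler cone, being by definition (\cref{sec:frst_cy_prop}) the moving cone of $X$, contains whichever chamber $K(X')$ arises. Claim (ii) is immediate once (i) is known, since then $X'$ lies in the flop class of $X$; and (i) is precisely the second assertion of \cref{main} (\enquote{all such CY hypersurfaces belong to the same birational equivalence class}), which by our convention from \cref{sec:toric_bir} means isomorphism in codimension one. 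So the honest dependency is entirely on \cref{main}, and \cref{moving} is its corollary. I would close by noting the inclusion is generally strict — there can be flops of $X$ realized by no fine regular simplicial fan of $\Delta^\circ$ (the genuinely \enquote{non-toric} phases) — so \enquote{inner approximation} is the sharp statement, paralleling the FRST-level $\mathcal{K}_\cup(X)\subset\mathcal{K}(X)$.
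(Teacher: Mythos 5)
Your proposal is correct and follows essentially the same route as the paper's own proof: invoke \cref{main} to place all CY hypersurfaces from fine regular simplicial fans in a single birational class (hence a single extended K\"ahler cone), use the inner approximation $\Gamma(V_{\Sigma'})\subseteq K(X')$ chamber-by-chamber, and identify the union of these toric K\"ahler cones with the moving cone via \cref{eq:movable}. Your additional remarks on where the argument could fail and on the strictness of the inclusion match the paper's surrounding discussion but are not needed for the proof itself.
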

\begin{proof}
    An immediate corollary of \cref{main} is that the K\"ahler cones of all CY hypersurfaces associated to the toric varieties of fine regular simplicial fans of four-dimensional reflexive polytopes belong to the same extended K\"ahler cone. Each individual CY K\"ahler cone is approximated from the inside by the K\"ahler cone of its ambient toric variety, so the union of these toric K\"ahler cones is contained in the extended K\"ahler cone. This union is exactly the moving cone, by \cref{eq:movable}.
\end{proof}

Before continuing on to our discussion of singularities, we stress that while the toric moving cone is a good inner approximation to the extended K\"ahler cone, it remains an approximation: in general, the entire CY K\"ahler moduli space can be larger. At walls of the moving cone, divisors are contracted (blown down) on the toric variety, but this need not hold on the Calabi--Yau hypersurface. That is, while the toric variety changes topologically in codimension 1, the Calabi--Yau hypersurface doesn't have to: it may undergo a flop that truly does not descend from a flip of the ambient toric variety. We present an example featuring this phenomenon in \cref{ex:38}. What's more, the Calabi--Yau may actually even not become singular at all at the boundary of the toric moving cone. We defer a systematic study of faces of the toric moving cone to future work.

\subsubsection{Generalities of Singularities}

We will present two strategies for assessing whether CY hypersurfaces $X$ arising from vex triangulations $\Sigma$ are smooth. First, to prove \cref{main}, we will exploit the birational invariance of smoothness for CY threefolds: flops of smooth CY threefolds are smooth. This allows us to essentially ``extrapolate'' the smoothness of CYs from FRSTs to CYs from vex triangulations.\footnote{We are grateful to Jakob Moritz for proposing this method for determining smoothness to us.} Second, we will carry out a complementary analysis by directly studying the singular locus of $X$, characterizing when $X$ intersects singularities of the ambient variety and when the Jacobian criterion can fail, culminating in a set of sufficient conditions for a particular CY hypersurface to be smooth. We will not prove that these conditions always hold for vex triangulations, but they can be readily checked in examples and we will demonstrate in \cref{sec:count} by brute force that these conditions are not violated for $h^{1,1} \leq 7$. While the first strategy suffices to prove smoothness for vex triangulations of reflexive four-dimensional polytopes, it hinges on \textit{all} flips of triangulations of reflexive polytopes having certain properties, while the second strategy can be applied locally, to individual toric varieties. In particular, we believe the second strategy would generalize better to the study of smoothness of Calabi--Yau subvarieties of toric varieties more generally (e.g., at higher codimension or in non-reflexive canonical polytopes).

As it turns out, both strategies will depend crucially on the basepoint loci of the anticanonical class. For the former strategy, this will help us understand the subvarieties that can be contracted on the CY hypersurface by toric flips; for the latter, it controls when singular toric subvarieties can be contained in the generic anticanonical hypersurface and where the Jacobian criterion can fail. Thus, we begin with an analysis of this basepoint locus: from there, we will proceed by discussing our two strategies for probing smoothness in succession. Along the way we will discuss the intersection of toric subvarieties with the CY hypersurface as well as how birational geometry of the ambient toric variety descends to the CY hypersurface.

\subsubsection{Anticanonical Basepoint Loci}

\label{sec:basepoint}

Recall that for a vex triangulation $\Sigma$ of a four-dimensional reflexive polytope $\Delta^\circ \subset N$, the anticanonical class is not basepoint free: we now study its basepoint locus $\mathrm{Basepoint} \, \overline{K}$. This is the union of toric subvarieties whose equations are automatically satisfied when the generic anticanonical section vanishes. It must be the case that when the equation $x_1 = \dots = x_k = 0$ of any such toric subvariety is satisfied, the generic anticanonical section vanishes. Recalling our discussion of dual faces of reflexive polytopes in \cref{sec:line_bundle_newton}, this is equivalent to requiring that the cone associated to the toric subvariety has minface dimension $4$ --- i.e., the minface is the entire polytope --- so that its dual face has dimension $-1$ and is therefore empty. 

This motivates us to define a \textit{basepoint cone} of $\Sigma$ to be a cone $\sigma \in \Sigma$ such that $\sigma$ is not contained in any face of $\Delta^\circ$, and every face of $\sigma$ is contained in some face of $\Delta^\circ$. Any toric subvariety belonging to the basepoint locus of $\overline{K}$ will be contained in the toric subvariety of a basepoint cone. In particular, we have that
\begin{equation}
    \mathrm{Basepoint} \, \overline{K} = \bigcup_{\substack{\sigma \in \Sigma \\ \sigma \text{ is basepoint}}} V_{\Sigma,\sigma}
\end{equation}
Given some such $\sigma$, setting $x_\rho = 0$ for each $\rho \in \sigma$ causes each monomial section of the anticanonical bundle to vanish, so the generic anticanonical global section $f$ vanishes on $V_{\Sigma,\sigma}$. That is, $f$ factors as 
\begin{equation}
    \label{eq:factor}
    f = \sum_{\rho \in \sigma} x_\rho f_\rho
\end{equation}
where the $f_\rho$ are generic global sections of $\overline{K} - [D_\rho]$. FRSTs can have no basepoint cones cutting between facets, as they respect the face structure of $\Delta^\circ$ by construction; vex triangulations must have basepoint cones.

We now characterize the dimensions of basepoint cones. Our primary tool will be the following lemma.
\begin{lem}[\cite{cls}, Lemma 8.3.6]
    Two points on the boundary of a reflexive polytope $\Delta^\circ$ either share a common face or sum to a lattice point in $\Delta^\circ$. 
    \label{lem:cls}
\end{lem}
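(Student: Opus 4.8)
The plan is to exploit the H-representation of $\Delta^\circ$ that reflexivity provides and to translate the hypothesis ``do not share a common face'' into a statement about facet-supporting hyperplanes. Recall that reflexivity says the polar dual $\Delta = (\Delta^\circ)^\circ \subset M_\mathbb{R}$ is a lattice polytope whose vertices $m_F \in M$ are in bijection with the facets $F$ of $\Delta^\circ$, each satisfying $\langle m_F, w\rangle \geq -1$ on all of $\Delta^\circ$ with equality exactly on $F$; hence $\Delta^\circ = \setbuilder{w \in N_\mathbb{R}}{\langle m_F, w\rangle \geq -1 \text{ for every facet } F}$. The first step is to record that, since every face of a polytope is an intersection of facets, two boundary points $u, v$ lie on a common proper face of $\Delta^\circ$ if and only if some single facet $F$ contains both, i.e.\ if and only if there is a vertex $m_F$ of $\Delta$ with $\langle m_F, u\rangle = \langle m_F, v\rangle = -1$.

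With that dictionary in place, I would assume $u$ and $v$ share no common face and verify the inequality $\langle m_F, u+v\rangle \geq -1$ for every facet $F$. This is the crux, and it is where \emph{both} reflexivity and the lattice-point hypothesis enter: because $u, v \in N$ and $m_F \in M$, the numbers $\langle m_F, u\rangle$ and $\langle m_F, v\rangle$ are integers, each $\geq -1$, and the hypothesis forbids them from both equalling $-1$; hence at least one of them is $\geq 0$, so $\langle m_F, u+v\rangle = \langle m_F, u\rangle + \langle m_F, v\rangle \geq -1$. Ranging over all facets then yields $u + v \in \Delta^\circ$, and it is a lattice point because $u, v \in N$, which is exactly the conclusion.

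I do not expect a genuine obstacle here; the argument is short once the correspondence between faces, facets, and the dual vertices $m_F$ is set up correctly. The two places to be careful are (i) invoking reflexivity precisely --- it is what guarantees that the facet inequalities of $\Delta^\circ$ all have right-hand side exactly $-1$ \emph{and} that the normals $m_F$ are integral; and (ii) remembering that the lattice-point hypothesis on $u, v$ is essential, since without integrality the conclusion fails (two real boundary points with $\langle m_F, u\rangle = -\tfrac12$ and $\langle m_F, v\rangle = -1$ have $u+v \notin \Delta^\circ$). I would also quickly check that the degenerate cases --- $u = v$, or $u$ a vertex, or $u,v$ on a common lower-dimensional face --- are handled automatically by the same reasoning, which they are.
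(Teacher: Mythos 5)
Your proof is correct and is essentially the argument the paper implicitly relies on: the paper does not prove this lemma itself but cites it from \cite{cls}, and the proof there is exactly your argument — use the integral facet normals $m_F$ with right-hand side $-1$ supplied by reflexivity, note that $\langle m_F, u\rangle$ and $\langle m_F, v\rangle$ are integers $\geq -1$ that cannot both equal $-1$ if $u,v$ share no facet, and conclude $\langle m_F, u+v\rangle \geq -1$ for every $F$. Your two cautionary remarks (integrality of both the points and the dual vertices is essential, and ``common face'' reduces to ``common facet'') are exactly the right ones.
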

This can be used to argue that basepoint cones on reflexive polytopes cannot be ``too small.''
\begin{prop}
    A fine fan $\Sigma$ constructed from a reflexive polytope cannot have basepoint two-cones.
\end{prop}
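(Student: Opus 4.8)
The plan is to show that a basepoint two-cone is impossible by deriving a contradiction from \cref{lem:cls}. Suppose $\sigma = \cone{u_1, u_2} \in \Sigma$ is a basepoint two-cone, where $u_1, u_2$ are minimal generators of rays of $\Sigma$, hence lattice points on $\partial \Delta^\circ$. By definition of a basepoint cone, every proper face of $\sigma$ is contained in some face of $\Delta^\circ$ — this is automatic here since the proper faces are the rays $\cone{u_1}$, $\cone{u_2}$, and $\{0\}$, and each $u_i$ lies on the boundary of $\Delta^\circ$, hence on some face. The substantive condition is that $\sigma$ itself is \emph{not} contained in any face of $\Delta^\circ$, which is equivalent to saying $u_1$ and $u_2$ do not share a common face of $\Delta^\circ$ (their minface is all of $\Delta^\circ$).

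The next step is to apply \cref{lem:cls} to the pair $u_1, u_2 \in \partial \Delta^\circ$. Since they do not share a common face, the lemma forces $u_1 + u_2$ to be a lattice point of $\Delta^\circ$. Now I would argue this contradicts $\sigma$ being a genuine cone of a fine simplicial fan on $\Delta^\circ$. The point $w := u_1 + u_2$ lies in the relative interior of the two-dimensional cone $\sigma$ (it is a strictly positive combination of the two generators). If $w \neq 0$, then $w$ lies on $\partial \Delta^\circ$ or in the interior; but the only interior lattice point of a reflexive polytope is the origin, and $w = 0$ would force $u_2 = -u_1$, making $\sigma$ non-pointed (or degenerate), contradicting that $\sigma$ is a cone in the fan $\Sigma \subset N_\mathbb{R}$ with pointed cones. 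So $w \in \partial \Delta^\circ$, i.e. $w$ is a nonzero boundary lattice point. Since $\Sigma$ is fine, all boundary lattice points of $\Delta^\circ$ appear as minimal generators of rays (this is part of the setup — fine fans from $\Delta^\circ$ use all of $\partial\Delta^\circ$); in particular $\cone{w}$ is a ray of $\Sigma$. But then $w$ lies in the relative interior of the cone $\sigma \in \Sigma$ while also generating a ray $\rho_w \in \Sigma$ with $\rho_w \neq$ a face of $\sigma$ (as $w$ is not a scalar multiple of $u_1$ or $u_2$), so $\relint \rho_w \cap \relint \sigma \neq \varnothing$, violating the defining compatibility condition of a fan in \cref{def:fan}. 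This is the contradiction.

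The main obstacle I anticipate is making the statement ``$w = u_1 + u_2 \in \relint \sigma$ lies in the fan'' fully airtight, since one must be careful about whether $w$ is forced to be a \emph{lattice} point of $\Delta^\circ$ (yes, by \cref{lem:cls}), whether it is nonzero (yes, unless $u_2 = -u_1$, which is excluded by pointedness/simpliciality of $\sigma$), and whether fineness of $\Sigma$ really guarantees $w$ is a ray (this uses that a fine fan from $\Delta^\circ$ has ray set equal to all nonzero lattice points of $\partial\Delta^\circ$, or at least contains $w$ among its rays — one should phrase the proposition's hypotheses so this holds, or note that even if $w$ is not a ray, $w \in \relint \sigma$ with $w$ a lattice point implies $\sigma$ has a non-vertex lattice point in $\Pi_\sigma$, contradicting that $\Sigma$ is a refinement yielding at worst terminal singularities — though this alternative route relies on more structure). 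The cleanest version simply invokes: a basepoint cone's two generators cannot share a face, \cref{lem:cls} then produces a third boundary lattice point strictly inside the cone, and any lattice point strictly inside a cone of a fine fan on $\Delta^\circ$ is a contradiction with the fan axioms together with fineness. I would also double-check the edge case where $\dim \sigma = 2$ but $\sigma$ is not simplicial — impossible here since $\Sigma$ is simplicial by assumption and a two-cone with two generators is automatically simplicial, so $u_1, u_2$ are linearly independent and $w$ is genuinely in the open cone.
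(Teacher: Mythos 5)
Your proof is correct and follows essentially the same route as the paper's: the two generators of a basepoint two-cone share no face of $\Delta^\circ$, so \cref{lem:cls} forces $u_1+u_2$ to be a lattice point lying in the relative interior of the cone, contradicting fineness. You simply spell out the edge cases (nonvanishing of $u_1+u_2$ via pointedness, and why an interior lattice point is incompatible with the fan axioms plus fineness) that the paper's two-line argument leaves implicit.
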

\begin{proof}
    Assume by way of contradiction $\Sigma$ has such a cone $\sigma$: its two generators do not share a face, so their sum must be a lattice point on $\Delta^\circ$ by \cref{lem:cls}. But this lattice point would fall interior to $\sigma$, meaning $\Sigma$ wasn't fine.
\end{proof}

In a slightly more cumbersome fashion, we can prove that basepoint cones in general cannot be ``too large.''
\begin{prop}
    Any fine simplicial fan whose minimal generators constitute the lattice points on the boundary of an integral polytope $\Delta^\circ$ cannot have basepoint cones of maximal dimension.
\end{prop}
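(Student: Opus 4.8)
The plan is to argue by contradiction, and the lever is fineness. Assume $\Delta^\circ$ is full-dimensional with the origin in its interior (the only setting in which the notion of a basepoint cone is meaningful), so that the fine simplicial fan $\Sigma$ is complete. Suppose $\Sigma$ has a basepoint cone $\sigma=\cone{u_1,\dots,u_n}$ of full dimension $n=\dim\Delta^\circ$. Since $\sigma$ is $n$-dimensional, its minimal generators $u_1,\dots,u_n$ are linearly independent, hence form a $\mathbb{Q}$-basis of $N_\mathbb{R}$; and since $\sigma$ is a basepoint cone, \emph{every} proper face of $\sigma$ — in particular every proper subset of $\{u_1,\dots,u_n\}$ — lies in a proper face, hence in a facet, of $\Delta^\circ$, whereas the full set lies in no proper face. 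The goal is to exhibit a lattice point of $\partial\Delta^\circ$ in the relative interior of $\sigma$ or of one of its proper faces: because every face of $\sigma$ is a simplicial cone of $\Sigma$ whose only rays are its generators, such a lattice point would be a ray of $\Sigma$ forbidden by the fan axioms, contradicting fineness.

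First I would record the relevant facet data. For each $i$, the facet $\cone{u_j:j\ne i}$ of $\sigma$ is spanned by $n-1$ linearly independent vectors, so its carrier in $\Delta^\circ$ has dimension at least $n-2$ and is therefore a ridge or facet; in either case it lies in a facet $G_i$ of $\Delta^\circ$ with $u_i\notin G_i$. Writing the supporting inequality of $G_i$ as $\langle m_i,\cdot\rangle\le a_i$ with $m_i\in M$ primitive, we get $a_i\ge 1$ (integrality of $\Delta^\circ$, with the origin interior), $\langle m_i,u_j\rangle=a_i$ for $j\ne i$, and $\langle m_i,u_i\rangle\le a_i-1$. These $n$ relations, together with the fact that the $u_i$ form a basis, tightly constrain the position of $\sigma$ relative to $\partial\Delta^\circ$ — in particular they place each $u_i$ strictly on the origin side of $G_i$.

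Finally I would dispose of an easy case and then attack the main one. If two generators $u_i,u_j$ already fail to lie on a common proper face of $\Delta^\circ$, then $\cone{u_i,u_j}$ is a basepoint two-cone — a face of $\sigma$, hence a cone of $\Sigma$ — so the preceding proposition applies directly (equivalently, \cref{lem:cls} puts $u_i+u_j$ on $\partial\Delta^\circ$ in the relative interior of $\cone{u_i,u_j}$, violating fineness). In the remaining case every proper subset of $\{u_1,\dots,u_n\}$ lies on a common facet but the full set does not, and here I would combine the facet data $\{(m_i,a_i)\}$ with completeness: across each facet $\cone{u_j:j\ne i}$ of $\sigma$ there is a unique adjacent maximal cone, and tracing how these neighbours sit relative to the facets $G_i$ forces a boundary lattice point of $\Delta^\circ$ into $\sigma$ or one of its proper faces. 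This last combinatorial chase — through which facets and ridges of $\Delta^\circ$ carry the sub-faces of $\sigma$ — is the ``slightly more cumbersome'' heart of the argument and is the step I expect to be the main obstacle; it is also where full-dimensionality of $\Delta^\circ$ (hence completeness of $\Sigma$) and integrality (the $a_i$ being genuine integers at least $1$) are both used.
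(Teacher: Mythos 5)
There is a genuine gap: the decisive step of your argument is not carried out. You correctly set up the contradiction, record that each facet $\cone{u_j : j\neq i}$ of $\sigma$ lies in a facet $G_i$ of $\Delta^\circ$ with $u_i\notin G_i$, and identify that one must "force a boundary lattice point of $\Delta^\circ$ into $\sigma$" — but you then defer exactly that step ("the main obstacle") to an unexecuted "combinatorial chase" via adjacent maximal cones. The paper does not need adjacent cones at all. Its mechanism is: (1) the $u_i$ are linearly independent, so they span an affine hyperplane with normal $m\in M\otimes\mathbb{Q}$ normalized to $\langle m,u_i\rangle=-1$; since $\sigma$ is a basepoint cone this hyperplane cuts the interior of $\Delta^\circ$, so some vertex $p$ satisfies $\langle m,p\rangle<-1$. (2) Writing $p=\sum_i a_iu_i$, the condition $\langle m,p\rangle<-1$ gives $\sum_i a_i>1$; pairing $p$ with each facet normal $m_{G_j}$ (normalized to $-1$ on $G_j$, where $\langle m_{G_j},u_j\rangle>-1$ strictly) and using $\langle m_{G_j},p\rangle\geq-1$ yields $0\leq a_j(1+\langle m_{G_j},u_j\rangle)$, hence $a_j\geq0$ for every $j$, i.e.\ $p\in\sigma$. (3) But $p$ is a boundary lattice point, hence a ray generator of the fine fan, and a ray of a simplicial fan lying inside a maximal cone must be one of that cone's generating rays — which $p$ is not, since $\langle m,p\rangle\neq-1$. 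This inequality manipulation is the content you would need to supply; without it the proof is incomplete.

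Two smaller points. Your "easy case" (two generators of $\sigma$ sharing no proper face) is vacuous: a basepoint cone by definition has \emph{every} proper face contained in a face of $\Delta^\circ$, so in particular all of its two-dimensional faces already lie in facets. Moreover, the two-cone proposition you invoke there is proved via \cref{lem:cls} and so applies only to reflexive polytopes, whereas the present statement concerns general integral polytopes; the paper's argument deliberately avoids any reflexivity assumption by working with rationally normalized facet normals $m_\Theta\in M\otimes\mathbb{Q}$.
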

\begin{proof}
    Assume by way of contradiction a fine simplicial fan $\Sigma$ whose minimal generators are the boundary lattice points of an $n$-dimensional polytope $\Delta^\circ \subset N_\mathbb{R}$ has a cone $\sigma$ with minimal generators $u_1, \dots, u_n$ which are not contained in a single facet of $\Delta^\circ$, but such that any size $n-1$ subset of them is contained in a single facet. Any such subset is linearly independent and thus determines an affine hyperplane, so there is a vector $m \in M \otimes \mathbb{Q}$ which is an (origin-facing) normal to the hyperplane (recalling $N^* = M$). We normalize $m$ such that $\langle m, u_i \rangle = -1$ for $1 \leq i \leq n$. This affine hyperplane cuts through the interior of $\Delta^\circ$ (otherwise $\sigma$ wouldn't be a basepoint cone) so $\Delta^\circ$ must have a vertex $p$ satisfying $\langle m, p \rangle < -1$.
    
    The strategy of our proof is now as follows. Because $p \in \Delta^\circ$, $p$ must obey the affine hyperplane constraints which define $\Delta^\circ$. In particular, the $n$ facets containing the minimal generators $\{u_1, \dots, u_n\} \setminus u_i$ for $1 \leq i \leq n$ define such constraints. Such facets must be unique, as otherwise $\sigma$ would be contained in any facet repeated more than once. Taking these inequalities in conjunction with the constraint on $p$ from the previous paragraph, we will argue that this confines $p$ to live inside of $\sigma$, which yields a contradiction, because then $\Sigma$ wouldn't be fine (as $p$ then couldn't feature in any cone). 
    
    Let us now show that $p$ satisfies all hyperplane constraints of $\sigma$. Because $\sigma$ is simplicial, each facet is generated by all but one of its generators $\{u_i\}$. Pick one, and say $u_j$ is the generator omitted. Then there is a hyperplane normal $m_j \in M$ such that the hyperplane constraint is $\langle m_j, n \rangle \geq 0$ for $n \in \sigma$, where $\{u_i\} \setminus u_j$ saturate the inequality and $u_j$ does not. Let $\Theta$ be the facet of $\Delta^\circ$ containing $\{u_i\} \setminus u_j$. Thus, there is a $m_\Theta \in M \otimes \mathbb{Q}$ satisfying $\langle m_\Theta, u \rangle \geq -1$ for all lattice points $u$ in $\Delta^\circ$ such that $\{u_i\} \setminus u_j$ saturate the inequality while $u_j$ does not.\footnote{If $\Delta^\circ$ is reflexive, $m_\Theta$ is merely the vertex of $\Delta$ dual to $\Theta$.} Let us write $p = \sum_{i = 1}^n a_i u_i$: because $\langle m, u_i \rangle = -1$ and $\langle m, p \rangle < -1$, we know $-\sum_i a_i = \langle m, p \rangle < -1$. Using this, we can argue:
    \begin{equation}
        \begin{aligned}
            -1 \leq \langle &m_\Theta, p \rangle = -\sum_{i \neq j} a_i + a_j \langle m_\Theta, u_j \rangle \\
            &\implies \sum_{i \neq j} a_i - 1 \leq a_j \langle m_\Theta, u_j \rangle \\
            &\implies \sum_i a_i - 1 \leq a_j (1 + \langle m_\Theta, u_j \rangle) \\
            &\implies 0 < a_j (1 + \langle m_\Theta, u_j \rangle) 
        \end{aligned}
    \end{equation}
    The parenthetical is non-negative, so $a_j$ is positive. Thus $\langle m_j, p \rangle = \langle m_j, a_ju_j \rangle > 0$, so $p$ satisfies the hyperplane constraint. Thus $p \in \sigma$, a contradiction, so maximal basepoint cones cannot occur for the fans considered here.
\end{proof}
In particular, then, vex triangulations do not exist for reflexive polytopes of dimension $\leq 3$, and in the four-dimensional case we have shown the following.
\begin{cor}
    The anticanonical basepoint locus of toric varieties arising from vex triangulations of four-dimensional reflexive polytopes is one-dimensional (i.e., a union of toric curves).
\end{cor}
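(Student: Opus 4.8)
The plan is to simply assemble the two propositions just established into a statement about the dimension of the toric subvarieties that comprise the basepoint locus. Recall that $\mathrm{Basepoint}\,\overline{K} = \bigcup_{\sigma \text{ basepoint}} V_{\Sigma,\sigma}$, and that on an $n$-dimensional toric variety the toric subvariety $V_{\Sigma,\sigma}$ associated to a $k$-cone $\sigma$ has complex dimension $n-k$. So for $n=4$ it suffices to show that every basepoint cone is a three-cone.

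First I would dispatch the small-dimension cases. A basepoint cone must, by definition, not be contained in any face of $\Delta^\circ$; but the minimal generator $u_\rho$ of any ray $\rho \in \Sigma(1)$ is a lattice point on $\partial\Delta^\circ$ and hence lies in the unique minimal face of $\Delta^\circ$ containing it, so no one-cone is a basepoint cone, while the zero-cone $\{0\}$ is irrelevant since $V_{\Sigma,\{0\}} = V_\Sigma$ cannot lie in the vanishing locus of the effective class $\overline{K}$. The first of the two preceding propositions rules out basepoint two-cones, and the second rules out basepoint cones of maximal dimension --- in the four-dimensional case, four-cones. Hence every basepoint cone has dimension exactly three, and $V_{\Sigma,\sigma}$ is a toric curve for each such $\sigma$.

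It then remains only to observe that for a vex triangulation the basepoint locus is genuinely one-dimensional rather than empty. By the criteria of \Cref{sec:frst_vs_vex}, a fine regular simplicial fan $\Sigma$ of $\Delta^\circ$ is a vex triangulation precisely when it contains some cone not contained in any facet of $\Delta^\circ$. Among the faces of such a cone (itself included), let $\tau$ be one of minimal dimension that is still not contained in any face of $\Delta^\circ$; by minimality every proper face of $\tau$ lies in a face of $\Delta^\circ$, so $\tau$ is a basepoint cone. Therefore $\mathrm{Basepoint}\,\overline{K}$ contains the toric curve $V_{\Sigma,\tau}$, so it is a nonempty union of toric curves, i.e.\ one-dimensional.

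There is no substantive obstacle here --- the corollary is an immediate consequence of the two propositions --- so the only things to get right are the edge cases in the first step (that rays and the zero-cone pose no issue) and the argument in the last step that a vex triangulation actually exhibits a basepoint three-cone, which follows from passing to a minimal non-face subcone.
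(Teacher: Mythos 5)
Your proposal is correct and follows the same route as the paper, which treats the corollary as an immediate consequence of the two preceding propositions (ruling out basepoint cones of dimension $\le 2$ and of maximal dimension, leaving only three-cones, i.e.\ toric curves). Your extra step verifying nonemptiness via a minimal non-face subcone is also consistent with the paper's earlier remark in \cref{sec:basepoint} that vex triangulations must have basepoint cones.
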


\subsubsection{CY Smoothness from Birational Geometry}

\label{sec:smooth_birational}

The results of the previous section enable us to prove \cref{main}. In particular, it essentially follows from the following lemma.
\begin{lem}
    Flips of toric varieties associated to fine, regular fans from four-dimensional reflexive polytopes do not contract divisors on the anticanonical hypersurface.
    \label{lem:flips_to_flops}
\end{lem}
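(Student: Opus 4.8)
The plan is to show that a flip between two fine regular simplicial fans $\Sigma, \Sigma'$ of a four-dimensional reflexive polytope $\Delta^\circ$ corresponds to an extremal contraction whose exceptional locus, intersected with the generic anticanonical hypersurface $X$, is \emph{not} a divisor on $X$. The flip is encoded by a flippable circuit $\mathbf{A}' \subseteq \mathbf{A}$ with signature $(a,b)$, $a,b>1$ (recall the dictionary in \cref{tab:dict}: signatures $(a,1)$ are divisorial contractions, $(a,0)$ are fibering contractions, neither of which occur for a flip between two fans with the same one-cones). The contracted locus on the toric side is the toric subvariety $V_{\Sigma,\tau}$ associated to the coarsening cone $\tau$ built from the circuit support together with the common link; equivalently, in the language of \cref{sec:frst_cy_prop}, the statement to prove is that $\tau$ has minface dimension $3$ or $4$, so that the associated toric subvariety either misses $X$ entirely (minface dimension $4$, i.e.\ $\tau$ is a basepoint cone) or its intersection with $X$ is a curve or points rather than a surface.

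First I would set up the geometry of the flip precisely: let $\sigma$ be a witness wall (a $(\dim \Sigma - 1)$-cone, here a $3$-cone) whose star realizes the circuit $\mathbf{A}'$, with $\mathscr{T}_+(\mathbf{A}') \subseteq \Sigma$ triangulated one way and flipped to $\mathscr{T}_-(\mathbf{A}')$. The contracted cone $\tau = \cone{\mathbf{A}' \cup L}$ (link $L$) is the maximal cone of the coarsening; the toric subvariety $V_{\Sigma,\tau}$ has dimension $n - \dim\tau = 4 - \dim\tau$. Since a circuit of signature $(a,b)$ with $a+b \leq 5$ and $a,b \geq 2$ forces $a+b \in \{4,5\}$, the support $\mathbf{A}'$ has $4$ or $5$ elements and spans a space of dimension $3$ or $4$; combined with the link, $\dim\tau \in \{3,4\}$, so $\dim V_{\Sigma,\tau} \in \{0,1\}$. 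That already handles the case analysis skeleton: a zero- or one-dimensional toric subvariety can never be a divisor (codimension one = dimension three) on the threefold $X$. So the crux is really just ruling out that some \emph{larger} toric subvariety gets contracted — i.e.\ confirming that the exceptional locus of a toric flip is exactly $V_{\Sigma,\tau}$ for $\tau$ the coarsening cone, which is standard (\S15.4 in \cite{cls}), and then confirming the dimension count for circuits on a $4$-polytope, which uses only the bound $n+m \le d+1 = 5$ from \cref{sec:common_sig}.

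Next I would address the one remaining subtlety: even though $\dim V_{\Sigma,\tau} \le 1$, I should confirm it does not somehow contribute a divisorial component to the contracted locus \emph{on $X$} via a higher-dimensional component of $X \cap Z$ where $Z$ is the union of toric subvarieties swept out. The clean way is to invoke that the flip is an isomorphism in codimension one on $V_\Sigma$ (\cref{sec:toric_bir}), restrict this isomorphism to the generic $X$, and note $X$ is three-dimensional so ``codimension one on $V$'' contains ``codimension one on $X$'' once we check $X$ meets the flip locus transversally/generically — here the anticanonical basepoint analysis of \cref{sec:basepoint} enters: if $\tau$ is a basepoint cone then $V_{\Sigma,\tau} \cap X = \varnothing$ outright (the $X$ equation factors as in \cref{eq:factor}), and if not, $V_{\Sigma,\tau}$ meets the generic $X$ in the expected dimension $\dim V_{\Sigma,\tau} - 1 \le 0$ by Bertini applied to the restriction of $|\overline{K}|$ to $V_{\Sigma,\tau}$ (using that $\overline K$ restricted there is basepoint-free since we're off the basepoint locus). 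So no surface on $X$ is ever contracted.

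The main obstacle I anticipate is the bookkeeping around \emph{non-flippable or irregularly-coarsened} cases and, more importantly, making sure the dimension argument is airtight when the circuit support has dimension $3$ (signature $(2,2)$) sitting inside a $4$-dimensional fan: then $\tau$ picks up a one-dimensional link, $\dim\tau = 4$, $V_{\Sigma,\tau}$ is a point, fine; but when the support has dimension $4$ (signatures $(3,2),(2,3),(2,2)$ with a coplanar configuration, or $(4,1)$-adjacent degenerate cases) one must be careful that $\tau$ really is four-dimensional and not accidentally lower due to the link and support overlapping. I expect this to be a short but genuinely case-sensitive argument, and it is the step I would write out most carefully; everything else reduces to citing \cref{sec:common_sig}, \cref{sec:basepoint}, and the standard description of toric flips.
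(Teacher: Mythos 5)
There is a genuine gap, and it sits exactly at the step you flag as ``standard'': the exceptional locus of a toric flip is \emph{not} the toric subvariety of the full coarsening cone $\tau = \cone{\mathbf{A}' \cup L}$. It is the orbit closure $V_{\Sigma,\cone{J_-}}$ associated to the \emph{negative part of the circuit alone} (on the $\mathscr{T}_+$ side), which has codimension $|J_-|$. The toric subvariety of the maximal coarsening cone is the low-dimensional locus that the exceptional set gets contracted \emph{to} in the intermediate singular variety, not the locus being contracted upstairs. The resolved conifold makes this concrete: for the $(2,2)$ circuit in a threefold, $V_{\cone{J_-}}$ is the contracted $\mathbb{P}^1$, while $V$ of the full cone $\cone{u_1,u_2,u_3,u_4}$ is the singular point it maps to. For a fourfold with a $(2,2)$ or $(3,2)$ circuit you have $|J_-| = 2$, so the exceptional locus is a \emph{surface}, not a point or curve. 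Your conclusion ``$\dim V_{\Sigma,\tau} \le 1$, hence nothing of divisor dimension on $X$ can be contracted'' therefore does not follow; a toric surface intersecting the threefold $X$ is precisely the dangerous case, since if that surface were contained in $X$ it would be a contracted divisor on $X$.

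The essential content of the lemma is ruling out exactly that scenario, and your proposal only gestures at the needed ingredient while applying it to the wrong subvariety. The correct chain is: $\mathrm{Ex}\,f$ has codimension exactly $2$ (it is a flipping, not divisorial, contraction); a codimension-$2$ subvariety meets the generic anticanonical hypersurface in a divisor of $X$ only if an entire component of it is \emph{contained} in $X$; for generic $X$ this forces that component into the basepoint locus of $\overline{K}$; and the result of \cref{sec:basepoint} is that this basepoint locus has codimension at least $3$ (it is at most a union of toric curves), giving the contradiction. Your Bertini remark at the end is the right instinct, but as written it is invoked only for the (incorrectly identified) zero- or one-dimensional locus, where there is nothing left to prove; the surface case, which is where the basepoint-codimension bound does real work, is never confronted.
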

\begin{proof}
    Consider a flip of fine regular simplicial fans of four-dimensional reflexive polytopes and the birational map $f : V \dashrightarrow V'$ that it defines. Let $X$ denote the CY hypersurface in $V$: if the exceptional locus $\mathrm{Ex} \, f$ on which the map $f$ is not defined intersects $X$, then $f$ descends to a rational map on $X$. First, we note that $\mathrm{Ex} \, f|_X \subset \mathrm{Ex} \, f$ (because the exceptional locus is merely where the map is not defined). Thus, $\mathrm{Ex} \, f|_X$ can be a divisor (codimension $1$) only if $\mathrm{Ex} \, f$ has codimension $\leq 2$. But $\mathrm{Ex} \, f$ has codimension $\geq 2$ (as we are considering a flipping contraction, not a divisorial contraction) so the only possibility is that the codimension is exactly two. For a codimension-two $\mathrm{Ex} \, f$ to intersect $X$ at a divisor, it must be a non-complete intersection: an entire codimension-two component of $\mathrm{Ex} \, f$ must be contained in $X$. For this to hold for a generic anticanonical hypersurface, such a component must belong to the basepoint locus of the anticanonical class. But we showed in the previous section that the anticanonical basepoint locus is not codimension two: it is codimension at least three. Thus, $\mathrm{Ex} \, f|_X$ cannot be a divisor, and $f$ descends to at worst a flop.
\end{proof}
From here, we can directly proceed to the proof of \cref{main}.
\begin{proof}[Proof of \cref{main}.]
    All fine regular simplicial fans with fixed rays are connected by flips. By \cref{lem:flips_to_flops}, these can only result in the contraction of curves on the CY hypersurface: never the contraction of divisors. Thus, the only non-trivial map that can be induced on a CY hypersurface by a flip is a flop (or a composition of flops). In particular, all CY hypersurfaces associated to the fine regular simplicial fans must be connected by flops. Flops of smooth CY threefolds are smooth, so the smoothness of FRST hypersurface CYs (along with the existence of at least one FRST per reflexive polytope) ensures that vex triangulation hypersurface CYs are smooth as well. Moreover, because these CYs are all related by flops, they are all birational.
\end{proof}

\subsubsection{CY Hypersurface Singular Locus}

\label{sec:sing_locus}

To supplement the above discussion, we now undertake a direct analysis of the singular locus of a Calabi--Yau hypersurface constructed from a vex triangulation. What could contribute to this locus? First, a singular toric subvariety of dimension $2$ or greater --- codimension $2$ or smaller --- would be expected to intersect the anticanonical hypersurface, which can render it singular. In general, such an intersection with the ambient singular locus need not result in a singularity of the hypersurface, but a sufficient condition for hypersurface smoothness is the total avoidance of ambient singularities, so we limit our scope to this. Additionally, the Jacobian criterion can fail: there can exist solutions to $F = dF = 0$ for $F$ the defining equation of the CY. \\

\noindent \textbf{Toric Singularities.} Here we describe the conditions required for the generic CY hypersurface to avoid singularities of the ambient toric variety. We believe that this may not be strictly necessary for smoothness in general, but it is a fairly weak assumption which simplifies matters greatly, and our large-scale brute-force enumeration of fans in \cref{sec:count} will find no counterexamples. We recall from \cref{foot:cartier_and_smooth} that the anticanonical class must be Cartier for its representatives (that avoid the singular locus) to be smooth, so it is necessary for the ambient toric variety to be Gorenstein. \textit{A priori}, though FRST toric varieties are Gorenstein, their vex triangulation counterparts need not be: the Cartier condition for a Weil divisor is not a birational invariant (see, e.g., Ex. 15.4.8 in \cite{cls}). The Gorenstein property, if satisfied, has the useful consequence that singularities are guaranteed to be in codimension $4$. In this case, toric singularities would be point-like, but they could still intersect the Calabi--Yau hypersurface. For FRSTs, the anticanonical class is basepoint free, and the hypersurface equation can always be tuned to avoid toric point singularities. Vex triangulations, however, give rise to basepoint loci for $\bar{K}$, and point-like singularities could fall on these loci, so we must ensure they do not. Naively, then, to avoid toric singularities it is sufficient to impose that the ambient toric variety is Gorenstein and that all (point-like) singularities avoid the anticanonical base locus. As it turns out, the latter implies the former, so we have the following result.
\begin{prop}
    Given a (fine, regular) vex triangulation of a four-dimensional reflexive polytope, if all maximal cones containing a basepoint cone are smooth, then the CY hypersurface does not intersect any singularities of the ambient toric variety.
    \label{prop:toric_sing}
\end{prop}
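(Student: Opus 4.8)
The plan is to prove the contrapositive-adjacent statement directly: assuming every maximal cone containing a basepoint cone is smooth, show that the generic anticanonical hypersurface misses $\mathrm{Sing}\,V_\Sigma$, and also note that smoothness of these cones forces the Gorenstein property. The key structural input is that $\mathrm{Sing}\,V_\Sigma = \bigcup_{\sigma\text{ not smooth}} V_{\Sigma,\sigma}$, together with the corollary of the previous section that $\mathrm{Basepoint}\,\overline{K}$ is one-dimensional (a union of toric curves), i.e.\ a union of $V_{\Sigma,\tau}$ for basepoint cones $\tau$ of dimension $3$. First I would reduce to maximal cones: since $V_{\Sigma,\sigma}$ for non-smooth $\sigma$ is contained in $V_{\Sigma,\sigma'}$ for any maximal $\sigma' \supseteq \sigma$, and $V_{\Sigma,\sigma'}$ is the locus $x_\rho = 0$ for all $\rho \in \sigma'$, it suffices to control intersections of $X$ with the (point-like, by Gorenstein-ness) singular strata sitting inside non-smooth maximal cones.

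Next I would split into two cases according to whether a given singular maximal cone $\sigma$ contains a basepoint cone. \emph{Case 1: $\sigma$ does not contain any basepoint cone.} Then the minface of $\sigma$ has dimension $\leq 3$, so its dual face $\Theta \subset \Delta$ is nonempty; equivalently there is a lattice point $m \in \Delta$ (a monomial section of $\overline{K}$) which does not vanish on $V_{\Sigma,\sigma}$. Hence the generic anticanonical section $f$ restricted to $V_{\Sigma,\sigma}$ is a generic section of a line bundle with a nonzero section, and its vanishing locus on $V_{\Sigma,\sigma}$ has codimension one in $V_{\Sigma,\sigma}$; but $V_{\Sigma,\sigma}$ is already point-like (codimension $4$ by the Gorenstein corollary), so generically $f$ is simply nonzero there and $X \cap V_{\Sigma,\sigma} = \varnothing$. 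Here I would invoke the face-duality discussion of \cref{sec:line_bundle_newton}: the sections of $\overline{K}$ not vanishing on the toric subvariety of a cone correspond exactly to lattice points of the dual face of the minface. \emph{Case 2: $\sigma$ does contain a basepoint cone.} By hypothesis $\sigma$ is smooth, contradicting the assumption that $\sigma$ is a non-smooth maximal cone. So this case is vacuous, and every non-smooth maximal cone falls under Case 1 and is disjoint from $X$.

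Finally, for the Gorenstein claim (which the statement assumes implicitly via the ``singularities in codimension 4'' language used above, but which should be justified): a maximal cone that is smooth is automatically Gorenstein in the relevant sense (its generators lie on an affine hyperplane at lattice distance $1$ from the origin, being part of a $\mathbb{Z}$-basis completing to $N$ together with the reflexivity normalization), and every maximal cone of a fan built from a reflexive polytope that is \emph{not} smooth still has its generators on a lattice-distance-one hyperplane because they are boundary lattice points of the reflexive $\Delta^\circ$ and sit in a common face-supporting hyperplane at distance $1$ — wait, that last point fails precisely for basepoint cones, which is exactly why we need the smoothness hypothesis on maximal cones containing them. So I would argue: maximal cones not containing a basepoint cone lie in a face of $\Delta^\circ$, hence on a distance-one hyperplane, hence are Gorenstein; maximal cones containing a basepoint cone are smooth by hypothesis, hence Gorenstein; therefore $\overline{K}$ is Cartier, i.e.\ $V_\Sigma$ is Gorenstein, and the codimension-$4$ bound on the singular locus applies, legitimizing the ``point-like'' language in Case 1.

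The main obstacle I anticipate is making Case 1 fully rigorous — specifically, verifying that the generic anticanonical section really is ``generic enough'' on the toric subvariety $V_{\Sigma,\sigma}$. One must check that the restriction map from global sections of $\overline{K}$ on $V_\Sigma$ to sections on $V_{\Sigma,\sigma}$ is surjective onto the relevant linear system (so that genericity upstairs gives genericity downstairs), and handle the subtlety that $V_{\Sigma,\sigma}$ being a point makes ``generic section nonzero'' almost trivial but requires knowing the section space is nonzero there, which is exactly the nonemptiness of the dual face. This is a standard base-locus argument (cf.\ the basepoint discussion around \cref{eq:poly_faces}), but the bookkeeping with minfaces, dual faces, and the precise claim that a non-basepoint maximal cone has minface dimension $\leq 3$ is where care is needed.
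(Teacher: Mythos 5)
Your overall strategy matches the paper's: establish that the hypothesis forces the toric variety to be Gorenstein (maximal cones without basepoint cones lie in facets of $\Delta^\circ$, hence get Cartier data from dual vertices of $\Delta$; maximal cones with basepoint cones are smooth by hypothesis, hence admit Cartier data for anything), conclude that singularities are point-like, and then observe that the generic hypersurface misses every singular point because the only points it is forced to contain --- those on the basepoint locus --- are smooth by hypothesis. Your Case 1 is just an unpacked version of the last step (a point whose cone has minface dimension $\leq 3$ has a nonvanishing monomial section, so it is not a basepoint and the generic section is nonzero there), and your Gorenstein argument is the paper's.

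One step as written is wrong and needs repair: the ``reduction to maximal cones.'' You claim $V_{\Sigma,\sigma} \subseteq V_{\Sigma,\sigma'}$ for $\sigma \subseteq \sigma'$ maximal, but the inclusion goes the other way --- imposing more equations $x_\rho = 0$ gives a \emph{smaller} subvariety, so $V_{\Sigma,\sigma'} \subseteq V_{\Sigma,\sigma}$. A singular curve coming from a non-smooth $3$-cone is not contained in any toric point, so you cannot reduce to maximal cones by containment. The correct reduction --- which is what the paper does, and which you effectively supply in your final paragraph --- is to prove Gorenstein-ness \emph{first}: for a fine simplicial fan from a reflexive polytope, Gorenstein implies the singular locus has codimension $\geq 4$, i.e.\ all cones of dimension $\leq 3$ are smooth, so the only possible singular strata are the points of non-smooth maximal cones. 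Reorder your argument accordingly (Gorenstein $\Rightarrow$ only maximal cones can be non-smooth $\Rightarrow$ Cases 1 and 2) and the proof is complete; as written, Case 1's appeal to ``point-like by the Gorenstein corollary'' arrives before the Gorenstein property has been established, and the stated containment does not patch the gap.
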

\begin{proof}
    If all maximal cones containing basepoint cones are smooth, then all toric points contained in basepoint loci are smooth. It therefore suffices to show that our hypothesis further implies that the toric variety is Gorenstein, as then there are no higher dimensional toric singularities. To see this, note that for any fine regular triangulation, the anticanonical class certainly has integral Cartier data for all maximal cones \textit{not} containing basepoint cones: such cones are contained in three-faces, and the Cartier data are furnished by the dual vertices of $\Delta$. Because all other maximal cones are smooth, they admit Cartier data for \textit{any} divisor class (all classes restrict to Cartier divisors on the associated affine open charts: more practically, the matrix of minimal generators associated to any maximal cone is invertible over the integers). Thus, the anticanonical class has a complete set of Cartier data and the toric variety is Gorenstein.
\end{proof}

\noindent \textbf{Jacobian Criterion Singularities.} By Bertini's theorem, the Jacobian criterion for the CY hypersurface can only fail on the basepoint locus of anticanonical class. This precludes failures for FRSTs. Let us now study the Jacobian criterion on anticanonical basepoint loci for vex triangulations. For a hypersurface of the form \cref{eq:factor}, the partial derivatives take the form
\begin{equation}
    \partial_\tau F = \sum_{\rho \in \sigma} \Big( x_\rho \partial_\tau f_\rho + \delta_{\tau\rho} f_\rho \Big)
\end{equation}
On $V_{\Sigma,\sigma}$ the first term vanishes, so $\partial_\tau F$ only survives for $\tau \in \sigma$. Therefore, for a fixed component $V_{\Sigma,\sigma}$ of the basepoint locus, $F = dF = 0$ is solved at the intersection of six divisors: the three prime torics $D_\rho$ for $\rho \in \sigma$ (yielding $V_{\Sigma,\sigma}$), and the three classes $\Tilde{D}_\rho \equiv \overline{K} - D_\rho$ to which the $f_\rho$ belong for each $\rho \in \sigma$. While this is naively empty on a fourfold, resulting in no singularities, it need not be a complete intersection: we must study basepoint loci. First, we should remove any $\Tilde{D}_\tau$ containing $V_{\Sigma,\sigma}$ in its base locus, which would entail that  $\partial_\tau F |_{V_{\Sigma,\sigma}} = f_\tau |_{V_{\Sigma,\sigma}} = 0$ identically. We avoid this for fixed $\tau \in \sigma$ if and only if there is a term in $f_\tau$ featuring no factors of $x_\rho$ for any $\rho \in \sigma$. Employing the notation of \cref{eq:poly_faces}, this is the condition that the intersection 
\begin{equation}
    \label{eq:jacobian_set}
    S_\tau = \bigcap_{\rho \in \sigma} \Theta_{\Tilde{D}_\tau, \rho} \subset M
\end{equation}
of faces of $\Delta_{\Tilde{D}_\tau}$ is non-empty. Equivalently, the normal fan of $\Delta_{\Tilde{D}_\tau}$ must contain the basepoint cone $\sigma$.

Now keeping only the $\Tilde{D}_\rho$ with non-empty $S_\rho$, we split into cases. If there is a single surviving $\Tilde{D}_\rho$, the singular locus is $V_{\Sigma,\sigma} \cap \Tilde{D}_\rho$, a set of points one can compute with intersection theory. Only if the intersection number vanishes is the hypersurface smooth. If there is more than one, the intersection naively vanishes, but the $\Tilde{D}_\rho$ may share a basepoint locus inside $V_{\Sigma,\sigma}$: i.e., a toric point. Note that all surviving $\Tilde{D}_\rho$ must have the same toric point basepoint locus for their intersection to be non-trivial for generic $f_\rho$, so it suffices to ensure that for each cone $\sigma' \supset \sigma$, at least one surviving $\Tilde{D}_\rho$ doesn't contain $V_{\Sigma,\sigma'}$ in its basepoint locus.

Thus, we arrive at the following result.
\begin{prop}
    Given a (fine, regular) vex triangulation of a four-dimensional reflexive polytope, if for each basepoint cone $\sigma = \ConeOp(\rho_1, \rho_2, \rho_3)$ it is true that either
    \begin{itemize}
        \item at least two of the sets $S_{\rho_i}$ are non-empty and the associated $\overline{K} - D_{\rho_i}$ have no common toric point $V_{\Sigma,\sigma'}$ in their basepoint loci (for $\sigma \subset \sigma' \in \Sigma(4)$), or
        \item a single $S_\rho$ is non-empty and $\overline{K} - D_\rho$ does not intersect $V_{\Sigma,\sigma}$,
    \end{itemize}
    then the Jacobian criterion is satisfied for CY hypersurfaces.
    \label{prop:jacobian}
\end{prop}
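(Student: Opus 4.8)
\textbf{Proof proposal for \cref{prop:jacobian}.}

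The plan is to localize the failure of the Jacobian criterion to the anticanonical basepoint locus, reduce to a single basepoint cone, and then recast the question as one about common zeros of generic sections on a toric curve. By Bertini's theorem the locus $\{F = dF = 0\}$ of a generic anticanonical section $F$ lies in $\mathrm{Basepoint}\,\overline K$, which, as established in \cref{sec:basepoint}, is a union of toric curves $V_{\Sigma,\sigma}\cong\mathbb{P}^1$ indexed by basepoint three-cones $\sigma = \cone{\rho_1,\rho_2,\rho_3}$; hence it suffices to treat each such $\sigma$ separately. Writing $F$ in the factored form \cref{eq:factor} as $F = \sum_{\rho\in\sigma} x_\rho f_\rho$ with $f_\rho$ a generic section of $\overline K - [D_\rho]$, and restricting the partials $\partial_\tau F = \sum_{\rho\in\sigma}(x_\rho\,\partial_\tau f_\rho + \delta_{\tau\rho} f_\rho)$ to $V_{\Sigma,\sigma}$, where $x_{\rho_1}=x_{\rho_2}=x_{\rho_3}=0$, one finds that $F$ vanishes identically and $\partial_\tau F|_{V_{\Sigma,\sigma}} = f_\tau|_{V_{\Sigma,\sigma}}$ for $\tau\in\sigma$ and vanishes for $\tau\notin\sigma$. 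Thus the Jacobian criterion can fail along $V_{\Sigma,\sigma}$ only on $V_{\Sigma,\sigma}\cap\{f_{\rho_1}=f_{\rho_2}=f_{\rho_3}=0\}$, and the task is to show this set is empty under each of the two hypotheses.

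Next I would discard the degenerate factors. If $V_{\Sigma,\sigma}$ lies in the base locus of $\Tilde D_\tau := \overline K - [D_\tau]$, then $f_\tau|_{V_{\Sigma,\sigma}}\equiv 0$ and the equation $f_\tau=0$ is vacuous on $V_{\Sigma,\sigma}$. Using the Newton-polytope description of global sections from \cref{sec:line_bundle_newton} together with the face notation of \cref{eq:poly_faces}, a generic $f_\tau$ has $f_\tau|_{V_{\Sigma,\sigma}}\not\equiv 0$ precisely when $\Tilde D_\tau$ admits a monomial section involving none of $x_{\rho_1},x_{\rho_2},x_{\rho_3}$, i.e.\ when $S_\tau = \bigcap_{\rho\in\sigma}\Theta_{\Tilde D_\tau,\rho}$ from \cref{eq:jacobian_set} contains a lattice point; equivalently the normal fan of $\Delta_{\Tilde D_\tau}$ contains $\sigma$. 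Let $T\subseteq\{\rho_1,\rho_2,\rho_3\}$ collect the indices with $S_\tau$ nonempty. Both hypotheses of the proposition force $T\neq\varnothing$, and the failure locus reduces to $V_{\Sigma,\sigma}\cap\bigcap_{\tau\in T}\{f_\tau=0\}$.

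If $|T|=1$, say $T=\{\rho\}$, the failure locus is the finite set $V_{\Sigma,\sigma}\cap\{f_\rho=0\}$, which is empty exactly when $\Tilde D_\rho$ misses the curve $V_{\Sigma,\sigma}$ --- the second bullet --- and we are done. If $|T|\geq 2$, the hypothesis supplies two indices $\rho_i,\rho_j\in T$ whose classes $\Tilde D_{\rho_i},\Tilde D_{\rho_j}$ have no common toric point $V_{\Sigma,\sigma'}$ (over maximal $\sigma'\supset\sigma$) in their basepoint loci; the failure locus is contained in $\{f_{\rho_i}|_{V_{\Sigma,\sigma}}=f_{\rho_j}|_{V_{\Sigma,\sigma}}=0\}$, the common zero locus on $V_{\Sigma,\sigma}\cong\mathbb{P}^1$ of generic members of the torus-invariant linear systems $|\Tilde D_{\rho_i}|_{V_{\Sigma,\sigma}}|$ and $|\Tilde D_{\rho_j}|_{V_{\Sigma,\sigma}}|$. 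This is where the real content lies, and the step I expect to be the main obstacle: one must argue that the base locus of such a restricted toric linear system is contained in the two torus-fixed points of $V_{\Sigma,\sigma}$ --- namely the points $V_{\Sigma,\sigma'}$ for the two maximal cones $\sigma'\supset\sigma$, read off combinatorially from the relevant faces of $\Delta_{\Tilde D_{\rho_i}}$ --- and that a generic member consists of these base points together with a generic, moving configuration of further points, so that two generic members can share a zero only at a common base point. Granting this, disjointness of the two base loci forces the common zero locus, hence the full failure locus along $V_{\Sigma,\sigma}$, to be empty; ranging over all basepoint cones yields the claim. The steps needing care are the genericity statement just described and the precise identification of the restricted base loci with the Newton-polytope faces in \cref{eq:jacobian_set}; the Bertini reduction, the factorization \cref{eq:factor}, and the bookkeeping of partial derivatives are routine given what is already developed.
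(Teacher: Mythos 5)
Your proposal is correct and follows essentially the same route as the paper: Bertini localizes the failure to the toric curves $V_{\Sigma,\sigma}$ of basepoint three-cones, the factorization \cref{eq:factor} reduces $F=dF=0$ there to the common vanishing of the $f_\rho|_{V_{\Sigma,\sigma}}$, the sets $S_\tau$ identify which restrictions are not identically zero, and the two bullets handle the one-survivor and multiple-survivor cases exactly as in the paper's discussion preceding the proposition (where the "main obstacle" you flag — that generic members of the restricted torus-invariant linear systems on $V_{\Sigma,\sigma}\cong\mathbb{P}^1$ can share a zero only at a common toric base point — is asserted with the same brevity).
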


In summary, if the hypotheses of \cref{prop:toric_sing} and \cref{prop:jacobian} are satisfied, then the anticanonical hypersurface associated to a vex triangulation is smooth. Let's see these criteria in an example.

\begin{example}
    We can exhibit some of the concepts in this section through the second fan constructed from the polytope $\Delta^\circ$ considered in \Cref{ex:intro}: i.e., the vex triangulation. One can compute that $\Sigma$ has only a single basepoint cone $\sigma = \cone{u_1, u_4, u_5}$. This is a three-cone, as we'd expect. Thus, the basepoint locus of $\overline{K}$ on $V_\Sigma$ is $V_{\Sigma,\sigma} = D_1 \cap D_4 \cap D_5$, because the generic anticanonical polynomial factors as in \Cref{eq:factor}. 

    For the CY hypersurface to avoid toric singularities, we must show that all maximal cones containing $\sigma$ are smooth. These are the cones $\sigma_1 = \cone{u_1, u_2, u_4, u_5}$ and $\sigma_2 = \cone{u_1, u_3, u_4, u_5}$, and indeed they are smooth. This implies that the toric variety is Gorenstein. In particular, the Cartier data for all other maximal cones is furnished by vertices of the dual polytope $\Delta$, and because the basepoint cones are smooth, the matrix whose rows are the points $u_1, u_2, u_4, u_5$ can be inverted over the integers to solve for the Cartier datum for $\sigma_1$ --- yielding $(3,-1,-1,-3)$ --- and likewise for $\sigma_2$. Because the toric variety is Gorenstein, its singularities are in codimension at least four, so the basepoint curve associated to $\sigma$ must be smooth, and indeed it is. 

    Now we verify that the Jacobian criterion holds. Letting $\rho_i$ be the ray generated by $u_i$, and letting $\sum_{\rho \neq \rho_i} D_\rho$ be the torus-invariant divisor representing the class $\Tilde{D}_i$ (fixing the polytope $\Delta_{\Tilde{D}_i}$), we can compute that
    \begin{equation}
        \begin{aligned}
            S_{\rho_1} &= \varnothing, \\
            S_{\rho_4} &= \varnothing, \\
            S_{\rho_5} &= \{(2,0,-1,-1)\} \subset \Delta_{\Tilde{D}_5}.
        \end{aligned}
    \end{equation}
    In particular, only one of the relevant $S_\rho$ is non-empty, so the Jacobian criterion fails (i.e., $F = dF = 0$) on the complete intersection $D_1 \cap D_4 \cap D_5 \cap (\overline{K} - D_5)$. We must check that the associated intersection number vanishes. Using the basis $\{D_2, D_5\}$ set by our GLSM charge matrix we fixed in \cref{ex:intro}, we can compute that the only non-vanishing intersection numbers are $D_2 \cdot D_5^3 = 1$ and $D_5^4 = -4$. This is enough to compute that 
    \begin{equation}
        \begin{aligned}
            D_1 \cdot D_4 \cdot D_5 \cdot (\overline{K} - D_5) 
            &= (D_2 + D_5) \cdot (D_2 + D_5) \cdot D_5 \cdot (6D_2 + 3D_5) \\
            &= 12 D_2 D_5^3 + 3 D_5^4 \\
            &= 0.
        \end{aligned}
    \end{equation}
    Thus, the Jacobian criterion holds everywhere, and the CY hypersurface is smooth.
    \label{ex:sing_locus}
\end{example}

\subsubsection{CY Intersection with Toric Subvarieties}

\label{sec:intersect}

It is useful to briefly touch on the intersection structure of toric subvarieties with the CY hypersurface. For examples, this helps one determine $K_\cup$ for vex triangulations, as discussed at the beginning of \cref{sec:cy_hyper}. 

We recall that prime toric divisors corresponding to points interior to facets of $\Delta^\circ \subset N$ are understood to not intersect the generic anticanonical divisor $X$. Traditionally, this is argued by noting that if $u_\rho$ is interior to a facet --- that is, it has minface dimension three --- setting $x_\rho = 0$ results in all anticanonical monomials vanishing aside from the one associated to the vertex $m \in \Delta$ dual to the facet containing $u_\rho$. This monomial depends only on homogeneous coordinates $x_\tau$ such that $u_\rho, u_\tau$ do not share a three-face. For FRSTs, $u_\rho$ and $u_\tau$ cannot belong to the same cone, so $x_\rho = x_\tau = 0$ belongs to the exceptional set $Z(\Sigma)$, meaning $X$ and $D_\rho$ do not intersect. For FRSTs, this argument holds for any toric subvariety $V_{\Sigma,\sigma}$ whose associated cone has minface dimension three.

For vex triangulations, basepoint cones provide exactly the mechanism to circumvent the above argument. If $u_\rho$ were to belong to a basepoint cone with $u_\tau$ on a different three-face, $D_\rho \cap D_\tau \cap X$ would be non-empty. However, we can apply the previous result to rule out this possibility for reflexive fans: any such pair $u_\rho, u_\tau$ would form a basepoint two-cone, which cannot exist. Note, though, this only ensures prime toric divisors with minface dimension three do not intersect the anticanonical hypersurface: for example, it is common for one of the two-cones in a basepoint three-cone to have minface three, but it will intersect the anticanonical hypersurface thanks to the third ray in the basepoint cone. This is conveniently restated in the language of \cref{subsubsec:flippable_secondary_cone}: a toric subvariety $V_{\Sigma,\sigma}$ with minface dimension three intersects the CY hypersurface if and only if the star $\text{st}_\Sigma(\sigma)$ has minface dimension four. 

In general, then, toric subvarieties with minface dimension four are contained in the CY hypersurface (belong to the basepoint locus), subvarieties with minface dimension three intersect the CY according to the above condition, and subvarieties with minface dimension two or smaller intersect the CY generically.

\subsubsection{Birational Geometry}

\label{sec:cy_bir_geo}

As first noted in \cref{sec:common_sig}, it is useful to categorize flips of triangulations according to the signatures of their associated circuits. It is worth briefly doing this for four-dimensional reflexive polytopes, and the CY hypersurfaces of their associated toric varieties.

Applying the discussion of \cref{sec:common_sig}, we find that the circuits associated to four-dimensional fans can only have the following signatures.
\begin{equation}
    (|J_+|, |J_-|) \in \Big\{ (3,2), (2,2), (4,1), (3,1), (2,1), (5,0), (4,0), (3,0), (2,0) \Big\}
\end{equation}
Among these, the only ones corresponding to birational maps --- i.e., the flips, which are isomorphisms in codimension one --- are $(3,2)$ and $(2,2)$. Thus we see that the flips of toric fourfolds (and thus, the flops of toric CY threefold hypersurfaces, by \cref{lem:flips_to_flops}) organize into two categories.

The $(2,2)$ circuit is already well known. A prototype on a reflexive polytope is the classic scenario of four points $u_1, u_2, u_3, u_4$ forming a quadrilateral inside of a two-face of $\Delta^\circ$, with linear relation $u_1 + u_2 - u_3 - u_4 = 0$. In this case, the flip exchanges which diagonal pair of points are connected by an edge ($u_1$ and $u_2$ or $u_3$ and $u_4$). Such flips, where the circuit has minface dimension $2$, are inherited by the Calabi--Yau as a flop. Indeed, this is the signature of the unique circuit associated to the resolved conifold, the non-compact toric threefold, the local model for a simple flop of length one \cite{atiyah1958analytic, katz1992gorenstein}.

However, the $(3,2)$ signature is less familiar. It only exists in four or greater dimensions, and does not preserve the number of maximal cones, in stark contrast to the $(2,2)$ circuit: in particular, it turns two maximal cones into three, or vice versa. Perhaps they have historically been less emphasized in the context of reflexive polytopes because all flips between two FRSTs of $(3,2)$ signature do not descend to flops of the anticanonical hypersurface. To see this, recall that CYs arising from FRSTs are fully determined by the two-face triangulations induced by the FRST, so a flip between FRSTs descending to a flop must change the two-face triangulations. Thus, any flip between FRSTs affecting the CY must descend to a $(2,2)$ flip on some two-face, but then the original flip must've been $(2,2)$ itself, as the addition of any further points (to $J_+$ or $J_-$) spoils minimal dependency. A different way of saying this is that a $(3,2)$ circuit cannot have minface dimension $2$, and for a flip between FRSTs it must be minface dimension $3$, so it does not affect the CY hypersurface per the discussion of \cref{sec:intersect}.

Now consider flips between an FRST and a vex triangulation. Such a flip must have a circuit with minface dimension four, because in going from the FRST to the vex triangulation it must introduce a basepoint cone, which is a subset of the circuit. However, as a corollary of \cref{main}, we actually cannot have a signature $(2,2)$ circuit of minface dimension $4$, because the associated coarse non-simplicial toric variety would have a non-simplicial $3$-cone on the basepoint locus of the anticanonical class, yielding a singular curve on the CY hypersurface, which contradicts the result that all flips must descend to flops.\footnote{It may seem that the possibility of $(2,2)$ circuits with minface dimension four is a counterexample to our proof for \cref{main}. However, such a circuit would yield a facet of a CY K\"ahler cone where a singular curve would develop without the contraction of a divisor. Such a scenario does not feature in the taxonomy of CY threefold K\"ahler cone facets, so we conclude that it cannot occur.} The only other option is a $(3,2)$ circuit of minface dimension $4$. The associated coarse non-simplicial toric variety would have non-simplicial $4$-cone, or a singular point on the anticanonical basepoint locus, so this would descend to a flop of the CY. In particular, flips between FRSTs and vex triangulations must descend to flops: they can never contract subvarieties which collectively miss the generic anticanonical hypersurface.

Finally, consider flips between two vex triangulations. The circuit signatures $(2,2)$ and $(3,2)$ are both possible, and both can descend to flops of the anticanonical hypersurface (even if the minface dimension of the circuit is three, due to the presence of basepoint cones, as discussed in the previous section).

This discussion is summarized as follows. Flips between two FRSTs are of signature $(2,2)$ or $(3,2)$, and only the $(2,2)$ signature can descend to a flop on the CY, which happens exactly when the induced two-face triangulations are changed. Flips between an FRST and a vex triangulation are only of signature $(3,2)$, and must descend to a flop. Finally, flips between two vex triangulations are of signature $(2,2)$ or $(3,2)$ and may or may not descend to a flop.

We conclude by briefly commenting on the remaining circuit signatures. Those of the form $(n,1)$ correspond to blowdowns of prime toric divisors, which often results in that divisor shrinking on the anticanonical hypersurface, inducing a boundary of K\"ahler moduli space. Those of the form $(n,0)$ correspond to a degenerating $(n-1)$-dimensional fiber of the fibered ambient toric variety. We've already discussed in \cref{sec:sec_fan} how the anticanonical hypersurface inherits these fibrations, with fibers of dimension $n-2$: these degeneration limits often result in the entire CY shrinking to zero volume, again marking a boundary of moduli space. However, in spite of the harsh topological changes undergone by the toric variety in codimension one, the impact on the anticanonical hypersurface can be much milder, resulting in either a flop or no singularity at all. In the case of a flop, the flopped CY has no immediate interpretation as a hypersurface in a toric variety. In this sense they are ``non-toric,'' though this name is misleading as one can construct higher-codimension toric embeddings of these CYs. We will refer to these geometries as non-inherited.

\begin{figure}
    \centering
    \includegraphics[width=0.5\linewidth]{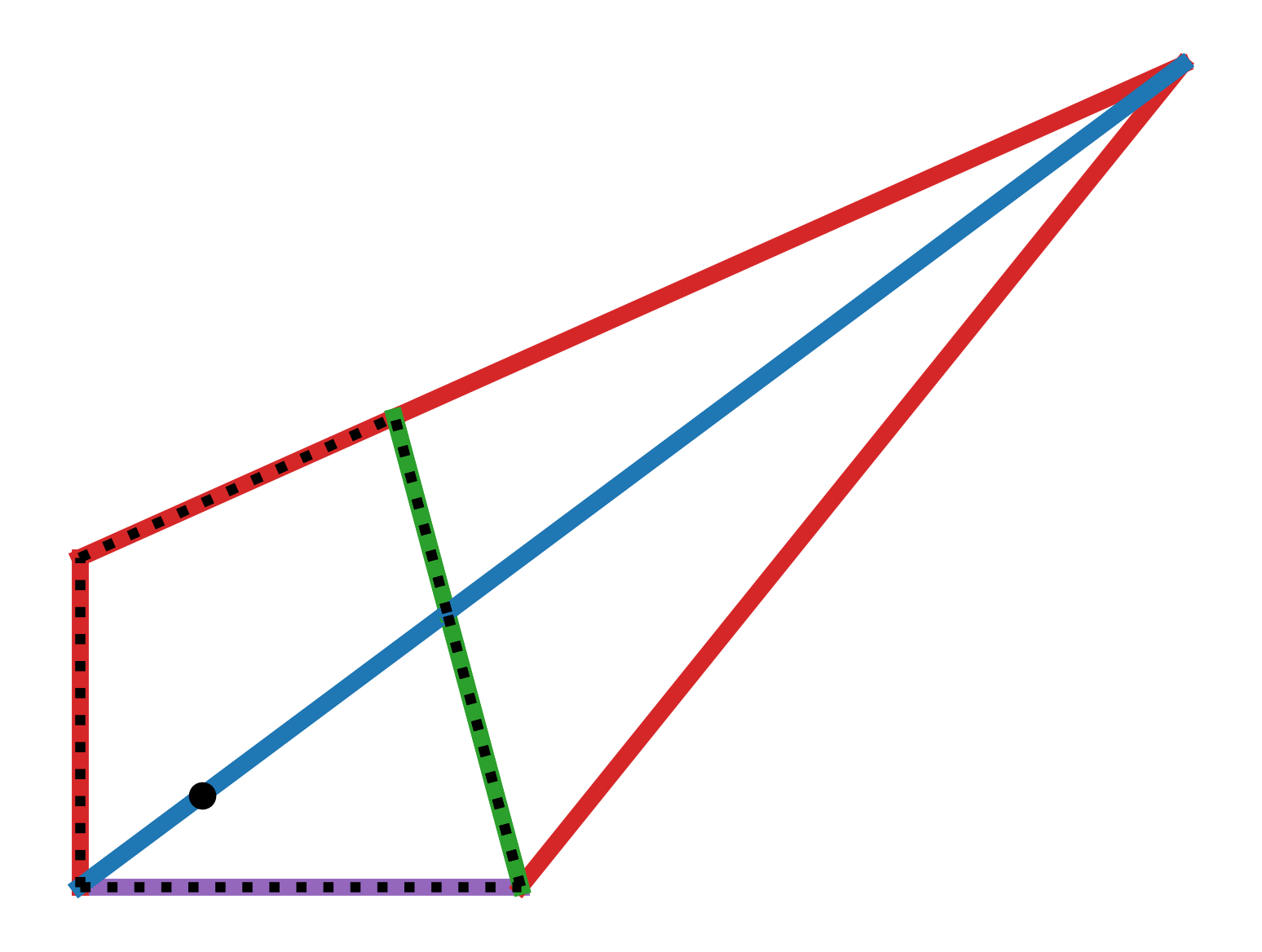}
    \caption{Two-dimensional slice of three-dimensional chamber fan (quotient of secondary fan by its lineality space) for the 4D reflexive polytope considered in \Cref{ex:38}, restricted to the moving cone (i.e., only showing the chambers associated to fine triangulations). That is, each polygon corresponds to a nef/K\"ahler cone of a toric variety. This polytope yields a CY hypersurface with $(h^{1,1}, h^{2,1}) = (3,69)$. Lines are colored according to circuit signature: green for $(3,2)$, blue for $(2,2)$, and red for $(n,1)$. In this example, purple is also $(n,1)$ but corresponds to a non-inherited flop of the CY hypersurface: the extended K\"ahler cone (moving cone) of the CY extends beyond the moving cone of the ambient toric variety. The black dashed region delineates the union of (toric) nef cones for FRSTs, and the black dot marks the ray generated by the anticanonical class, which lies at the intersection of all such cones.}
    \label{fig:38}
\end{figure}

\begin{example}
    Consider the four-dimensional reflexive polytope with non-zero points not interior to facets given by the columns of the following matrix. 
    \begin{equation}
        \begin{pNiceMatrix}[first-row]
             1 &  2 & 3 & 4 & 5 & 6 & 7\\
            -1 & 0 & -1 & 0 & 0 & 1 & 0 \\
            -1 & 0 & 2 & 0 & 1 & 0 & 1 \\
            -1 & 1 & 0 & 0 & 0 & 0 & 0 \\
            0 & 0 & -1 & 1 & 0 & 0 & -1
        \end{pNiceMatrix}
        \label{eq:poly_38}
    \end{equation}
    Because there are seven points, the height space for the associated vector configuration is seven dimensional. Because $\dim N = 4$, the secondary fan features a four-dimensional lineality space: performing a quotient by this subspace and passing to the chamber fan --- which we interpret toric geometrically as a refinement of the effective cone, in the class group --- yields a three-dimensional fan. We can restrict this fan to the moving cone, meaning we consider only cones corresponding to fine triangulations (for which the Picard group has rank $3$, yielding favorably embedded anticanonical hypersurfaces with $h^{1,1} = 3$). A two-dimensional slice of the resulting fan is shown in \Cref{fig:38}, with the plot style explained in the caption. 

    From \Cref{fig:38}, we see that there are four fine triangulations. In particular, there are two FRSTs and two vex triangulations. Recalling that the anticanonical class --- denoted by the black dot --- must be contained in the nef cone of the toric variety associated to any FRST, the left two chambers must correspond to FRSTs (the non-simplicial fan arising at the intersection of the two chambers is the central fan). They are related by a $(2,2)$ flip (blue) with $Z = (3,5,6,7)$. The remaining two vex triangulations are also related by this flip. There is additionally a $(3,2)$ flip (green) which maps FRSTs and vex triangulations to each other, with $Z = (1,2,3,4,5)$. 

    These four fine triangulations descend to four topologically distinct Calabi--Yau hypersurfaces with $(h^{1,1}, h^{2,1}) = (3, 69)$. By incorporating vex triangulations, we gained access to two new topologies, and improved our approximation of the extended K\"ahler cone from the union of the two FRST toric K\"ahler cones (black dotted region) to the entire toric moving cone (full quadrilateral). However, we have deliberately selected this example to showcase how this remains an approximation. This is not a complete birational class of Calabi--Yau threefolds. On each of the boundaries of the toric moving cone, a prime toric divisor is blown down on the toric variety, as a consequence of an $(n,1)$ circuit. For the top boundary, it is the seventh prime toric divisor $D_7$; for the right boundary, $D_3$; and for the bottom and left boundaries, $D_5$ (though via distinct circuits). On the top, right, and left boundaries, a divisor also shrinks to zero volume on the Calabi--Yau hypersurface, marking the end of the extended K\"ahler cone: however, this is not true for the bottom boundary, which is merely a flop wall in spite of it being a blowdown on the toric variety. Thus, there is a non-inherited phase on the other side of the purple boundary, not realized as a hypersurface in any member of the birational equivalence class of the ambient toric varieties. This illustrates how our improved toric approximation in general does not capture all topological members of a birational equivalence class nor the complete extended K\"ahler cone.
    \label{ex:38}
\end{example}

\section{4D Reflexive Polytopes II: Counting Triangulations}

\label{sec:count}

In this section we use the algorithm of \cref{sec:algo}, implemented and run using \texttt{CYTools} \cite{Demirtas:2022hqf} and \texttt{regfans} \cite{regfans}, to explicitly enumerate fine regular triangulations (i.e., fine regular simplicial fans) arising from the KS database for $h^{1,1} \leq 7$. Such a computation generalizes the counting of FRSTs performed in \cite{Altman:2014bfa, Crino:2022zjk} to include vex triangulations. We also present some particular examples and compute upper bounds on the total number of fine regular triangulations in KS. Finally, we present upper bounds on fine regular triangulations in the KS database, generalizing the methods of \cite{Demirtas:2020dbm}.

\subsection{Enumeration at Small $h^{1,1}$}

\label{sec:enumerate}

\begin{table}[]
    \centering
    \begin{tabular}{|c|c|c|c|}
        \hline
        $h^{1,1}$ & \# Polytopes & \# Fine Regular Triangulations & \# Vex Triangulations  \\
        \hline
        \hline
        1 & 5 & 5 & 0 \\ 
        \hline
        2 & 36 & 52 & 4 \\ 
        \hline
        3 & 244 & 694 & 168 \\ 
        \hline
        4 & 1197 & 9639 & 4291 \\ 
        \hline
        5 & 4990 & 137620 & 80570 \\ 
        \hline
        6 & 17101 & 1786783 & 1196698 \\
        \hline
        7 & 50376 & 22089147 & 16014955 \\
        \hline
        Total & 73949 & 24023940 & 17296686 \\
        \hline
    \end{tabular}
    \caption{Counts of all fine regular triangulations (i.e., fine regular simplicial fans) and of vex triangulations across all reflexive polytopes from the Kreuzer--Skarke database with $h^{1,1} \leq 7$.}
    \label{tab:count}
\end{table}

By applying the algorithm discussed in \cref{sec:algo}, by brute force we have explicitly solved our classification problem from \cref{sec:intro} and enumerated all fine regular simplicial fans arising from 4D reflexive polytopes for which the anticanonical hypersurface satisfies $h^{1,1} \leq 7$. This computation took a few days on a standard modern laptop. Our counts are summarized in \cref{tab:count}, and we plot the fraction of fine regular simplicial fans that are vex triangulations in \cref{fig:count}. In particular, we display this fraction in two ways. In blue, we combine the counts of fans across all polytopes for fixed $h^{1,1}$, yielding the total fraction of vex triangulations for that Hodge number. In orange, we compute the fraction for each polytope and present the average fraction for each $h^{1,1}$, with error bands providing the standard error. The orange curve is lower because there is a fraction of polytopes at each Hodge number which feature no vex triangulations. From these figures we see that for $h^{1,1} \geq 5$, there are more vex triangulations than FRSTs. Moreover, as $h^{1,1}$ increases, the ratio of vex triangulations to FRSTs increases monotonically: we anticipate that this trend continues beyond $h^{1,1} = 7$. 

\begin{figure}
    \centering
    \includegraphics[width=\linewidth]{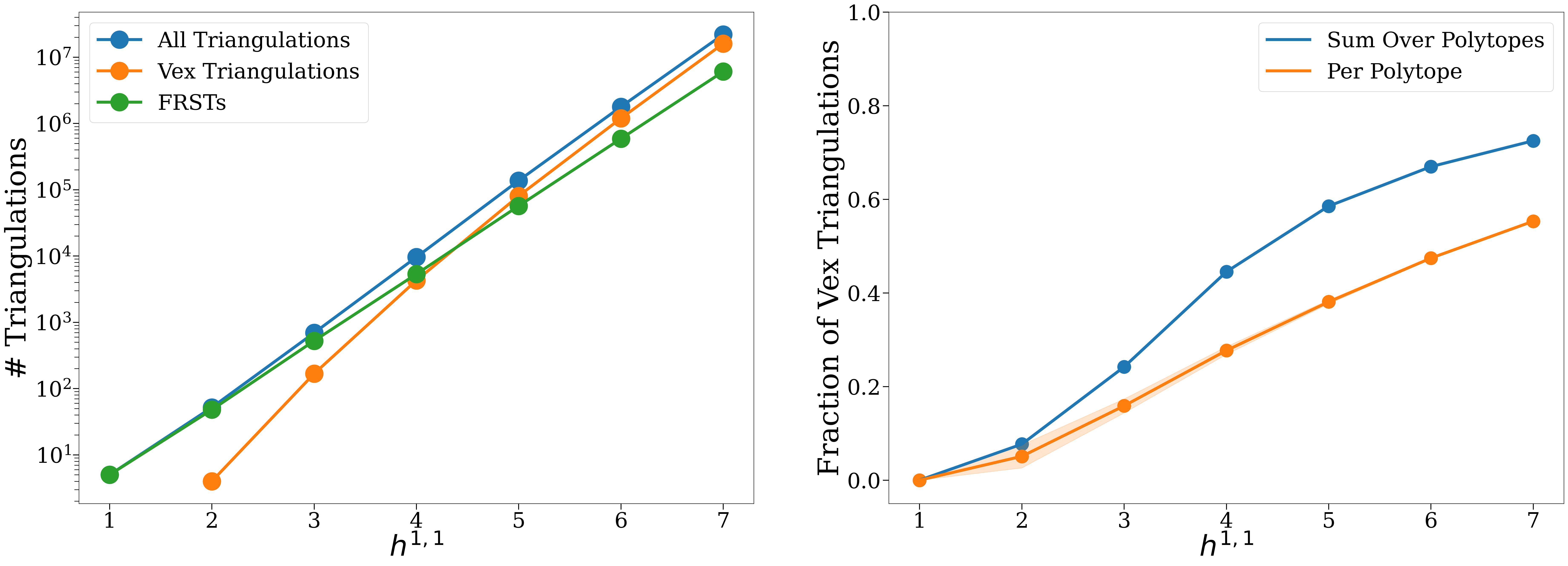}
    \caption{Results of exhaustively enumerating all fine regular triangulations (fine regular simplicial fans) of all 4D reflexive polytopes with $h^{1,1} \leq 7$: total counts (left) and fraction of triangulations that are vex (right). In particular, in the right panel, in blue we take the union of vex triangulations across all polytopes to compute one ratio per $h^{1,1}$; in orange, we compute the ratio per polytope, plotting the mean with error bands denoting the standard error.}
    \label{fig:count}
\end{figure}

While \cref{main} is the result that all of these toric varieties have smooth CY hypersurfaces, as a sanity check we verify that all vex triangulations have anticanonical hypersurfaces with empty singular loci. A sufficient condition for this is that the hypotheses of \cref{prop:toric_sing} and \cref{prop:jacobian} are satisfied, as this precludes intersection with toric singularities and failures of the Jacobian criterion, respectively. We indeed verify that this sufficient condition holds for all vex triangulations with $h^{1,1} \leq 7$.\footnote{We comment in passing that our condition for satisfying the Jacobian criterion is quite often non-trivially satisfied: in the language of \cref{prop:jacobian}, there are many examples where only a single $S_\rho$ is non-empty, so the singular locus is naively $0$-dimensional and an intersection theory computation must yield $0$ (and does), as in \cref{ex:sing_locus}.} Additionally, we noted in \cref{sec:cy_bir_geo} that $(2,2)$ embedded circuits with minface dimension $4$ should never occur. This is also verified in our exhaustive search. 

\subsection{Examples}

Having taken the time to enumerate fans at small $h^{1,1}$, we take some time to present a pair of interesting CY birational classes (i.e., polytopes) at $h^{1,1} = 3$. In \cref{ex:7} we present the birational class with the most unique toric hypersurface CY manifolds at $h^{1,1} = 3$, illustrating exactly how much one can gain by considering vex triangulations. In \cref{ex:192} we present a class which illustrates some qualitative differences between the decomposition of the moving cones of the ambient toric variety and the CY hypersurface into nef cones as well as some subtleties in counting distinct diffeomorphism classes. On this note, we stress that \cref{tab:count} should not be interpreted as a count of topologically distinct Calabi--Yau threefolds: we expect that there is significant degeneracy \cite{Gendler:2023ujl,Chandra:2023afu}.

\begin{figure}[h]
    \centering
    \includegraphics[width=0.5\linewidth]{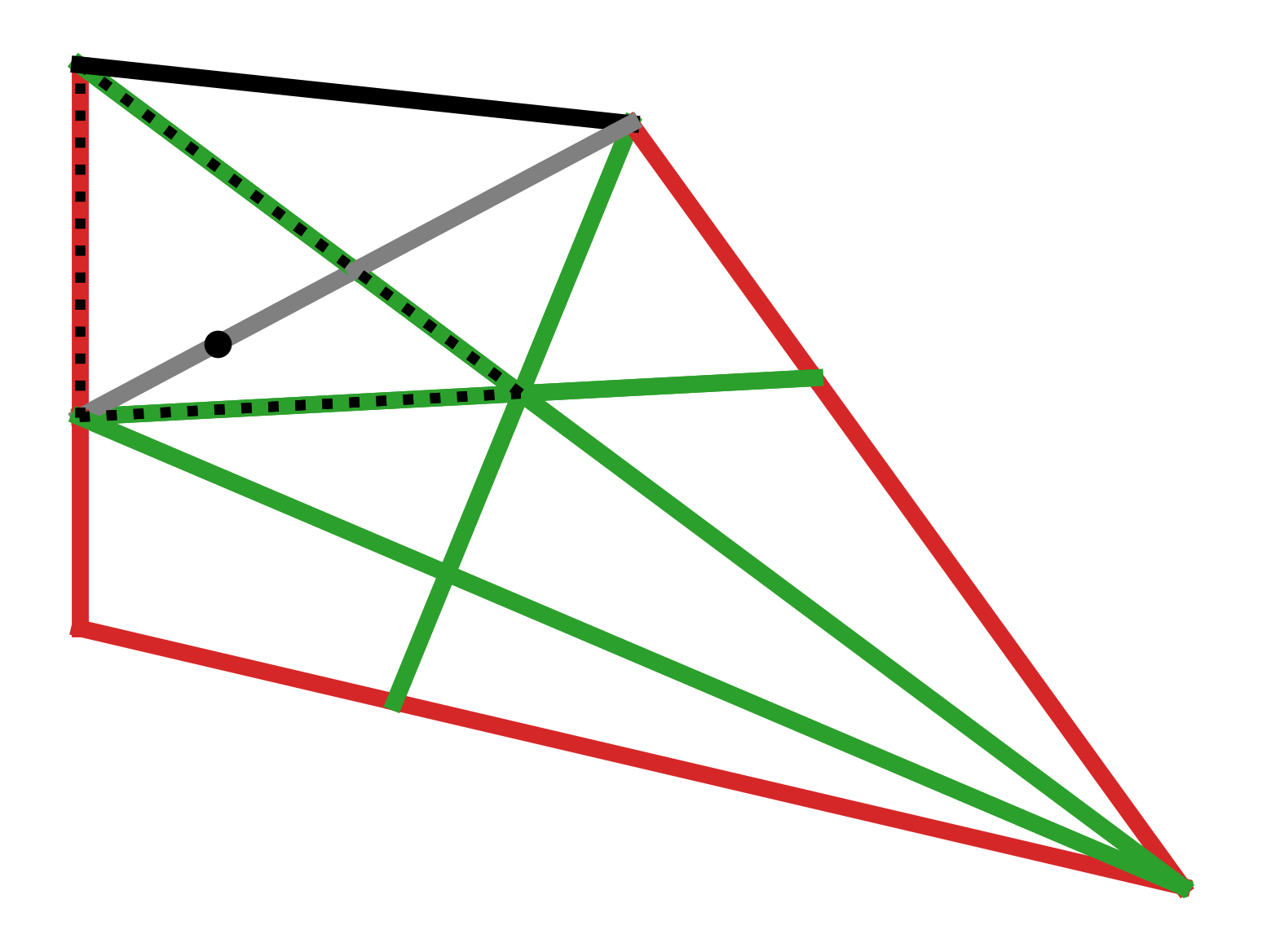}
    \caption{Two-dimensional slice of three-dimensional chamber fan (quotient of secondary fan by its lineality space) for the 4D reflexive polytope considered in \Cref{ex:7}, restricted to the moving cone (i.e., only showing the chambers associated to fine triangulations). This polytope yields a CY hypersurface with $(h^{1,1}, h^{2,1}) = (3,57)$. The styling is in agreement with \Cref{fig:38}, with a few additions: gray boundaries denote a flip which does not descend to a flop on the CY, and black boundaries denote a $(n,0)$ circuit.}
    \label{fig:7}
\end{figure}

\begin{example}
    Consider the four-dimensional reflexive polytope with non-zero points not interior to facets given by the columns of the following matrix. 
    \begin{equation}
        \begin{pNiceMatrix}[first-row]
            1 & 2 & 3 & 4 & 5 & 6 & 7 \\
            0 & 1 & -1 & 0 & 0 & -1 & 0 \\
            0 & 0 & -1 & -1 & 1 & 0 & 0 \\
            1 & 0 & 0 & -1 & 0 & -1 & 0 \\
            0 & 0 & 2 & 0 & 0 & -1 & 1
        \end{pNiceMatrix}
        \label{eq:poly_7}
    \end{equation}
    This is a setup analogous to \Cref{ex:38}, so we display a similar plot in \Cref{fig:7}: namely, a two-dimensional cross-section of the three-dimensional chamber fan --- restricted to the toric moving cone --- descending from a seven-dimensional secondary fan. We see that there are ten fine triangulations: in particular, two FRSTs (two chambers containing the anticanonical class, denoted by the black dot) and eight vex triangulations. These fine fans are related strictly by five distinct $(3,2)$ circuits.

    These ten fine triangulations descend to eight topologically distinct Calabi--Yau hypersurfaces with $(h^{1,1}, h^{2,1}) = (3, 57)$. This reduction in number arises because one of the $(3,2)$ circuits --- marked in gray --- has minface dimension $3$ and thus does not descend to a flop on the CY, following the discussion of \cref{sec:cy_bir_geo}. Thus, the two FRSTs correspond to a single CY --- whose K\"ahler cone is the union of the toric nef/K\"ahler cones of the two toric varieties --- and the two vex triangulations adjoining the gray $(3,2)$ circuit analogously correspond to a single CY. 

    This is the example with the greatest number topologically distinct CYs at $h^{1,1} = 3$ in the Kreuzer--Skarke dataset. By incorporating vex triangulations, we gained access to seven new topologies and seven new chambers of the extended K\"ahler cone. In fact, in this example, the toric moving cone agrees with the extended K\"ahler cone, so we captured an entire birational equivalence class.   
    
    Analogous to \Cref{ex:38}, the red boundaries of the toric moving cone indicate that a prime toric divisor is blown down, and on the CY a divisor shrinks as well. However, the black boundary corresponds to a $(3,0)$ circuit, and is therefore also a boundary of the toric effective cone (the support of the chamber fan), which agrees with the CY effective cone in this example. We recall from the end of \cref{sec:cy_bir_geo} the toric variety and CY hypersurface whose nef/K\"ahler cones contain this facet are fibered --- in this case, the CY is genus-one fibered --- and approaching that facet geometrically corresponds to the shrinking fiber limit.
    \label{ex:7}
\end{example}

\begin{figure}[h]
    \centering
    \includegraphics[width=0.5\linewidth]{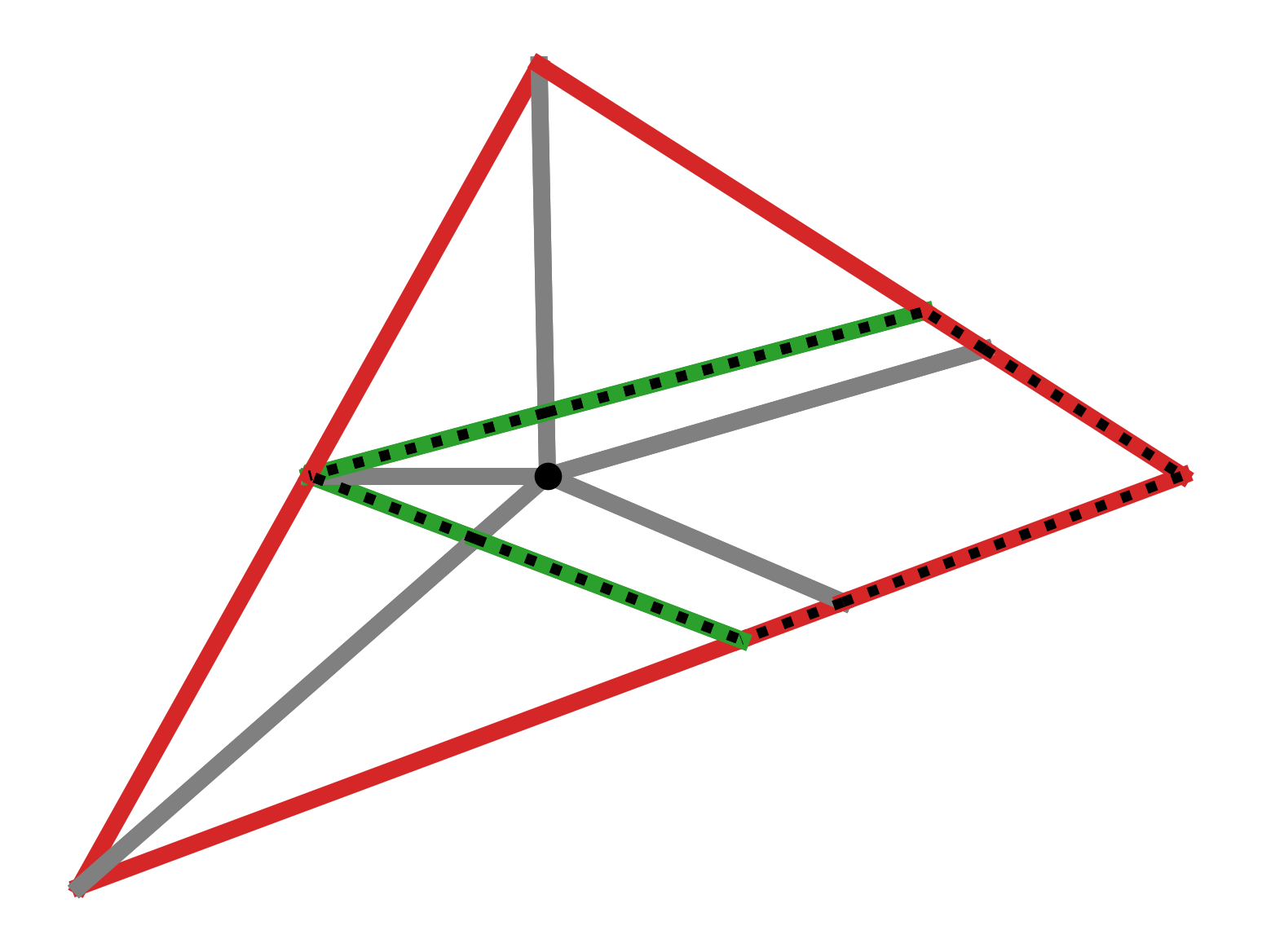}
    \caption{Two-dimensional slice of three-dimensional chamber fan (quotient of secondary fan by its lineality space) for the 4D reflexive polytope considered in \cref{ex:192}, restricted to the moving cone (i.e., only showing the chambers associated to fine triangulations). This polytope yields a CY hypersurface with $(h^{1,1}, h^{2,1}) = (3,111)$. The styling is in agreement with \Cref{fig:7}.}
    \label{fig:192}
\end{figure}

\begin{example}
    Consider the four-dimensional reflexive polytope with non-zero points not interior to facets given by the columns of the following matrix. 
    \begin{equation}
        \begin{pNiceMatrix}[first-row]
            1 & 2 & 3 & 4 & 5 & 6 & 7 \\
            1 & 0 & 1 & -3 & -3 & 0 & 0 \\
            0 & 1 & 0 & -1 & -1 & 0 & 0 \\
            0 & 0 & 1 & -1 & 0 & 0 & 1 \\
            0 & 0 & 1 & 0 & -1 & 1 & 0
        \end{pNiceMatrix}
        \label{eq:poly_157}
    \end{equation}
    This is a setup analogous to \Cref{ex:38} and \Cref{ex:7}, so we display a similar plot in \Cref{fig:192}: namely, a two-dimensional cross-section of the three-dimensional toric moving cone. We see that there are nine fine triangulations: in particular, five FRSTs (five chambers containing the anticanonical class, denoted by the black dot) and four vex triangulations. 
    
    These nine fine triangulations naively descend to three topologically distinct Calabi--Yau hypersurfaces with $(h^{1,1}, h^{2,1}) = (3, 111)$. We again employ gray to denote flips of the toric variety which do not descend to flops of the CY hypersurface. Similar to \Cref{ex:7}, all of the FRSTs actually only correspond to a single CY. In particular, the five circuits which relate two FRSTs --- four $(3,2)$ circuits and a $(2,2)$ circuit, the five gray lines converging on the anticanonical class --- each have minface dimension three. 

    However, there are actually only two topologically distinct CYs in this birational class. This polytope has an automorphism given by exchange of the last two coordinates, which swaps $u_4 \leftrightarrow u_5$ and $u_6 \leftrightarrow u_7$. This induces an action on the class group of the toric variety, given by the swapping of the associated prime toric divisors. The extended K\"ahler cone of the CY experiences this symmetry as well. In particular, the K\"ahler cone of the unique CY arising from the FRSTs of this polytope is fixed by this action (though the top and bottom generators are exchanged), but the K\"ahler cones of the two remaining CYs are mapped to each other, so they correspond to the same CY topology. 

    We conclude by stressing an important difference between the secondary fan/chamber fan and the extended K\"ahler cone. In the former, the hyperplanes delineating chambers can end: we see this clearly in \Cref{fig:192}, where each of the five hyperplanes marked by gray end at the anticanonical class, and do not extend further. A circuit will only define a hyperplane for a secondary cone if it is embedded in the associated triangulation; this need not hold for every triangulation whose secondary cone the hyperplane would naively intersect. However, in the extended K\"ahler cone, hyperplanes delineating chambers correspond to effective curve classes on the CY with non-trivial Gopakumar-Vafa invariants, which do not vary across the K\"ahler moduli space: thus, these hyperplanes cannot end, and must cut all the way through the extended K\"ahler cone. That is, the extended K\"ahler cone is always a \textit{hyperplane arrangement}. In particular, if a hyperplane in the secondary fan does not extend through the entire moving cone, it must not descend to a flop on the CY hypersurface (i.e., the associated curve class on the CY must not be effective). This is exemplified in \Cref{fig:192} by the fact that the five hyperplanes which end at the anticanonical class do not descend to flops.
    \label{ex:192}
\end{example}
In this last example we see how the degeneracy induced by automorphisms makes counting topologically distinct CYs difficult. We mention that another source of automorphisms is the symmetric flop --- a flop which relates two CYs which are actually topologically equivalent --- which are explored in \cite{Brodie:2021ain, Brodie:2021toe, Gendler:2022qof}. The map exchanging the K\"ahler cone of the two topologies related by a symmetric flop lifts to an automorphism of the extended K\"ahler cone. A further source of degeneracy arises from the fact that that distinct polytopes give rise to the same birational class of Calabi--Yau threefolds, as found in \cite{Gendler:2023ujl,Chandra:2023afu}. We defer the full problem of enumerating topological distinct CYs arising from vex triangulations, along the lines of \cite{Gendler:2023ujl,Chandra:2023afu}, to future work.

\subsection{Upper Bounds}
\label{subsubsec:upper}

Recall from \cref{thm:pc_to_vc,thm:vc_to_pc} the maps between FRSTs of a point configuration $\mathbf{A}_\mathrm{PC}$ and fine, regular triangulations (FRTs) of their associated vector configuration $\mathbf{A}_\mathrm{VC}=\mathbf{A}_\mathrm{PC}\setminus0$. Notably, the map from FRTs of $\mathbf{A}_\mathrm{PC}$ to FRTs of $\mathbf{A}_\mathrm{VC}$ is surjective. Hence, the number of FRTs of $\mathbf{A}_\mathrm{VC}$, denoted $N_\mathrm{FRT}(\mathbf{A}_\mathrm{VC})$, is bounded from above by $N_\mathrm{FRT}(\mathbf{A}_\mathrm{PC})$. This is nice because \cite{Demirtas:2020dbm} demonstrated a method to bound $N_\mathrm{FRT}(\mathbf{A}_\mathrm{PC})$ for any reflexive polytope,
\begin{equation}
    N_\mathrm{FRT}(\mathbf{A}_\mathrm{PC}) \leq \binom{(\dim(\mathbf{A}_\mathrm{PC})+1)V-1}{n-1}
\end{equation}
for $V$ the volume of $\conv{\mathbf{A}}$ and $n$ the number of lattice points not interior to facets of $\conv{\mathbf{A}}$. This is a modification of \cite{Demirtas:2020dbm}'s Eqs. (3.3) and (3.4) so as not to restrict to star triangulations. This bound shows that there are $\leq10^{979}$ FRTs of the vector configurations in the Kreuzer--Skarke database, analogous to \cite{Demirtas:2020dbm}'s bound of $\leq10^{928}$ FRSTs.

It should be stressed: these upper bounds are often brought up for counting the number of possibly distinct CYs that one can produce. \cite{Demirtas:2020dbm} demonstrated that there are many fewer CYs producible from FRSTs, only $\leq10^{429}$, because of diffeomorphism classes being repeated by two-face equivalent FRSTs (see \cref{sec:frst_cy_prop}). In fact, \cite{bigicy} subsequently improved this bound on CYs from FRSTs down to $\leq 10^{296}$. So, while the bound $\leq 10^{979}$ similarly restricts the number of inequivalent CYs that can be produced from FRTs of the vector configurations in the Kreuzer--Skarke database, this is likewise anticipated to be a significant overestimate due to repetitions of diffeomorphism classes.

\section{Conclusion}

In this work, in order to study the birational geometry of smooth toric hypersurface Calabi--Yau threefolds --- counting the distinct topological representatives and mapping out the K\"ahler moduli space --- we considered the classification problem of enumerating fine regular triangulations (simplicial fans) associated with a fixed vector configuration (set of rays). We performed this analysis from the complementary perspectives of triangulation theory and toric geometry, reviewing both theories in detail and explaining how they both ultimately converge on the geometric notion of the secondary fan. In particular, we note an efficient computational algorithm for exhaustively enumerating regular triangulations/fans, which resolves the classification problem.

We then applied this theory and methodology to four-dimensional reflexive polytopes, studying the associated toric varieties and their Calabi--Yau hypersurfaces. In particular, we demonstrated how considering all fine regular simplicial fans naturally leads one to incorporate fans outside the purview of the FRSTs originally considered by Batyrev: i.e., vex triangulations. We then studied non-weak-Fano Gorenstein toric varieties arising from vex triangulations of four-dimensional reflexive polytopes and their Calabi--Yau hypersurfaces. In particular, we identified how they compared and contrasted with the FRST case.

A large component of our analysis revolved around the possible singularities of CY hypersurfaces from vex triangulations. We considered two complementary perspectives: ``extending'' smoothness from FRST toric varieties via flops, and explicitly computing the singular locus of a toric CY hypersurface. The former approach culminated in a proof of smoothness (\cref{main}) with the important corollary that all fine regular triangulations of a four-dimensional reflexive polytope yield birational CY hypersurfaces (\cref{moving}). Our theoretical analysis was corroborated by a brute-force enumeration of fans with few rays: i.e., toric varieties whose anticanonical hypersurfaces satisfy $h^{1,1} \leq 7$, of which we found over $24$ million. In particular, we found that as $h^{1,1}$ increases, vex triangulations constitute an increasing fraction of all fine regular triangulations. 

We conclude that the incorporation of vex triangulations into analysis of the Kreuzer--Skarke database will provide access to many new discrete topological classes of Calabi--Yau threefolds and provide a toric description for novel continuous regions in K\"ahler moduli space. In particular, these topologies represent novel environments for performing conjecture testing, identifying rare topological or geometric features, or constructing new and interesting string compactifications. Likewise, these newly toric regions in K\"ahler moduli space represent new regimes for, e.g., the numerical optimization of phenomenologically desirable string theoretic observables. In this way, vex triangulations benefit the mathematics and physics communities alike. 

To close, we note some relevant future directions for research in this area (beyond those of the previous paragraph).
\begin{enumerate}
    \item \textbf{Generalizing two-face equivalence.} We recalled in \cref{sec:cy_hyper} that two FRSTs of a reflexive four-dimensional polytope $\Delta^\circ$ yield the same CY3 if their restrictions to two-faces (i.e., the induced triangulations of the 2-skeleton of $\Delta^\circ$) agree. Is there a generalization of this simple combinatorial criterion which holds for all fine regular triangulations? This would enable generalizations of the methods of \cite{macfadden2023efficient}, which would likely provide a significant speed-up to exhaustive enumeration of toric CY hypersurfaces for fixed polytopes at larger $h^{1,1}$.
    \item \textbf{Bounding vex triangulations (and their CY hypersurfaces).} Completing the previous task would enable generalizations of \cite{Demirtas:2020dbm} and \cite{bigicy}, which bound inequivalent CY3s from KS more intelligently by incorporating two-face equivalence. This would presumably dramatically improve the upper bound presented in \cref{subsubsec:upper}.
    \item \textbf{Counting CY3 diffeomorphism classes.} More generally, it would be worthwhile to continue the task began in \cite{Gendler:2023ujl, Chandra:2023afu} of counting inequivalent CY3s using invariants including --- but certainly not limited to --- some generalization of two-face equivalence. In particular, it would be insightful to understand the number of non-inherited CY geometries relative to hypersurfaces arising from FRSTs and vex triangulations.
    \item \textbf{Further extension of toric descriptions.} In this work, we provide a better toric inner approximation for the K\"ahler moduli space of toric hypersurface CY3s, but this is still only an inner approximation. Can the non-inherited phases (mentioned, e.g., in \cref{ex:38}) be rendered toric in general, possibly as higher-codimension subvarieties in distinct toric varieties?\footnote{For example, if one has a toric hypersurface description of the smooth deformation associated to a flop between toric and non-inherited CY threefolds, then there is an algorithm for constructing the non-inherited phase as a codimension-two complete intersection in a toric variety, generalizing \S{III} in \cite{Candelas:1987kf}. How far can this kind of approach be extended?}
    \item \textbf{Studying Reid's fantasy in Kreuzer--Skarke.} At the end of \cref{sec:pc_vs_vc}, we briefly discussed to how insertion/deletion flips may induce maps (i.e., extremal transitions, such as conifold transitions) on toric hypersurface CY3s whose composition connects the entirety of Kreuzer--Skarke (because any two polytopes can be related by the insertion and deletion of finitely many points). Explicitly realizing Reid's fantasy \cite{reid1987moduli} in Kreuzer--Skarke in this way would be valuable (and would complement the known result for CICYs \cite{Green:1988bp, wang2016connectedness}).
    \item \textbf{Extending toric hypersurface CY3 methods.} While the toric varieties induced by FRSTs and vex triangulations are largely similar, certain methods in Calabi--Yau geometry developed specifically for FRST toric varieties may need to be adapted/generalized before they can be directly applied to vex triangulations (e.g., the computational implementation of HKTY mirror symmetry in \cite{Demirtas2024}). 
    \item \textbf{Considering non-reflexive polytopes.} We briefly mentioned canonical polytopes \cite{Kasprzyk:sca} in \cref{sec:toric}, or polytopes with a unique interior point (taken to be the origin). FRSTs of such polytopes will yield non-Gorenstein varieties (i.e., non-Cartier anticanonical classes) whose anticanonical hypersurfaces will be singular, but vex triangulations of these polytopes are perfectly capable of furnishing smooth CY hypersurfaces.
    \item \textbf{Continuing development of computational tools.} The construction of the Kreuzer--Skarke database led to the development of computational tools for studying FRSTs of polytopes and their associated toric/CY geometry, such as PALP \cite{Kreuzer:2002uu} and CYTools \cite{Demirtas:2022hqf}: continued computational advancements (enabling, e.g., systematic analysis of vex triangulations and/or non-reflexive polytopes) would be valuable.
\end{enumerate}

\label{sec:conc}

\section*{Acknowledgments}

We are grateful to Mike Stillman for originally inspiring this project and many useful conversations. We thank Washington Taylor and collaborators for sharing their unpublished notes on vex triangulations \cite{vex_notes}, Per Berglund and Tristan Hubsch for insightful correspondences about their theory of VEX polytopes, and Andreas Schachner for the provision of computational resources for the enumeration undertaken in \cref{sec:enumerate}. We thank Per Berglund, Tristan Hubsch, Doddy Marsh, Jakob Moritz, Andres Rios-Tascon, Michael Stepniczka, Mike Stillman, and Washington Taylor for comments on a draft. We thank the anonymous referee for pointing out the relevance of \cite{Gross1997DeformationObstructed, Ruan1996TopologicalSigma}. ES is grateful to Federico Carta, Federico Compagnin, Sebastian Vander Ploeg Fallon, Liam McAllister, Jakob Moritz, and Rachel Webb for many helpful conversations. The authors are supported in part by NSF grant PHY-2309456. This material is based upon work supported by the National Science Foundation Graduate Research Fellowship under Grant No. 2139899.

\appendix

\section{Translation}
\label{sec:translate}

We hope that the content of \cref{sec:triang}, \cref{sec:toric}, and \cref{sec:sec_fan}, as well as the dictionary presented in \cref{tab:dict}, largely clarify how to pass back and forth between the vocabulary of triangulation theory and toric geometry. For example, like in \cref{sec:sec_fan}, fix a vector configuration $\mathbf{A}$ in $\mathbb{R}^n$ with the vectors of $\mathbf{A}$ being integral and primitive. Let $\Sigma(1)$ be the one-cones (or rays) generated by the vectors of $\mathbf{A}$: then for all triangulations of $\mathbf{A}$ the one-cones will be a subset of $\Sigma(1)$. In particular, there is a bijection between the triangulations of $\mathbf{A}$ and the fans whose one-cones are a subset of $\Sigma(1)$. 

It is important to bear in mind the following dichotomy. To define a triangulation, the fundamental objects are the configuration and cells (sets of labels); these then define geometric regions by their support. In toric geometry it is the opposite: the fundamental objects are fans (sets of geometric regions, or cones, in the sense of \cref{def:fan}) and the minimal generators are derived from the fan (as lattice generators of one-cones). The bijection of the previous paragraph, though, means that translation between triangulation theory and toric geometry can often be fluid and straightforward. This is important: for example, it allows us to apply triangulation theory to solve problems in toric geometry (such as our main classification problem posed in \cref{sec:intro}). However, there are some subtleties in this translation process in general, and in this appendix, we will take some time to discuss some subtleties in passing back and forth between triangulation theory and toric geometry. \\

\noindent \textbf{Subdivisions vs Fans.} First, we will explain the failure of bijectivity between regular subdivisions and regular fans. To achieve this, let us be precise about how we can map back and forth between these objects. Take the vector configuration $\mathbf{A}$ we introduced at the beginning of the section: with this fixed, we have two maps, namely $\mathtt{Fan}$ (mapping vector configuration subdivisions of  $\mathbf{A}$ to fans with one-cones contained in $\Sigma(1)$) and $\mathtt{Sub}$ (mapping the other way). As we've been doing, let $u_\rho$ denote the minimal generator in $N$ associated to the one-cone $\rho \in \Sigma(1)$. These two maps are defined as follows.
\begin{align}
    \mathtt{Fan} = \mathscr{T} &\mapsto \{ \supp{F} \; | \; F \in \mathscr{T} \}, \\
    \mathtt{Sub} = \Sigma &\mapsto \{ \{u_\rho \; | \; \rho \subset \sigma \} \; | \; \sigma \in \Sigma \; \}.
\end{align}
We have that $\mathtt{Fan} \circ \mathtt{Sub}$ is the identity but $\mathtt{Sub} \circ \mathtt{Fan}$ is not, though the restriction of $\mathtt{Sub} \circ \mathtt{Fan}$ to vector configuration \textit{triangulations} is the identity. Thus, for regular subdivisions, $\mathtt{Sub} \circ \mathtt{Fan}$ differs from the identity only on a measure-zero subset of the secondary fan, namely some of the codimension $\geq 1$ faces of the secondary fan. 

When does $\mathtt{Sub} \circ \mathtt{Fan} \neq 1$? The underlying mechanism is that one can construct two subdivisions featuring different cells that have the same support, and the subdivisions are formally different (e.g., if regular, they have different secondary cones). In such a case, only one can be in the image of $\mathtt{Sub}$. Consider the following toy example.
\begin{example}
    Consider an acyclic vector configuration $\mathbf{A} = \{v_1, v_2, v_3\}$ with labels $1, 2, 3$ such that $v_1 \in \cone{v_2, v_3}$. The subdivision $\mathscr{S}$ of $\mathbf{A}$ consisting of the cell $\{2,3\}$ is formally different from the subdivision $\mathscr{S}'$ consisting of the cell $\{1,2,3\}$, even though $\cone{v_2,v_3} = \cone{v_1, v_2, v_3}$. For example, the former subdivision is a triangulation while the latter is not. Note that $\mathscr{S}'$ cannot contain the cell $\{1\}$ (its intersection with $\{1,2,3\}$ is not a face of $\{1,2,3\}$, violating \cref{def:subdivision}), so $v_1$ is a vector that doesn't correspond to a one-cone (ray) in $\mathtt{Fan}(\mathscr{S}')$. 

    We can see that $\mathtt{Fan}(\mathscr{S}) = \mathtt{Fan}(\mathscr{S}')$, with both being the fan given by $\cone{v_1, v_2, v_3} = \cone{v_2, v_3}$ and its faces. Applying $\mathtt{Sub}$ to this fan gives back $\mathscr{S}$, so that $(\mathtt{Sub} \circ \mathtt{Fan})(\mathscr{S}') =  \mathscr{S}$.
    
    This has some toric geometric consequences: for example, though $\mathscr{S}$ is a regular subdivision with three vectors that is not a triangulation, the associated toric variety is simplicial and has two minimal generators (and thus two prime toric divisors, etc.). 
    \label{ex:toy_deletion}
\end{example}
We note that given a regular subdivision $\mathscr{T}(\mathbf{A}, \omega)$, we can always construct a possibly distinct regular subdivision $\mathscr{T}(\mathbf{A},\omega')$ which has the same cells as $(\mathtt{Sub} \circ \mathtt{Fan})(\mathscr{T}(\mathbf{A}, \omega))$. In particular, $\omega'$ is given as follows.
\begin{equation}
    \omega'_i = \begin{cases}
        \omega_i & \mathbf{A}_i \text{ is extremal in all of its cells,} \\ \infty & \text{otherwise.}
    \end{cases}
\end{equation}
That is, one just lifts the non-extremal vectors in $\mathscr{T}(\mathbf{A}, \omega)$ to infinity, removing them from the cells of $\mathscr{T}(\mathbf{A}, \omega)$. We see that $\omega' = \omega$ if all vectors are extremal: this follows, for example, if $\mathscr{T}(\mathbf{A}, \omega)$ is a triangulation.

A useful guiding principle is that subdivisions associated to (codimension $\geq 1$) faces of the secondary fan can never be triangulations in triangulation theory: they must be non-simplicial. Sometimes these subdivisions have non-simplicial cones (see, e.g., \cref{ex:simple_secondary_cone}) but sometimes the cones are simplicial and only their cells are non-simplicial (see, e.g., the discussion preceding \cref{def:fan}). This is how insertion/deletion flips work: indeed, the subdivisions $\mathscr{S}$ and $\mathscr{S}'$ in \cref{ex:toy_deletion} are exactly the ``$-$'' and ``$0$'' constituents of a deletion flip with circuit signature $(2,1)$. \\

\noindent \textbf{Fans with lineality spaces.} We've now seen a subtlety that arises, for example, on secondary cone facets corresponding to insertion/deletion flips with circuit signature $(n,1)$. As it turns out, so too do we have intricacies for facets associated with $(n,0)$ circuits: or, more generally, faces of the support of the secondary fan. Consider a subdivision of $\mathbf{A}$ associated to such a face. One can show that all cones contain a common non-trivial linear subspace (i.e., lineality space).\footnote{In particular, one can show that if a single cone in a fan has a lineality space, then every cone does, such that the lineality space --- rather than the origin $\{0\}$ --- is the smallest face/cone in the fan.} This is a perfectly fine subdivision of $\mathbf{A}$ as-is, but the toric variety associated to this subdivision is actually given by the projection of this subdivision onto the quotient of the ambient space $N_\mathbb{R}$ by the lineality space of the subdivision. In particular, fans with non-trivial lineality spaces are called \textit{generalized fans} in \cite{cls}.

Let us illustrate the previous paragraph through an example.
\begin{example}
    Let $\mathscr{T}$ be a triangulation of $\mathbf{A}$ which admits an embedded $(n,0)$ circuit with $J_+ = {1, \dots, n}$ and $J_- = \varnothing$ (i.e., a fibering contraction). Now there exists no triangulation theoretic flip: we just let $\mathscr{T}_0$ denote the fan which arises on the associated facet of the secondary cone of $\mathscr{T}$, which falls on the boundary of the secondary fan. One finds that all cones in the subdivision $\mathscr{T}_0$ feature lineality spaces: that is, they contain a linear subspace (which is then the minimal face of the subdivision). In particular, they contain the subspace generated by $u_1, \dots, u_n$. In toric geometry, the toric variety assigned to this generalized fan is the toric variety of the fan achieved by projecting $\mathscr{T}_0$ onto the quotient $N_\mathbb{R} / \mathbb{R}\{u_1, \dots, u_n\}$.

    This comes with a nice geometric interpretation. In toric geometry, fibrations of toric varieties with fan $\Sigma \subset N_\mathbb{R}$ correspond to short exact sequences of the form
    \begin{equation}
        0 \to N' \to N \to N'' \to 0,
    \end{equation}
    and the fan of the fiber (base) is the preimage (image) of $\Sigma$ in $N'$ ($N''$). In this case, the $(n,0)$ circuit induces such a fibration for $\Sigma$: namely, we have
    \begin{equation}
        0 \to \mathbb{Z}\{u_1, \dots, u_n\} \to N \to N'' \to 0,
    \end{equation}
    and $\Sigma_0$ is exactly the image of $\Sigma$ in $N''$, so it is the base of this fibration. In particular, $\dim V_{\Sigma_0} = \dim N'' = \dim N - (n - 1)$. We can therefore interpret this facet of the secondary cone of $\Sigma$ as the limit where the fiber in $V_\Sigma$ shrinks, leaving only the base $V_{\Sigma_0}$. 
    
    It is also easy to understand this from the perspective of normal fans of Newton polytopes. Divisor classes $D$ falling on the boundary of the effective cone are not big, which in the toric context means their Newton polytopes $\Delta_D$ have strictly smaller dimension than $\dim M$ (see \S9.3 in \cite{cls}). The inclusion $\iota : M'' \to M$ of the smallest sublattice $M''$ containing $\Delta_D$ is dual to the projection $N \to N''$, so in particular the normal fan of $\Delta_D$ in $M''$ is the toric fan $\Sigma_0$ of the base of the fibration of $V_\Sigma$. 
\end{example}
Before concluding, it's worth looking at examples of $(n,1)$ and $(n,0)$ circuits in a more concrete example, with actual numbers.
\begin{example}
    We return to the Hirzebruch surface $\mathscr{H}_r$ one last time, because we recall from \cref{ex:hirz_3} that the walls of its secondary cone correspond exactly of an $(n,1)$ circuit and an $(n,0)$ circuit, providing a convenient example for the discussion of this section. Let us first consider the facet associated to the $(n,1)$ circuit. This facet has an associated subdivision with cells
    \begin{equation}
        \{1,2,4\}, \{1,3\}, \{2,3\}.
    \end{equation}
    Here, the first cell is non-simplicial, and $u_4$ falls in its relative interior. Thus, this subdivision is not a triangulation. However, the associated fan is simplicial: it is merely the fan of $\mathbb{P}_{11r}$. Just as we saw in \cref{ex:toy_deletion}, the ``$-$'' and ``$0$'' constituents of this deletion flip correspond to different subdivisions (the former has the cell $\{1,2\}$ while the latter has $\{1,2,4\}$) that induce the same fan.
    
    Now we turn our attention to the other facet of the secondary cone, the positive circuit. This facet has an associated subdivision with cells
    \begin{equation}
        \{1,3,4\}, \{2,3,4\}.
    \end{equation}
    These are the upper-half and lower-half plane in $\mathbb{R}^2$, respectively. In particular, both cones contain the lineality space spanned by $(1,0)$, which is the linear subspace spanned by the positive circuit $\{u_3, u_4\}$. To this subdivision, toric geometry assigns the toric variety associated to the fan achieved by taking the quotient by the lineality space. This quotient maps $u_3, u_4$ to $0$ and $u_1,u_2$ to $1$ and $-1$ in $\mathbb{Z}^2/\mathbb{Z} \cong \mathbb{Z}$, yielding the fan for $\mathbb{P}^1$. Geometrically, then, we recover the fact that $\mathscr{H}_r$ is a $\mathbb{P}^1$-fibration over $\mathbb{P}^1$: in particular, the ``quotient fan'' that toric geometry associated to this facet is the base of this fibration.
    \label{ex:hirz_4}
\end{example}

\section{Gale Duality}

\label{sec:gale}

Many of the objects and results reviewed in this paper can be understood as the consequence of Gale duality: while it wasn't necessary to introduce this duality in the text, we briefly discuss it here for completeness, and because it's a nice, elegant phenomenon.

At its most basic level, Gale duality is essentially the observation that if one has two matrices $A$ and $B$, then
\begin{equation}
    \label{eq:basic_gale}
    BA^\top = 0 \quad \Longleftrightarrow \quad AB^\top = 0
\end{equation}
In particular, fix a choice of vectors $a_1, \dots, a_n$ and a choice of basis for their linear relations $b_1, \dots, b_k$ (i.e., $\sum_i (b_j)_i a_i = 0$ for each $j$). If one constructs the matrix $A$ with $a_i$ as columns and the matrix $B$ with the $b_i$ as rows, then \cref{eq:basic_gale} is the statement that the rows of $A$ are also a basis for the linear relations of the columns of $B$. This is the reason, for example, that in \cref{ex:hirz_3} we were able to use the rows of the matrix $\mathbf{A}_r$ (whose columns were the minimal generators) as a basis of the linear relations of the class group grading / GLSM charge matrix $Q$ (to get the kernel of $\beta$). Let $b'_i$ denote the $n$ \textit{columns} of $B$: we say that the $a_i$ and the $b'_i$ are Gale dual to each other. Note that Gale duals are not unique, as $B$ is determined by $A$ only up to a change of basis. In some sense, then, the orthogonal linear subspaces spanned by the columns of $A$ and rows of $B$ are the objects that are intrinsically dual to each other, though nice results follow when bases are non-uniquely fixed for these subspaces.

We can organize these matrices into a short exact sequence as follows.
\begin{equation}
    0 \to \mathbb{R}^m \stackrel{A^\top}{\to} \mathbb{R}^n \stackrel{B}{\to} \mathbb{R}^k \to 0
\end{equation}
By comparison with \cref{eq:SES}, we can conclude that the vectors of a vector configuration $\mathbf{A}$ (i.e., minimal generators of a fan) and the columns of the GLSM charge matrix $Q$ (the classes of the associated prime toric divisors) are Gale dual to each other. That is, we can take $A = \mathbf{A}$ and $B = Q$. Moreover, from \cref{eq:sec_cone} we see that one constructs the chamber fan of a configuration using a Gale dual of that configuration. One might even hope that the minimal generators of the secondary fan are exactly a Gale dual of the original configuration, but this is too good to be true: the secondary fan will in general have strictly more generators.

An important fact is if a configuration is acyclic (i.e., its fans are non-complete, and the associated toric varieties are non-compact) then the Gale dual configuration is totally cyclic, and vice versa. This has the important consequence that acyclic (totally cyclic) configurations have totally cyclic (acyclic) secondary fans. This explains why any height vector results in a valid triangulation of an acyclic configuration (the support of the secondary fan is the entire vector space) but the same doesn't hold for a totally cyclic configuration (because then not every height vector belongs to the support of the secondary fan). For example, this explains why in \cref{sec:frst_vs_vex}, when we constructed the total space of the canonical bundle of a compact toric variety $V$ (turning a totally cyclic vector configuration into an acyclic one), it was natural that we added the previously non-effective canonical class $K_V$ to the charge matrix, as this turned the vector configuration associated to the secondary fan from an acyclic vector configuration to a totally cyclic one.

\bibliographystyle{JHEP}
\bibliography{ref}

\end{document}